   \def\@citecolor{blue}%
   \def\@urlcolor{blue}%
   \def\@linkcolor{blue}%
\def\orcidID#1{\smash{\href{http://orcid.org/#1}{\protect\raisebox{-1.25pt}{\protect\includegraphics{orcid_color.eps}}}}}
\theoremstyle{plain}
\newtheorem{theorem}{Theorem}
\newtheorem{lemma}[theorem]{Lemma}
\newtheorem{proposition}[theorem]{Proposition}
\newtheorem{corollary}[theorem]{Corollary}
\def\@endtheorem{\qed\endtrivlist\@endpefalse } 
\theoremstyle{definition}
\newtheorem{remark}[theorem]{Remark}
\newtheorem{example}[theorem]{Example}
\newtheorem{definition}[theorem]{Definition}
\newcommand\mytechnical[1]{}
\newcommand\cutout[1]{}
\begin{document}
%
\title{Fully Abstract Normal Form Bisimulation for Call-by-Value PCF\thanks{This publication has emanated from research supported in part by a grant from Science Foundation Ireland under Grant number 13/RC/2094\_2.%
\\[1ex] This version of the paper has undergone minor corrections since its original publication in LICS 2023.} %
}

\author{\IEEEauthorblockN{Vasileios Koutavas}
 \IEEEauthorblockA{Trinity College Dublin\\
 Dublin, Ireland\\
 Email: Vasileios.Koutavas@tcd.ie}
 \and
 \IEEEauthorblockN{Yu-Yang Lin}
 \IEEEauthorblockA{Trinity College Dublin\\
 Dublin, Ireland\\
 Email: linhouy@tcd.ie}
 \and
 \IEEEauthorblockN{Nikos Tzevelekos}
 \IEEEauthorblockA{Queen Mary University of London\\
 London, UK\\
 Email: Nikos.Tzevelekos@qmul.ac.uk}}

\maketitle              

\begin{abstract}
  We present the first fully abstract normal form bisimulation for call-by-value PCF (PCF$_{\textsf{v}}$). 
  Our model is based on a labelled transition system (LTS) that combines elements from applicative bisimulation, environmental bisimulation and game semantics. In order to obtain completeness while avoiding the use of semantic quotiening, the LTS constructs 
  traces corresponding to interactions with possible functional contexts.  
The model gives rise to a sound and complete 
  technique for checking of PCF$_{\textsf{v}}$ program equivalence,
  which we implement in a bounded bisimulation checking tool. We test our tool on known equivalences from the literature and new examples.
\end{abstract}

\section{Introduction}
\label{sec:intro}
The full abstraction problem for PCF, i.e.\ constructing a denotational model that captures contextual equivalence in the paradigmatic functional language PCF, was put forward by Plotkin in the mid 1970's~\cite{plotkin77}.
The first fully abstract denotational models for PCF were presented in the early 1990's and gave rise to the theory of \emph{game semantics}~\cite{AJM,HO,Nickau}, while fully abstract models for its call-by-value variant were given in~\cite{PCFv,PCFvb}.
Fully abstract operational models of PCF have been given in terms of \emph{applicative bisimulations} \cite{applicative90,GordonApplicative95,Howe96} and \emph{logical relations} \cite{OHearn95}, and for other pure languages in terms of \emph{environmental bisimulations} \cite{SumiiPierce07b,environmental} and \emph{logical relations} \cite{Pitts98LR,step-indexed}.
On the other hand, Loader demonstrated that contextual equivalence for finitary PCF is undecidable~\cite{Loader01}.

A limitation of the game semantics models for PCF is their intentional nature.
While the denotations of inequivalent program terms are always distinct, there are equivalent terms whose denotations are also distinct and become equivalent only after a semantic quotiening operation.
Quotiening requires universal quantification over tests, which amounts to quantification over all (innocent) contexts.
This hinders the use of game models for pure functional languages to prove equivalence of terms, as any reasoning technique needs to involve all contexts of term denotations in the semantic model (i.e.\ all possible \emph{Opponent strategies}).
In a more recent work, Churchill et al.~\cite{ChurchillEtal2020} were able to give a direct characterisation of program equivalence in terms of so-called \emph{sets of O-views}, built out of term denotations. 
The latter work is to our knowledge the only direct (i.e.\ quotient-free) semantic characterisation of PCF contextual equivalence, though it is arguably more of theoretical value and does not readily yield a proof method.

Operational models also involve quantification over all identical (applicative bisimulation) or related (logical relations, environmental bisimulations) closed arguments of type $A$, when describing the equivalence class of type $A \to B$.
Although successful proof techniques of equivalence have been developed based on these models, universal quantification over opponent-generated terms must be handled in proofs with rather manual inductive or coinductive arguments.

Normal-Form (NF) bisimulation, also known as open bisimulation, was originally defined for characterising L\'evy-Longo tree equivalence for the lazy lambda calculus \cite{Sangiorgi:lazylambda} and adapted to languages with call-by-name \cite{LassenHNF}, call-by-value \cite{LassenENFB}, nondeterminism \cite{LassenNondet}, aspects \cite{JagadeesanPitcherRiely:aspects}, recursive types \cite{LassenLevy:nfbistypes}, polymorphism \cite{LassenLevy:nfbisimpoly}, control with state \cite{StorvingLassenPOPL07}, state-only \cite{BiernackiLP19}, and control-only \cite{BiernackiLengletNFB2}. It has also been used to create equivalence verification techniques for a lambda calculus with state \cite{hobbit}.

The main advantage of NF bisimulation is that it does away with quantification over opponent-generated terms, replacing them with fresh open names.
This has also been shown~\cite{LassenLevy:nfbistypes,LassenLevy:nfbisimpoly,hobbit} to relate to operational game semantics models where opponent-generated terms are also represented by names~\cite{Laird07,TzeGhica,Jaber15}.
The main disadvantage of NF bisimulation is that\,---\,with the notable exception of languages with control and state \cite{StorvingLassenPOPL07}, and state-only \cite{BiernackiLP19,hobbit}\,---\,it is too discriminating thus failing to be fully abstract with respect to contextual equivalence. This is particularly true for pure languages such as PCF, and its call-by-value variant \lang which is the target of this paper.

However, the discriminating power of NF bisimulation depends on the labelled transition system (LTS) upon which it is defined. 
Existing work define NF bisimulation over LTSs that treat call and return moves between term and context in a fairly standard way: these are immediately observable by the bisimulation as they appear in transition annotations, and context moves correspond to imperative, not purely functional, contexts.
As we show in \cref{sec:examples}, this is overly discriminating for a language such as \lang.
Moreover, existing NF bisimulation techniques, either do not make extensive use of the the context's knowledge in the LTS configurations (e.g.\ \cite{LassenLevy:nfbistypes}), or consider an ever-increasing context knowledge (e.g.\ \cite{BiernackiLP19}) which is only fully abstract for imperative contexts.

In this paper we present the first fully abstract NF bisimulation for \lang, defined in \cref{sec:lang}. To achieve this we develop in \cref{sec:lts} a novel Labelled Transition System (LTS) which:
\begin{itemize}
  \item is based on an operational presentation of game models (cf.~\cite{Laird07}) and uses Proponent and Opponent configurations (and \emph{call/return moves}) for evaluation steps that depend on the modelled term and its environment, respectively;
  \item uses an explicit stack principle to guarantee well-bracketing
    and  stipulates that the \emph{opponent view} of the LTS trace be restricted to moves related to the last open proponent call (cf.\ \emph{well-bracketing} and \emph{visibility}~\cite{HO});
  \item restricts opponent moves so that they
    correspond to those of a pure functional context, by explicitly keeping track of previous opponent responses to proponent moves and their corresponding (opponent) view (cf.\ \emph{innocence}~\cite{HO});
  \item postpones observations of proponent call/return moves until computations are complete to avoid unnecessary distinctions between equivalent terms.
\end{itemize}
We then define a notion of NF bisimulation over this LTS which combines standard move/label synchronisation with coherence of corresponding opponent behaviours.
We show that the latter is fully abstract with respect to contextual equivalence (\cref{sec:fa}).
Due to its operational nature and the absence of quantification over opponent-generated terms, the model lends itself to a bounded model-checking technique for equivalence which we implement in a prototype tool (\cref{sec:tool}). We conclude in \cref{sec:concl}.


\section{Motivating Examples}
\label{sec:examples}
We start with a simple example of equivalence that showcases unobservable behaviour differences in \lang.
\begin{example}\label{ex:pure}
Consider the following equivalent terms of type $(\Unit \to \Int) \to (\Unit \to \Int) \to \Int$.

\medskip
\noindent
\begin{tabular}{@{}l@{\;}l@{}}
$M_1 =$
&\begin{minipage}[t]{0.87\columnwidth}\vspace{-1.3em} \begin{lstlisting}[boxpos=t]
fun f$\;$->$\;$fun g$\;$->$\;$if f () == g () then
$\;$$\;$$\;$$\;$                if f () == g () then 0 else 1
$\;$$\;$$\;$$\;$                else 2
\end{lstlisting}\end{minipage}
\\
$N_1 =$ 
&\begin{minipage}[t]{0.87\columnwidth}\vspace{-1.3em} \begin{lstlisting}[boxpos=t]
fun f$\;$->$\;$fun g$\;$->$\;$if g () == f () then 0 else 2
\end{lstlisting}\end{minipage}
\end{tabular}

\medskip
These two terms are contextually equivalent because the context 
cannot observe whether \lstinline{f} and \lstinline{g} have been called more than once with the same argument.
Two calls of a pure and deterministic function with the same argument both diverge or return the same value. 
Moreover, the context cannot observe the order of calls to the context-generated functions \lstinline{f} and \lstinline{g}.
As we will see in \cref{sec:lts}, our LTS restricts the behaviour of context-generated functions such as \lstinline{f} and \lstinline{g} so that they behave in a pure deterministic manner, and does not make distinctions based on their call order.
\end{example}

We now discuss key \emph{observable} behaviour differences in \lang through the lens of bisimulation theories.
As explained in \cite{fromapplicative}, the main feature in environmental bisimulation definitions \cite{SumiiPierce07a,SumiiPierce07b,KoutavasW06,KW06esop,environmental} is \emph{knowledge accumulation:} environmental bisimulation collects the term-generated functions in an environment representing the knowledge of the context. This knowledge is used in the following bisimulation tests to distinguish terms:
 \begin{enumerate}
   \item\label{item:1} to call a function from the environment multiple times in a row with the same argument;
   \item\label{item:2} to call a function from the environment multiple times in a row with different arguments;
   \item\label{item:3} to call environment functions after other environment functions have returned; and 
   \item\label{item:4} to use environment functions in the construction of context-generated functions.
\end{enumerate}
The above is easily understood to be necessary in stateful languages and was shown to be needed in pure languages with existential types \cite{fromapplicative}. However, as applicative bisimulation has shown, it is unnecessary to accumulate the context's knowledge in order to create a theory of \lang: applicative bisimulation interrogates related functions in isolation from other knowledge by simply applying them to identical arguments.

As discussed in the first example of this section, purity and determinism indeed make (\ref{item:1}) unnecessary in \lang. 
However, (\ref{item:2}--\ref{item:4}) are \emph{necessary} tests that a normal form bisimulation theory for \lang must perform.
This is because a normal form bisimulation definition must prescribe the necessary interaction between terms and context \emph{under any evaluation context} and not just at top-level computations.
Applicative bisimulation on the other hand is only defined in terms of top-level function applications, where the context's knowledge is limited. Universal quantification over the code of context-generated function arguments implicitly encodes all the interactions that related terms may have with these arguments.
We showcase the need for (\ref{item:2}--\ref{item:4}) in the following three example inequivalences. 

\begin{example}\label{ex:u's}
  Consider the inequivalent terms $M_2$, $N_2$ of type\\
 $(((\Bool \to \Bool) * (\Bool \to \Bool)) \to \Bool) \to \Bool \to \Bool$.

\medskip
\noindent
\begin{tabular}{@{}l@{\;}l@{}}
$M_2 =$
&\begin{minipage}[t]{0.826\columnwidth}\vspace{-1.3em} \begin{lstlisting}[boxpos=t]
fun f -> fun b ->
  let rec X d = f (X, fun _ ->  d)
  in X b
\end{lstlisting}\end{minipage}
\\
$N_2 =$ 
&\begin{minipage}[t]{0.826\columnwidth}\vspace{-1.3em} \begin{lstlisting}[boxpos=t]
fun f -> fun b ->
  f ((fun _ -> _bot_), fun _ ->  b)
\end{lstlisting}\end{minipage}
\end{tabular}

\medskip
\noindent
  Here \lstinline{_bot_} is a diverging term and \lstinline{_} represents an unused variable;
\lstinline{X} has type $\Bool \to \Bool$.

Term $M_2$ will receive a function \lstinline{f} and a boolean \lstinline{b}.
It will then create a recursive term which calls \lstinline{f} with a pair containing two $\Bool\to\Bool$ functions.
If \lstinline{f} calls \lstinline{X}, the first function in the pair, with a boolean \lstinline{d}, computation will recur;
if it calls the second function, it will receive the argument of the latest call to \lstinline{X}.
On the other hand, $N_2$ calls \lstinline{f} with a pair of functions where the first one diverges upon call, and the second one returns \lstinline{b}, provided at the beginning of the interaction.

These terms can be distinguished by the following context:

\medskip
\noindent
\begin{tabular}{@{}l@{\;}l@{}}
\begin{minipage}[t]{0.99\columnwidth}\vspace{-1.3em} \begin{lstlisting}[boxpos=t]
let f = fun (X, fd) -> if fd false then X false
                       else true
in [] f true
\end{lstlisting}\end{minipage}
\end{tabular}

\medskip
\noindent
  This context creates a function \lstinline{f} that receives two functions \lstinline{X} and \lstinline{fd}, and conditionally calls \lstinline{X} with \lstinline{false}, if the call to \lstinline{fd} returns \lstinline{true}.
When placed in the hole \lstinline{[]} of this context, $M_2$ will receive \lstinline{f} and value \lstinline{true}.
Recursive function \lstinline{X} will thus be first called with \lstinline{true}, in the last line of $M_2$, and then again with \lstinline{false} by \lstinline{f}, causing the termination of the computation.
On the other hand, with $N_2$ in the hole, the context will diverge.

This is effectively the only simple context that can distinguish $M_2$ and $N_2$, and thus a NF bisimulation theory of equivalence for \lang must accumulate \lstinline{X} in the opponent's knowledge at inner interaction levels to allow calling \lstinline{X} after \lstinline{fd} has returned. This shows the need for allowing (\ref{item:3}) in a NF bisimulation.
Indeed, if we omit this from the technique we develop in the following sections, $M_2$ and $N_2$ would be deemed equivalent.
\end{example}

The following variation of the above example shows that the context may need to call the same function twice, with different arguments, to make observations.

\begin{example}\label{ex:v's}
  Consider the inequivalent terms $M_3$, $N_3$ of type\\
 $((\Bool \to (\Bool \to \Bool)) \to \Bool) \to \Bool \to \Bool$.

\medskip
\noindent
\begin{tabular}{@{}l@{\;}l@{}}
$M_3 =$
&\begin{minipage}[t]{0.87\columnwidth}\vspace{-1.3em} \begin{lstlisting}[boxpos=t]
fun f$\;$->$\;$fun b$\;$->
  let rec X d = f (fun e$\;$->$\;$if e then X 
                            else (fun _$\;$->$\;$d))
  in X b
\end{lstlisting}\end{minipage}
\\
$N_3 =$ 
&\begin{minipage}[t]{0.87\columnwidth}\vspace{-1.3em} \begin{lstlisting}[boxpos=t]
fun f$\;$->$\;$ fun b$\;$->
  f (fun e$\;$->$\;$if e then (fun d$\;$->$\;$_bot_) 
              else (fun _$\;$->$\;$b))
\end{lstlisting}\end{minipage}
\end{tabular}
%
%
%
%
%

\medskip
\noindent
where \lstinline{X} has type $ \Bool \to \Bool$.
The distinguishing context is

\medskip
\noindent
\begin{tabular}{@{}l@{\;}l@{}}
\begin{minipage}[t]{0.99\columnwidth}\vspace{-1.3em} \begin{lstlisting}[boxpos=t]
let f = fun fXd$\;$-> 
          let X = fXd true in
          let fd = fXd false in 
          if fd false then X false else true
in [] f true
\end{lstlisting}\end{minipage}
\end{tabular}

\medskip
\noindent
  Here the interaction between the terms and the context are as in the previous example, with the difference that the context must apply \lstinline{fXd} to \lstinline{true} and then \lstinline{false} to receive the two functions \lstinline{X} and \lstinline{fd}. The context terminates with $M_3$ but diverges with $N_3$ in its hole.

This is effectively the only simple context that can distinguish $M_3$ and $N_3$, and thus a NF bisimulation theory of equivalence for \lang that describes all the term-context interactions must accumulate \lstinline{fXd} in the context's knowledge in order to apply it twice in a row. This showcases the need for allowing (\ref{item:2}) in a NF bisimulation.
\end{example}

Our final example shows that functions from the context's knowledge must be used within a context-generated function in order to distinguish two terms.

\begin{example}\label{ex:fix-curried-ineq}
  Consider the inequivalent terms $M_4$, $N_4$ of type\\
 $T=((\Int \to \Int) \to \Int \to \Int) \to \Int \to \Int$.

\medskip
\noindent
\begin{tabular}{@{}l@{\;}l@{}}
$M_4 =$
&\begin{minipage}[t]{0.826\columnwidth}\vspace{-1.3em} \begin{lstlisting}[boxpos=t]
let rec X count = fun f -> fun i ->
  f  (fun j -> if (count > 0)
               then X (count-1) f j
               else _bot_)  i
in X $k$
\end{lstlisting}\end{minipage}
\\
$N_4 =$ 
&\begin{minipage}[t]{0.826\columnwidth}\vspace{-1.3em} \begin{lstlisting}[boxpos=t]
fun f -> fun i -> let rec Y j = f Y j
                  in Y i
\end{lstlisting}\end{minipage}
\end{tabular}
%
%
%
%
%

\medskip
\noindent
  where \lstinline{X} and \lstinline{Y} have type $\Int\to T$ and $ \Int \to \Int$, respectively.

This is a family of examples in which the distinguishing interaction increases with $k$; $N_4$ enables \lstinline{f} to call itself an arbitrary number of times,
whereas $M_4$ enables up to $k$ recursive calls of \lstinline{f} before it diverges.
The distinguishing context below attempts to perform $k+1$ recursive calls and then to return $0$:

\medskip
\noindent
\begin{tabular}{@{}l@{\;}l@{}}
\begin{minipage}[t]{0.99\columnwidth}\vspace{-1.3em} \begin{lstlisting}[boxpos=t]
[] (fun Z -> fun i -> if i > 0 then Z (i-1) else 0)
   $(k+1)$
\end{lstlisting}\end{minipage}
\end{tabular}

\medskip
\noindent
This context diverges with $M_4$ but converges with $N_4$ in its hole.
To achieve this, the context uses the function received as argument \lstinline{Z} inside the context-generated function \lstinline{fun i -> if i > 0 then Z (i-1) else 0} which is given back to the term.
  As this is effectively the only context that can distinguish $M_4$ and $N_4$, we need to allow our NF bisimulation for \lang to construct (symbolic) function values that can internally refer to functions in the context's knowledge at the time of construction; showing the need for allowing (\ref{item:4}) in a NF bisimulation.
If we omit this from our technique, it would deem $M_3$ and $N_3$ equivalent.
\end{example}


  \section{Language and Semantics}
  \label{sec:lang}
  \begin{figure*}[t]
  \[\begin{array}{r@{}r@{\;\,}c@{\;\,}l}
    {\Typ:} & T           & \mis & \Bool \mor \Int \mor \Unit \mor T \arrow T \mor T_1 * \ldots * T_n \\
      {\Exp: } &\;\; e,M,N           & \mis & v \mor  x\mor(\vec e)\mor \arithop{\vec e} \mor \app e e \mor \cond{e}{e}{e} 
 \mor \elet{(\vec x)}{e}{e} \\                                             
    {\Val:} & u,v       & \mis & c \mor\lam[f_T] x e   \mor (\vec v)                                                                          \\ 
    {\EC:} & E        & \mis & \hole_T \morcondensed (\vec v,E,\vec e) \morcondensed \arithop{\vec v,E,\vec e} \morcondensed \app E e \morcondensed \app v E \morcondensed \cond E e e \morcondensed \elet{(\vec x)}{E}{e} \\
    {\Cxt:} & D           & \mis & \hole_{i,T} \mor e \mor (\vec D) \mor \arithop{\vec D} \mor \app D D \mor \cond D D D \mor \lam[f_T] x D
                   \mor \elet{(\vec x)}{D}{D}       
  \end{array}\]
  \[\begin{array}{r@{\;\;}l@{\;\;}lr@{\;\;}l@{\;\;}ll}
      \app{(\lam[f] x e)}  v & \redbase & e\sub{x}{v}\sub{f}{\lam[f] x e}  &                                                                                 \arithop{\vec c}    & \redbase & w                      & \text{if } \mathop{op}^{\textsf{arith}}(\vec c) = w \\
    \elet{(\vec x)}{(\vec v)}{e} & \redbase & e\sub{\vec x}{\vec v}  &
    \cond{c}{e_1}{e_2}  & \redbase & e_i                    & \text{if } (c,i) \in \{(\true,1), (\false,2)\} \\ &&&
    E\hole[e]           & \red     & E\hole[e']            & \text{if } e \redbase e'
  \end{array}\]
  \hrule
  \caption{Syntax and reductions of \lang.
  Variables $x,y,z$,\,etc.\ sourced from countably infinite set \Var. $c$ ranges over constants $(),\true,\false$ and $\mathsf{n}$ (for any $n\in\mathbb{Z}$).}\label{fig:lang}
\end{figure*}

We work with the language \lang, a simply-typed call-by-value lambda calculus with boolean and integer operations~\cite{PCFv}.
The syntax and reduction semantics are shown in \cref{fig:lang}.
Expressions (\Exp) include the standard lambda expressions with recursive functions ($\lam x e$), together with  standard base type constants ($c$) and operations ($\arithop{\vec e}$), as well as conditionals and tuple-deconstructing let expressions ($\elet{(\vec x)}{e}{e}$).
We use standard macros, for example $\bot_T\defeq\lam[f_{\Unit\to T}]{x}{fx}$ and $\lambda x_T.e\defeq\lam[f_{T\to T'}]{x}{e}$ (with $f$ fresh for $e$). 

The language \lang is simply-typed with typing judgements of the form $\typing{\Delta}{e}{T}$, where $\Delta$ is a type environment (omitted when empty) and $T$ a value type (\Typ).
The rules of the typing system are standard and omitted here~\cite{PCFv}.
Values consist of boolean, integer, and unit constants, functions and arbitrary length tuples of values.

The reduction semantics is by small-step transitions between closed expressions, $e \red e'$,
defined using single-hole evaluation contexts ($\EC$) over a base relation $\redbase$.
Holes $\hole_T$ are annotated with the type $T$ of closed values they accept, which we omit when possible to lighten notation.
Beta substitution of $x$ with $v$ in $e$ is written as $e\sub{x}{v}$.
We write $e\trm$ to denote $e \red^* v$ for some $v$.
We write $\vec \chi$ to mean a finite sequence of syntax objects $\chi_1,\dots$, and assume standard syntactic sugar from the lambda calculus.
In our examples we assume an ML-like syntax and implementation of the type system, which is also the concrete syntax of our prototype tool (\cref{sec:tool}). 

Contexts $D$ contain multiple, non-uniquely indexed holes $\hole_{i,T}$, where $T$ is the type of value that can replace the hole (and each index $i$ can have one related type).
A context is called \emph{canonical} if its holes are indexed $1,\dots,n$, for some $n$.
Given a canonical context $D$ and a sequence of typed expressions $\Sigma\vdash\vec e:\vec T$,
notation $D\hole[\vec e]$ denotes the context $D$ with each hole $\hole_{i,T_i}$ replaced with $e_i$.
We omit hole types where possible and indices when all holes in $D$ are annotated with the same $i$.
Standard contextual equivalence~\cite{Morris68} follows.

\begin{definition}[Contextual Equivalence]\label{def:cxt-equiv}
Expressions $\vdash e_1:T$ and $\vdash e_2:T$ are \emph{contextually equivalent}, written as $e_1 \cxteq e_2 : T$, when for all contexts $D$ such that $\vdash D\hole[e_1]: \Unit$ and $\vdash D\hole[e_2]: \Unit$ we have
$
    D\hole[e_1]\trm ~\text{iff}~
    D\hole[e_2]\trm
$.
\end{definition}

Due to the language being purely functional, we can refine the contexts needed for contextual equivalence to \emph{applicative} ones.

\begin{definition}
  Applicative contexts are given by the syntax:
  \begin{align*}
    E_a &\mis \ \hole_T \mor  \app E_a v \mor \cond{E_a=c}{{()}}{\bot_\Unit}\mor \pi_i(E_a)
  \end{align*}
where $\pi_i(\chi)$ returns the $i$-th component of tuple $\chi$.
\end{definition}

Using the fact that applicative bisimulation is fully abstract~\cite{fromapplicative,Howe96}, we can show the following.

\begin{proposition}[Applicative contexts suffice]\label{prop:CL}
$e_1 \cxteq e_2:T$ iff for all applicative contexts $E_a$ such that $\vdash E_a\hole[e_1]_T,E_a\hole[e_2]_T: \Unit$ we have
$
    E_a\hole[e_1]\trm ~\text{iff}~
    E_a\hole[e_2]\trm
$.
\end{proposition}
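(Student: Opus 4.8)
The statement asserts the coincidence of two relations on closed terms of the same type: contextual equivalence $\cxteq$ and the relation $R$ defined by its right-hand side, where $e_1\mathrel{R}e_2$ holds iff $E_a\hole[e_1]\trm \Leftrightarrow E_a\hole[e_2]\trm$ for every applicative context $E_a$ with both fillings of type $\Unit$. The $(\Rightarrow)$ direction is immediate, since each applicative context of unit type is in particular a context in the sense of \cref{def:cxt-equiv}, so $\cxteq \subseteq R$. For $(\Leftarrow)$ the plan is to route through applicative bisimilarity $\approx$, exploiting its full abstraction $\approx = \cxteq$ from \cite{fromapplicative,Howe96}: it then suffices to prove $R \subseteq \approx$, after which $R \subseteq \approx = \cxteq$ closes the iff. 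I would establish $R\subseteq\approx$ coinductively, by showing that the type-indexed family $R$ (where $d_1\mathrel{R}d_2$ at type $S$ iff $d_1,d_2$ agree under all applicative unit-contexts with hole type $S$) is itself an applicative bisimulation. Concretely, assuming $e_1\mathrel{R}e_2:T$ and $e_1\red^* v_1$, I must produce $v_2$ with $e_2\red^* v_2$ such that $v_1,v_2$ are related at $T$ by the value closure of $R$, and symmetrically.

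The verification splits on the structure of $T$, and each applicative-context former supplies exactly the observation the corresponding bisimulation clause needs. At base types, the former $\cond{E_a = c}{()}{\bot_\Unit}$ tests $v_1$ against its own constant $c$; as this context converges on $e_1$ it must converge on $e_2$, forcing $e_2\red^* c = v_1$. At product types, the formers $\pi_i(-)$ reduce the problem componentwise, closing by induction on types. At arrow types $T'\to T''$, the value clause asks that $v_1\,w\mathrel{R}v_2\,w$ for every closed value $w:T'$. This holds because, for each applicative unit-context $E_a$, the composite $E_a'$ that first applies the hole to $w$ and then feeds the result to $E_a$ is again an applicative context, and satisfies $E_a'\hole[e_i] = E_a\hole[e_i\,w]$; since $e_i\red^* v_i$ we get $E_a\hole[v_i\,w]\trm$ iff $E_a'\hole[e_i]\trm$, whence $e_1\mathrel{R}e_2$ yields $E_a\hole[v_1\,w]\trm$ iff $E_a\hole[v_2\,w]\trm$. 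As $E_a$ was arbitrary, $v_1\,w\mathrel{R}v_2\,w$, which is the substance of the arrow clause: $R$ is preserved under application.

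The hard part is the convergence half of the clause at higher types: before probing a value I must first know that $e_2$ converges at all. I would isolate this as an adequacy lemma, stating that for every type $T$ there is an applicative unit-context that converges precisely when its hole does, built by induction on $T$ (equality tests at base types, projection-and-recurse at products, and forcing the term to a value before recursing at arrow types). The arrow case is the delicate one and is where I expect the main effort to lie, since one must witness that a function-typed term has reached a value without the observation itself diverging. Once this lemma is in place, the convergence and value clauses combine to make $R$ an applicative bisimulation, yielding $R\subseteq\approx=\cxteq$ and hence the remaining direction.
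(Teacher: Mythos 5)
Your route is the paper's own: the paper gives no more than a one-line appeal to full abstraction of applicative bisimilarity (citing \cite{fromapplicative,Howe96}), and your plan of showing that the applicative-context relation $R$ is itself an applicative bisimulation is the natural way to flesh that out. Your easy direction, your base-type argument via $\cond{E_a=c}{()}{\bot_\Unit}$, your product case via $\pi_i$, and your arrow value-clause via composing $\app{\hole}{w}$ into a given applicative context are all sound. The genuine gap is the adequacy lemma you defer to: at arrow types it is not ``delicate'' but false. In the paper's grammar the only former that typechecks around a hole of type $T'\to T''$ is application (equality tests need base type, projections need product type, and the bare hole is excluded since the overall type must be $\Unit$), so every admissible applicative context with such a hole evaluates $\app{\hole}{v}$ for some value $v$. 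But the value $\lambda x_{T'}.\bot_{T''}$ converges while every application of it diverges, so no applicative $\Unit$-context whatsoever converges when filled with it; your lemma has no witness, and the convergence half of the bisimulation clause cannot be established this way.

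Worse, no alternative argument can close this gap, because the proposition as literally stated (with the side condition $\vdash E_a\hole[e_i]_T:\Unit$) is refuted by exactly this phenomenon: take $e_1=\lambda x_{T'}.\bot_{T''}$ and $e_2=\bot_{T'\to T''}$, a divergent term of the same arrow type. By the observation above every applicative $\Unit$-context diverges on both fillings, so the right-hand side of the proposition holds vacuously; yet $\elet{x}{\hole_{T'\to T''}}{()}$ converges on $e_1$ and diverges on $e_2$ (call-by-value let is strict), so $e_1$ and $e_2$ are not contextually equivalent. The mismatch is that applicative bisimulation observes convergence at \emph{every} type, whereas the $\Unit$-typing restriction only lets applicative contexts observe convergence after at least one application. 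The statement needs repair --- either allow $E_a\hole[e_i]$ to have arbitrary type, so that the bare hole $\hole_T$ is itself an admissible test, or add a strict forcing former such as $\elet{x}{E_a}{()}$ to the grammar --- and under either repair your adequacy lemma becomes trivial (the identity, respectively the forcing, context converges exactly when its filling does), after which the rest of your coinductive argument goes through and coincides with the paper's intended justification. It is telling that the paper's own completeness proof in the appendix already uses the forcing context $\elet{x}{\hole_T}{e}$ rather than a purely applicative one.
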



  \section{LTS with Symbolic Higher-Order Transitions}
  \label{sec:lts}
   We now define a Labelled Transition System (LTS) which allows us to probe higher-order values with possible symbolic arguments.
 The LTS follows the operational game semantics approach, with several adjustments:
\begin{itemize}
\item the basis of the LTS is the operational game model of~\cite{Laird07};
\item the Opponent behaviours are constrained to \emph{innocent} ones (cf.~\cite{PCFv}) by use of an \emph{opponent memory} component $M$;

\item the denotation of an expression is not just the transitions that the LTS produces for this expression but, instead, these transitions together with the corresponding opponent memory at top-level configurations.
\end{itemize}
Thus, the LTS comprises of \emph{Proponent} and \emph{Opponent} configurations with corresponding transitions, modelling the computations triggered by an expression and its context respectively.
Opponent is construed as the syntactic context, which provides values for the functions that are open in the expression. Open functions are modelled with (opponent-generated) \emph{abstract names}, which are accommodated by extending the syntax and typing rules with abstract function names $\alpha$:
  $$
  \textsc{\Val: } \quad u,v,w  \, \mis \, c \mor \lam[f_T] x e\mor (\vec v) \mor \alpha_T^i
  $$
  Abstract function names $\alpha_{T}^i$ are annotated with the type $T$ of function they represent, and with an index $i\geq0$ that is used for bookkeeping; these are omitted where not important.
  $\an{\chi}$ is the set of abstract names in $\chi$.

The definition of our LTS (\cref{fig:lts}) is explained below.

\paragraph*{\flushleft Moves:}
Our LTS uses \emph{moves}:
$$
\eta \mis \lpropapp{\alpha_T}{D} \mor \lpropret{D} \mor \lopapp{i}{v} \mor \lopret{v}
$$
with contexts $D$ and values $v$ built from the following restricted grammars:
\begin{align*}
  D_\bullet &\mis c\mid\hole_{i,T}\mid (\vec D_\bullet) 
  \\
  v_\bullet \ &\mis c\mid\alpha_T\mid (\vec v_\bullet)
\end{align*}
Thus, $D_\bullet$ and $v_\bullet$ are values where functions are replaced by holes and abstract names,  respectively. To lighten notation, we denote them by $D,v$.

Moves $\eta$ are proponent call (\lpropapp{\alpha}{D}) and return (\lpropret{D}) moves involved in transitions from   
opponent to proponent configurations; and
opponent call (\lopapp{i}{v}) and return (\lopret{v}) moves in transitions from opponent to proponent configurations.

\begin{remark}
  Note the abstract names used in moves (and, later, traces) are of the form $\alpha_T$, i.e.\ without $i$-annotations. This amounts to the fact that any two abstract names $\alpha_T^{i},\alpha_T^{j}$  with $i\neq j$
  correspond to the same function played by opponent in two different points in the interaction. At each point, the proponent functions $\vec v$ that the opponent has access to may differ, and hence the need for different indices to distinguish the two instances of $\alpha_T$. 
  In the LTS, such distinction is not needed for proponent higher-order values as they are suppressed from proponent moves altogether.
\end{remark}

\begin{definition}[Traces]
We let a \emph{trace} $t$ be an alternating sequence of opponent/proponent moves. 
 We write $t+t'$ or, sometimes for brevity, $t\,t'$ to mean trace concatenation.
\end{definition}

\paragraph*{\flushleft Configurations:}
Proponent configurations are written as
$\pconf{A}{M}{K}{t}{e}{V}$ and proponent configurations as
 $\oconf{A}{M}{K}{t}{V}{\vec u}$.
All configurations are ranged over by $C$.
In these configurations:
\begin{itemize}
  \item $A$ is a partial map which assigns a sequence of names $\vec v$ to each abstract function name $\alpha$ (that has been used so far in the interaction) and integer index $j$.
  We write $\alpha^{j,\vec v}\in A$ for $A(\alpha, j)=\vec v$.
The index $j$ is used to distinguish between different uses of the same abstract function name $\alpha$ by opponent in the interaction.
    The sequence of values $\vec v$ represents the proponent functions that were available to opponent when the name $\alpha^j$ was used (knowledge accumulation for constructing context-generated functions, cf.~\cref{ex:fix-curried-ineq}).
    We write $A\uplus\alpha^{j,\vec v}$ for $A\cup((\alpha,j),\vec v)$, assuming $(\alpha,j)\not\in A$.
\item $t$ is the \emph{opponent-visible trace}, i.e.\ a subset of the current interaction that the opponent can have access to,
starting with a move where the proponent calls an opponent abstract function.
\item $K$ is a stack of proponent continuations, created by nested proponent calls.
  We call configurations with an empty stack \emph{top-level} and those with a non-empty stack \emph{inner-level}; opponent top-level configurations are also called \emph{final}.
  Configurations of the form $\pconf{\emptyA}{\emptyA}{\emptyK}{\emptytr}{e}{\emptyK}$ are called \emph{initial}.
\item $M$ is a set of opponent-visible traces. It ensures pure behaviour of the opponent (cf.~\cref{ex:pure}):
 it restricts the moves of the opponent when an opponent-visible trace is being run for a second (or subsequent) time.
Component $M$ is also examined by the bisimulation to determine equivalence of top-level configurations.
 It can be seen as a \emph{memory} of the behaviour of the opponent abstract functions so far and an oracle for future moves.
 Given $M$, we define a map from proponent-ending traces to next opponent moves:
 \[
\nextmove{M}{t} = \{ \eta \mid t\eta \in M\}
\]
We consider only \emph{legal} $M$'s whereby 
$|\nextmove{M}{t}|\leq1$ for any trace $t$ ending in a proponent move
and each abstract function name $\alpha$ appears at most once in $M$.
We write $M[t]$ for $M\cup\{t\}$.
We may also write $M_{C}$ for the $M$-component of a configuration $C$. 

\item $e$ is the proponent expression reduced in proponent configurations.
\item In opponent configurations, $\vec u$ is the sequence of values (proponent functions) that are available to opponent to call at the given point in the interaction.
In both kinds of configurations, $V$ is a stack of sequences of proponent functions. These components encode the opponent knowledge accumulation necessary for a sound NF bisimulation theory for \lang.
    They enable sequence of calls to proponent functions (cf.\ \cref{ex:u's,ex:v's}), and construction of opponent-generated abstract functions with the appropriate level of knowledge attached to them (cf.\ \cref{ex:fix-curried-ineq}).
\end{itemize}

\begin{figure*}[t] 

  \[\begin{array}{@{}cllll@{}}
    \irule[PropTau][proptau]{
      e \red e'
    }{
      \pconf{A}{M}{K}{t}{e}{V} \trans{\tau} \pconf{A}{M}{K}{t}{e'}{V}
    }
    \\[2em]
    \irule[PropRetBarb][propretbarb]{
      (D,\vec v) = \ulpatt(v)
    }{
      \pconf{A}{M}{\emptyK}{t}{v}{\vec u}
      \trans{\lpropret{D}}
      \oconf{A}{M}{\emptyK}{\emptytr}{\emptyK}{\vec v}
    }
    \\[2em]
    \irule[PropRet][propret]{
      (D,\vec v) = \ulpatt(v)
      \\
      K\not=\emptyK
      \\
      t' = t+\lpropret{D}
    }{
      \pconf{A}{M}{K}{t}{v}{\vec u,V}
      \trans{\tau}
      \oconf{A}{M[t']}{K}{t'}{V}{\vec u,\vec v}
    }
    \\[2em]
    \irule[PropCall][propappf]{
      (D,\vec v) = \ulpatt(v)
      \\
      t'= \lpropapp{\alpha}{D}
      \\
      \vec\alpha^{j,{\vec u}}\in A
    }{
      \pconf{A}{M}{K}{t}{E[\app {\alpha_{T_1 \arrow T_2}^{j}} v]}{V}
      \trans{\tau} 
      \oconf{A}{M[t']}{(t,E\hole_{T_2}),K}{t'}{V}{\vec u,\vec v}
    }
    \\[2em]
    \irule[OpRet][opret]{
      \nextmove{M}{t} \subseteq \{\lopret{D\hole[\vec \alpha]}\}
      \\
      (D,\vec \alpha) \in \ulpatt(T')
      \\
      t''=t+\lopret{D\hole[\vec \alpha]}
    }{
      \oconf{A}{M}{(t',E\hole_{T'},T),K}{t}{V}{\vec v}
      \trans{\tau}
      \pconf{A\uplus\vec\alpha^{j,{\vec v}}}{M[t'']}{K}
      {t'}{E\hole[D{\hole[\vec \alpha^{j}]}]}{V}
    }
    \\[2em]
    \irule[OpCallBarb][opappbarb]{
       v_i : T_1 \arrow T_2
       \\
        (D,\vec \alpha) \in \ulpatt(T_1)
      \\
      \vec\alpha\text{ fresh}
    }{
      \oconf{A}{M}{\emptyK}{\emptytr}{\emptyK}{\vec v}
      \trans{\lopapp{i}{D\hole[\vec \alpha]}}
      \pconf{A\uplus\vec\alpha^{0,{\emptytr}}}{M}{\emptyK}{\emptytr}{\app {v_i} {D\hole[\vec \alpha^{0}]}}{\emptytr}
    }
    \\[2em]
    \irule[OpCall][opappf]{
      \nextmove{M}{t} \subseteq \{\lopapp{i}{D\hole[\vec \alpha]}\}
      \\
      K\neq\emptyK
      \\
       v_i : T_1 \arrow T_2
       \\
       (D,\vec \alpha) \in \ulpatt(T_1)
      \\
        t'=t+\lopapp{i}{D\hole[\vec \alpha]}
    }{
      \oconf{A}{M}{K}{t}{V}{\vec v}
      \trans{\tau}
      \pconf{A\uplus\vec\alpha^{j,{\vec v}}}{M[t']}{K}{t'}{(\app {v_i} {D\hole[\vec \alpha^{j}]})}{\vec v,V}
    }

  \end{array}\]
  \hrule
  \caption{The Labelled Transition System. We denote by $\cdot$ the empty stack, and by $\varepsilon$ the empty sequence.}\label{fig:lts}
\end{figure*} 

\newcommand{\pconfb}[6]{\ma{\color{blue}\langle\color{black} #1 \mathop{;} #2 \mathop{;} {#3} \mathop{;} #4 \mathop{;} #5 \color{blue}\rangle\color{black}}}
\newcommand{\oconfb}[6]{\ma{\color{red}\langle\color{black} #1 \mathop{;} #2 \mathop{;} {#3} \mathop{;} #4 \mathop{;} #5 \color{red}\rangle\color{black}}}

\paragraph*{\flushleft Transitions:}
Transitions are of the form
$C \trans{l} C'$, where transition label $l$ is either an immediately observable move $\eta$ or a generic $\tau$, hiding any move involved in the transition. In the former case, observable moves can be opponent calls ($\OpApp$) or proponent returns ($\PropRet$).
Unlike standard LTSs, this LTS hides call/return moves involved in transitions of inner-level configurations, which are stored in the configuration memory $M$ instead.
As we will see later in this section, this is to allow equivalent terms to have different order of calls to opponent functions.
Only \emph{top-level} transitions contain move annotations, making them directly observable.
These are transitions produced by one of the barbed rules (\iref{propretbarb}, \iref{opappbarb}).
In the remaining transition rules moves are accumulated in traces which are stored in the memory component $M$ of the configurations. These will be examined by the bisimulation at top-level configurations.

The simplest transitions are those produced by the \iref{proptau} rule, embedding reductions into proponent configurations.
The remaining transitions involve interactions between opponent and proponent and are detailed below.

\paragraph*{\flushleft Proponent Return:}

When the proponent expression has been reduced to a value, the LTS performs a $\PropRet$-move, either by the \iref{propretbarb} transition, when the configuration is top-level, or the \iref{propret} transition, when it is not.
In both cases the value $v$ being returned is deconstructed to:
\begin{itemize}
\item an \emph{ultimate pattern} $D$ (cf.~\cite{LassenL07}), which is a context obtained from $v$ by replacing each function in $v$ with a distinct numbered hole; together with
\item a sequence of values $\vec v$ such that $D\hole[\vec v]=v$.
\end{itemize}
We let $\ulpatt(v)$ be a deterministic function performing this decomposition.

In rule \iref{propretbarb} the functions $\vec v$ obtained from $v$ become the knowledge of the resulting opponent configuration; opponent can call one of these functions to continue the interaction. 
The previous knowledge $\vec u$ stored in the one-frame stack is being dropped.
This dropping of knowledge is sufficient for a sound NF bisimulation theory based on this LTS, as justified by our soundness result and corroborated by the conditions of applicative bisimulation which encode top-level interactions without accumulating opponent knowledge from previous moves.

On the other hand, in \iref{propret}, $\vec v$ is added to the most current opponent knowledge $\vec u$, stored in the top-frame of the knowledge stack which is popped in the resulting configuration.
This is necessary because, in inner level configurations, opponent should be allowed to call a proponent function it knew before it called the function that returned $v$, allowing observations such as those in \cref{ex:u's,ex:v's}.

In \iref{propretbarb} the context $D$ extracted by ultimate pattern matching becomes observable in the transition label $\lpropret{D}$.
Again, this is in line with the definition of applicative bisimulation where the return values of top-level functions are observed by the bisimulation moves.
However, in rule \iref{propret} this observation is \emph{postponed}: the $\lpropret{D}$ move is appended to the current trace, and this trace is being stored in the $M$ memory in the configuration.
This memory will then be used to make distinctions between configurations in a bisimulation definition, when top-level transitions are reached.
This storing of inner-level moves makes unobservable the order and repetition of proponent calls to opponent functions in the LTS, allowing to prove equivalences such as the one in \cref{ex:pure}.

\paragraph*{\flushleft Proponent Call:}
Rule \iref{propappf} produces a transition when a call to an opponent abstract function ${\alpha_{T_1 \arrow T_2}^{j}}$ is at reduction position in a proponent expression.
Function $\ulpatt(v)$ is again used to decompose the call argument to context $D$ and functions $\vec v$,
whereas $\alpha^j$ is looked up in $A$ to identify the knowledge $\vec u$ attached to this use of the $\alpha$ name at the time $\alpha^j$ was created.
Then $\vec v$ and $\vec u$ are combined to create opponent's knowledge in the resulting configuration.
The trace $t$ accumulated in the (proponent) source configuration of the transition is being pushed onto the stack component $K$ together with the continuation of the expression being reduced.
This is because a proponent call transition triggers the creation of a new opponent-visible trace $t'$, starting with the call move.
This new trace is stored in the memory $M$ and used in the resulting (opponent) configuration.

Segmentation of traces into opponent-visible trace fragments, as performed by this rule, is important for full abstraction of the NF bisimulation defined below. 
When configurations are compared by the bisimulation, the exact interleaving of these trace segments is not observable as the language is pure and opponent-generated functions have only a local view of the overall computation.
Moreover, opponent-visible traces relate to O-views in game semantics but contain only a single (initiating) proponent call move.

\paragraph*{\flushleft Opponent Return:}
An opponent configuration with a non-empty stack component $K$ may return a value with rule \iref{opret}. In order to obtain this value
we extend $\ulpatt$ to the return type $T$ through the use of symbolic function names:
$\ulpatt(T)$ is the set of all pairs $(D,\vec \alpha_{\vec T})$ such that $\vdash D\hole[\vec\alpha] : T$, where $D$ is a value context that does not contain functions,
and the types of $\vec \alpha$ and the corresponding holes match. Note that in this definition we leave the $j$ annotation of $\alpha$'s blank as it is filled-in by the rule. 
In the resulting configuration $\alpha^{j,{\vec v}}$ is added in $A$, extending its domain by $(\alpha,j)$.

This transition can be performed in two cases; when:
\begin{itemize}
  \item $\nextmove{M}{t} = \emptyset$. In this case the current opponent-visible trace $t$ is not a strict prefix of a previously performed trace stored in $M$, and the configuration can non-deterministically perform this return transition.
    If it does, the resulting configuration stores in $M$ the extended trace $t''=t+\lopret{D\hole[\vec \alpha]}$. Note that $j$ is not stored in moves and thus neither in $M$.
    Moreover in this case the $\vec\alpha$ used are chosen fresh, this is guaranteed by the implicit condition that $M$ is legal and thus $\alpha$ cannot appear twice in $M$.

  \item $\nextmove{M}{t} = \{\lopret{D\hole[\vec \alpha]}\}$. In this case the current configuration is along an opponent-visible trace that has occurred previously and performed a return as a next move. Thus because the opponent must have purely functional behaviour, the configuration can perform no other but this return transition.
\end{itemize}
If $\nextmove{M}{t}$ does not fall into one of the above cases the transition does not apply.

To encode functional behaviour, the current opponent knowledge $\vec v$ can only be stored in the abstract functions $\vec \alpha$ generated at this transition and stored in $A$.
It \emph{cannot} be carried forward otherwise in the resulting proponent function. Hence, if $T'$ is a base type, this knowledge is lost after the transition.

\paragraph*{\flushleft Opponent Call:}

The proponent function being called in these transitions defined by \iref{opappbarb} and \iref{opappf} is one of those in the current opponent knowledge $\vec v$. We use the relative index $i$ in $\vec v$ to refer to the function being called.
The argument supplied to this function is obtained again by the function $\ulpatt$ applied to the argument type $T_1$.

Opponent call transitions are differentiated based on whether they are top- or inner-level.
Top-level opponent calls (\iref{opappbarb}) are immediately observable and thus transitions are annotated with the move.
Moreover, the opponent knowledge is dropped at the transition and not accumulated in the knowledge stack or created abstract function names.
This is in line with applicative bisimulation where related top-level functions are called only at the point they become available in the bisimulation, and are provided with identical arguments, thus not not containing any related functions from the observer knowledge.

However inner-level opponent calls are not immediately observable and thus the corresponding move is stored in traces in $M$.
As for inner opponent return transitions, $\nextmove{M}{t}$ may require that the transition must or cannot be applied.

\paragraph*{\flushleft Big-Step bisimulation:}

\begin{definition}[Trace transitions]
We use $\xtoo{}$ for the reflexive and transitive closure of the $\trans{\tau}$ transition.
We write $C\xtoo{\eta}C'$ to mean $C\xtoo{}\trans{\eta}\xtoo{}C'$;
and $C\xtoo{t}C'$ to mean $C\xtoo{\eta}\xtoo{t'}C'$ when $t=\eta t'$, and $C\xtoo{}C'$ when $t$ is empty.
\end{definition}

\noindent
Note that, by definition, trace transitions derived by our LTS only contain $\OpApp$
and $\PropRet$ moves.

\begin{definition}
  Given a closed expression $\vdash e:T$, the initial configuration associated to $e$ is:
  \[
C_e = \pconf{\emptyA}{\emptyA}{\emptyK}{\emptytr}{e}{\emptyK}
\]
Accordingly, we can give the semantics of $e$ as:
\[
\sem{e} = \{ (t,M)\mid \exists A,t',V,\vec v.\  C_e\xtoo{t}\oconf{A}{M}{\emptyK}{t'}{V}{\vec v}\}.
\]
\end{definition} 

A closed expression $e$ will be first evaluated by the LTS using the operational semantics rules (and \textsc{PropTau}). Once a value is reached, this will be communicated to the context by means of a proponent return (rule \textsc{PropRetBarb}), after it has been appropriately decomposed. For there on, the game continues with opponent interrogating functions produced by proponent (using rule \textsc{OpAppBarb}). Proponent can interrogate functions provided by opponent (\textsc{PropApp}), leading to further interaction all of which remains hidden (see $\tau$-transitions), until proponent provides a return to opponent's top-level application (\textsc{PropRetBarb}).

\begin{example}\label{ex:pure-trans}
We now revisit the terms in \cref{ex:pure} to show how our LTS works.
  We start with term $M_1$ from \cref{ex:pure} placed in an initial configuration $C_1=\pconf{\emptyA}{\emptyA}{\emptyK}{\emptytr}{M_1}{\emptyK}$.
  The first is a proponent return transition which moves the function into the opponent's knowledge.
  \[
    C_1 \trans{\lpropret{\hole[]}} \oconf{\emptyA}{\emptyA}{\emptyK}{\emptytr}{\noe}{M_1}= C_{12}
  \]
  Then opponent calls and proponent immediately returns the second function (\lstinline{fun g -> ...}), which we call $M_{11}$, and opponent calls $M_{11}$; all are top-level interactions.
  \begin{align*}
    C_{11} &\trans{\lopapp{1}{\alpha_f}}
            \pconf{\alpha_f^0}{\emptyA}{\emptyK}{\emptytr}{M_{11}\sub{\text{\lstinline{f}}}{\alpha_f^0}}{\emptyK}\\
           &\trans{\lpropret{\hole[]}} 
            \oconf{\alpha_f^0}{\emptyA}{\emptyK}{\emptytr}{\noe}{M_{11}\sub{\text{\lstinline{f}}}{\alpha_f^0}}\\
           &\trans{\lopapp{1}{\alpha_g}} 
            \pconf{\alpha_f^0,\alpha_g^0}{\emptyA}{\emptyK}{\emptytr}{M_{12}}{\emptyK} = C_{12}
  \end{align*}
  where

\medskip
\noindent
\begin{tabular}{@{}l@{\;}l@{}}
  $M_{12} =$
&\begin{minipage}[t]{0.8\columnwidth}\vspace{-1.3em} \begin{lstlisting}[boxpos=t]
if $\alpha_f^0$ () == $\alpha_g^0$ () then
  if $\alpha_f^0$ () == $\alpha_g^0$ () then 0 else 1
else 2
\end{lstlisting}\end{minipage}
\end{tabular}\\
  The following transition is a proponent call of $\alpha_f^0$, followed (necessarily, due to types) by an opponent return.
  \begin{align*}
    C_{12} &\trans{\tau}
            \oconf{\alpha_f^0,\alpha_g^0}{\{t_1\}}{(\emptytr,E_1)}{t_1}{\emptyK}{\emptyK}
            \tag{inner-level move \lpropapp{\alpha_f}{()}, rule \iref{propappf}}\\
           &\trans{\tau}
            \pconf{\alpha_f^0,\alpha_g^0}{\{t_2\}}{\emptyK}{\emptytr}{E_1[k_1]}{\emptyK} = C_{13}
            \tag{inner-level move \lopret{k_1}, rule \iref{opret}}
  \end{align*}
  where $t_1 = \lpropapp{\alpha_f}{()}$ and $t_2 = t_1,\lopret{k_1}$ and $k_1$ is an integer constant and 
  $E_1=$ (\lstinline{if $\hole{}_\Int$ == $\alpha_g^0$ () then ...}).
  The transitions continue with the call and return of $\alpha_g$.
  \begin{align*}
    C_{13} &\trans{\tau}
            \trans{\tau}
            \pconf{\alpha_f^0,\alpha_g^0}{M_1}{\emptytr}{\emptyK}{E_2[k_2]}{\emptyK} = C_{14}
            \tag{inner-level moves \lpropapp{\alpha_g}{()} and then \lopret{k_2}}
  \end{align*}
  where $M_1=\{t_2,t_3\}$ and $t_3 = \lpropapp{\alpha_f}{()},\lopret{k_1}$ and $k_2$ is an integer constant and 
  $E_2=$ (\lstinline{if $k_1$ == $\hole{}_\Int$ then ...}).
  
  The behaviour of $\alpha_f$ and $\alpha_g$ are now determined at this point from the traces $t_2$ and $t_3$ in the memory component of the configuration.
  Thus the following transitions only depend on whether $k_1=k_2$. If they are not equal, proponent returns with a single
  $\lpropret{2}$ transition.
  \begin{align*}
    C_{14} & \trans{\lpropret{2}}
            \oconf{\alpha_f^0,\alpha_g^0}{M_1}{\emptyK}{\emptytr}{\emptyK}{\emptyK} = C_{15}(k_1,k_2)
            \tag{with $k_1\not=k_2$ in $M_1$}
  \end{align*}

  If they are equal, the remaining transitions will be the following ones, reaching final configuration $C_{15}$.
  \begin{align*}
    C_{14} &\trans{\tau}\trans{\tau}\trans{\tau}\trans{\tau}\hfill\;
    \tag*{(inner-level moves \nbox{\lpropapp{\alpha_f}{()}, \lopret{k_1},\\\lpropapp{\alpha_g}{()},\lopret{k_2})}}\\
           &\trans{\lpropret{0}}
            \oconf{\alpha_f^0,\alpha_g^0}{M_1}{\emptyK}{\emptytr}{\emptyK}{\emptyK} = C_{15}(k_1,k_2)
            \tag{with $k_1=k_2$ in $M_1$}
  \end{align*}
  The $\lopret{k_1}$ and $\lopret{k_2}$ moves are the only possible at those points of the trace due to the memory component $M_1$, encoding a purely functional behaviour of the opponent.

  Therefore, $C_1$ has only the following trace transitions:
  \begin{align*}
    C_1 & \xtoo{\lpropret{\hole[]}} \xtoo{\lopapp{1}{\alpha_f}} \xtoo{\lpropret{\hole[]}} \xtoo{\lopapp{1}{\alpha_g}} \xtoo{\lpropret{2}} C_{15}(k_1,k_2)
            \tag{with $k_1\not=k_2$}\\
    C_1 & \xtoo{\lpropret{\hole[]}} \xtoo{\lopapp{1}{\alpha_f}} \xtoo{\lpropret{\hole[]}} \xtoo{\lopapp{1}{\alpha_g}} \xtoo{\lpropret{0}} C_{15}(k_1,k_2)
            \tag{with $k_1=k_2$}
  \end{align*}

  Configuration $C_1'=\pconf{\emptyA}{\emptyA}{\emptyK}{\emptytr}{N_1}{\emptyK}$ has the same trace transitions, with fewer inner call and return moves, in different order, resulting in top-level configurations with the same memory as the corresponding ones above.
  One can thus see that the two original terms $M_1$ and $N_1$ are equivalent under this LTS as $\sem{M_1}=\sem{N_1}$.
\end{example}

\begin{example}
  Here we explore the inequivalent terms from \cref{ex:v's} to show how they are differentiated by our LTS. We focus
  on the configuration $C_2= \pconf{\emptyA}{\emptyA}{\emptyK}{\emptytr}{M_2}{\emptyK}$ and the transitions that differentiate it from $N_2$, corresponding to the behaviour of the context shown in \cref{ex:v's}.

  To simplify notation in this example, we identify memory components with the same prefix closure and use in configurations below an instructive representative of each memory component equivalent class.

  Since $M_2$ is a curried two-argument function similar to $M_1$ in the preceding example, we will have the following initial transitions.
\begin{align*}
    C_{2}  &
            \trans{\lpropret{\hole[]}}
            \trans{\lopapp{1}{\alpha_f}}
            \trans{\lpropret{\hole[]}} \\
            &\trans{\lopapp{1}{\true}} \xtoo{}
            \pconf{\alpha_f^0}{\emptyA}{\emptyK}{\emptytr}{M_{21}}{\emptyK} = C_{21}
  \end{align*}
  where $M_{21}$ $=$ \lstinline{$\alpha_f^0$ (X, fun _ ->  tt)}
  and \lstinline{X} is the recursive function \lstinline{fix X(d) -> $\alpha_f^0$ (X, fun _ -> d)}.
The next transition is an inner proponent call.
\begin{align*}
    C_{21}  &
            \trans{\tau}
            \oconf{\alpha_f^0}{\{t_1\}}{K_1}{t_1}{\emptyK}{v_1,v_2} = C_{22}
            \tag{inner move \lpropapp{\alpha_f}{(\hole[],\hole[])}}
  \end{align*}
  where $v_1=\text{\lstinline{X}}$ and $v_2=\text{\lstinline{fun _ -> tt}}$ and $t_1=\lpropapp{\alpha_f}{(\hole[],\hole[])}$
  and $K_1=(\emptytr, \hole)$.
  At this point of the interaction, opponent can either return a value or call one of the $v_i$. Since we are focusing on the behaviour of the discriminating context we show the following call to $v_2$:
\begin{align*}
    C_{22}  &
            \trans{\tau}\xtoo{}
            \pconf{\alpha_f^0}{\{t_2\}}{K_1}{t_2}{\true}{V_1} = C_{23}
            \tag{inner move \lopapp{2}{\false}}
  \end{align*}
  where the one-frame stack $V_1$ is $(v_1,v_2)$ and 
  $t_2=t_1,\lopapp{2}{\false}$.
  This is followed by
\begin{align*}
    C_{23}  &
            \trans{\tau}
            \oconf{\alpha_f^0}{\{t_3\}}{K_1}{t_3}{\emptyK}{v_1,v_2}
            \tag{inner \lpropret{\true}}\\
            &\trans{\tau}\xtoo{}
            \pconf{\alpha_f^0}{\{t_4\}}{K_1}{t_4}{M_{22}}{V_1} = C_{24}
            \tag{inner \lopapp{1}{\false}}
  \end{align*}
  where $t_3=t_2,\lpropret{\true}$
  and
  $t_4=t_3,\lopapp{1}{\false}$
  and
  $M_{22}$ $=$ \lstinline{$\alpha_f^0$ (X, fun _ ->  ff)}.
  Then the recursive call results to the transition:
\begin{align*}
    C_{24}  &
            \trans{\tau}
            \oconf{\alpha_f^0}{M_1}{K_2}{t_1}{V_1}{v_1,v_2'} = C_{25}
            \tag{inner \lpropapp{\alpha_f}{(\hole[],\hole[])}}
  \end{align*}
  where $M_1=\{t_4\}[t_1]=\{t_4\}$ and 
  $K_2 = (t_4,(\hole[],\hole[])),K_1$ and
  $v_2'=\text{\lstinline{fun _ -> ff}}$.
  At this point opponent must necessarily call $v_2'$ as the current trace $t_1$ in the configuration is a prefix of $t_4$ in the memory component $M_1$ and 
  $\nextmove{M}{t_1}=\lopapp{2}{\false}$:
\begin{align*}
    C_{25}  &
            \trans{\tau}\xtoo{}
            \pconf{\alpha_f^0}{M_2}{K_2}{t_2}{\false}{V_2}
            \tag{inner \lopapp{2}{\false}}\\
            &\trans{\tau}
            \oconf{\alpha_f^0}{M_3}{K_2}{t_3'}{V_1}{v_1,v_2'}
            = C_{26}
            \tag{inner \lpropret{\false}}
  \end{align*}
  where $t_3'=t_2,\lpropret{\false}$ and
  $M_2=\{t_4\}[t_2]=\{t_4\}$ and
  $M_3=\{t_4\}[t_3']=\{t_4,t_3'\}$ and
  $V_2=(v_1,v_2'),V_1$.
  Trace $t_3'$ is not a prefix of $t_4$ and therefore opponent can perform any move, including returning a value.
\begin{align*}
    C_{26}  &
            \trans{\tau}
            \pconf{\alpha_f^0}{M_4}{K_1}{t_4}{\true}{V_1}
            \tag{inner \lopret{\true}}\\
            &\trans{\tau}
            \oconf{\alpha_f^0}{M_5}{K_1}{t_5}{\emptyK}{v_1,v_2}
            \tag{inner \lpropret{\true}}\\
            &\trans{\tau}
            \pconf{\alpha_f^0}{M_6}{\emptyK}{\emptytr}{\true}{\emptyK}
            \tag{inner \lopret{\true}}\\
            &\trans{\lpropret{\true}}
            \oconf{\alpha_f^0}{M_6}{\emptyK}{\emptytr}{\emptyK}{\emptyK}
  \end{align*}
  where
  $M_4=M_3[t_4']=\{t_4,t_4'\}$ and
  $t_4'=t_3',\lopret{\true}$ and
  $M_5=\{t_5,t_4'\}$ and
  $t_5=t_4,\lpropret{\true}$ and
  $M_6=\{t_6,t_4'\}$ and
  $t_6=t_5,\lopret{\true}$.
  
  Term $N_2$ is not able to perform this transition trace as once the first function of the pair is called, it diverges.
\end{example}

 We note that the LTS is deterministic at proponent configurations, but not at opponent configurations as the latter can fire more than one $\tau$-transitions. Nonetheless, as the behaviour of opponent is restricted by the memory $M$, we can show the following.

\begin{lemma}[$M$-determinacy]\label{lem:Mdet}
  Given final configurations $C,C_1,C_2$ such that
  $C \xtoo{\lopapp{i}{D[\vec\alpha]}\lpropret{D'}} C_i$ (for $i=1,2$),
  if $M_{C_1}\cup M_{C_2}$ is legal then $C_1= C_2$.
\end{lemma}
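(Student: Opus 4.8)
The plan is to show that the two $\tau$-runs witnessing the two trace transitions are literally the \emph{same} run, from which $C_1=C_2$ follows at once. Unfolding the definition of $\xtoo{}$, each hypothesis $C \xtoo{\lopapp{i}{D[\vec\alpha]}\lpropret{D'}} C_i$ factors as
\[
  C \;\trans{\lopapp{i}{D[\vec\alpha]}}\; C'' \;\xtoo{}\; P_i \;\trans{\lpropret{D'}}\; C_i ,
\]
where the first step is forced to be an \iref{opappbarb} transition (the only kind available at a final configuration, which admits no $\tau$-moves, so the leading $\tau$-closure is trivial), the middle is a $\tau$-sequence, and the last is a top-level \iref{propretbarb} return out of a proponent value configuration $P_i$ (after which, since $C_i$ is again final, no $\tau$-moves follow). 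Since $C$ and the observable move $\lopapp{i}{D[\vec\alpha]}$ are shared, and \iref{opappbarb} is deterministic given its label, the intermediate $C''$ is common to both runs, fixing a canonical choice of the fresh names $\vec\alpha$. It therefore suffices to prove that the two $\tau$-runs $C''\xtoo{}P_i$ coincide.

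I would prove this by an induction establishing the invariant that the two runs pass through identical configurations and $\tau$-terminate simultaneously. The engine is determinacy at proponent configurations: the shape of the proponent expression $e$ selects a unique applicable rule --- \iref{proptau} when $e$ admits an internal reduction $e\red e'$, \iref{propappf} when the redex is a call $\app{\alpha^j}{v}$ to an abstract name, and \iref{propret} or \iref{propretbarb} when $e$ is a value (according to whether $K=\emptyK$) --- and in each case the successor configuration is uniquely determined. Thus along any maximal proponent segment the two runs are forced to agree, and a run $\tau$-terminates exactly at a top-level proponent value configuration, where only the observable \iref{propretbarb} is enabled; so the two runs also stop at the same moment.

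The only branching occurs at inner-level opponent configurations $\oconf{A}{M}{K}{t}{V}{\vec v}$, where both \iref{opret} and \iref{opappf} may be eligible; here I split on $\nextmove{M}{t}$. When $\nextmove{M}{t}=\{\eta\}$, the premises $\nextmove{M}{t}\subseteq\{\dots\}$ of the two rules force the move to be exactly $\eta$, so the successor is again determined and the runs agree. The delicate case --- and the main obstacle --- is $\nextmove{M}{t}=\emptyset$, where opponent may freely return or call with any admissible pattern. Here I argue by contradiction: let the two runs first differ at a pair of (by the invariant) identical opponent configurations with common opponent-visible trace $t$, choosing distinct moves $\eta_1\neq\eta_2$. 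Inspecting \iref{opret} and \iref{opappf}, the chosen move is always recorded in the memory component via the update $M[t']$ with $t'=t\eta_k$, and since $M$ only grows along a run we obtain $t\eta_1\in M_{C_1}$ and $t\eta_2\in M_{C_2}$, hence $\{\eta_1,\eta_2\}\subseteq\nextmove{M_{C_1}\cup M_{C_2}}{t}$. But legality of $M_{C_1}\cup M_{C_2}$ gives $|\nextmove{M_{C_1}\cup M_{C_2}}{t}|\le 1$, so $\eta_1=\eta_2$, a contradiction. Thus the runs never diverge.

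Finally I would note that ``same move at each step'' upgrades to ``same configuration'': the components $A,M,K,t,V$ and the proponent expression are all determined by the shared move history, and the fresh abstract names together with their bookkeeping indices $j$ introduced by \iref{opret}, \iref{opappf} and \iref{opappbarb} are determined only up to renaming, which we pin down by the canonical fresh-name convention (consistency being further enforced by the legality clause stipulating that each name occurs at most once in $M_{C_1}\cup M_{C_2}$, which forbids the two runs from reusing a name inconsistently). With the two $\tau$-runs identical step for step and terminating together at a common $P$, the observable return $\lpropret{D'}$ and the resulting final configurations coincide, giving $C_1=C_2$.
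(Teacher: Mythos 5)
Your proof is correct and follows essentially the same route as the paper's: factor each big-step transition into single steps, induct along the two $\tau$-runs using determinacy of proponent configurations, and resolve the opponent branching by observing that every opponent choice is recorded in the (monotonically growing) memory, so legality of $M_{C_1}\cup M_{C_2}$ forces both runs to make identical moves. The remaining differences are presentational only --- the paper applies the legality argument to the successor configurations' memories rather than by contradiction at the first point of divergence, and it leaves implicit the fresh-name and index bookkeeping that you spell out.
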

\begin{proof}
  Let us set $M=M_{C_1}\cup M_{C_2}$. We break down the transitions from $C$ to $C_i$ as follows:
\[
  C \trans{\lopapp{i}{D[\vec\alpha]}}
C_{i0}\trans{\tau}C_{i1}\trans{\tau}\cdots\trans{\tau}C_{in_i}
  \xtoo{\lpropret{D'}} C_i
\]
We show that, for each $j=0,\dots,n$, where $n=\min(n_1,n_2)$, $C_{1j}= C_{2j}$. We do induction on $j$; the base case is clear from the LTS rules. Now consider $C_{ij}\trans{\tau}C_{i(j+1)}$. By IH,  $C_{1j}= C_{2j}$. If they are (both) $P$-configurations then, by determinacy of proponent, the two configurations are bound to make the same move. Hence, 
$C_{1(j+1)}= C_{2(j+1)}$. If $C_{ij}$ are $O$-configurations, let their current (common) trace component be $t$. As the memory maps $M_1,M_2$ of $C_{1(j+1)},C_{2(j+1)}$ are included in $M$, we must have $\nextmove{M_1}{t}=\nextmove{M_2}{t}$ and, hence, $C_{1j}$ and $C_{2j}$ must have made the same move and therefore $C_{1(j+1)}=C_{2(j+1)}$. Now, since $C_{1n}=C_{2n}$ and one of them makes a $P$-return then, by determinacy of proponent, the other must make the same  $P$-return. Hence, $C_1=C_2$.
\end{proof}

Weak bisimulation is defined in the standard way, albeit using the big-step transition relation corresponding to initial and final configurations. In addition, $M$-components are compared \emph{contravariantly}: $M$ records the opponent behaviour faced by the simulated expression, which then restricts that faced by the simulating expression..

\begin{definition}[Weak (Bi-)Simulation]
A  binary relation \bisim{R} on initial and final configurations is a \emph{weak simulation} when
  for all $C_1 \bisim*{R} C_2$:
  \begin{itemize}
    \item \emph{Initial configurations:} if $C_1 \xtoo{\lpropret{D'}} C_1'$, there exists $C_2'$ such that $C_2 \xtoo{\lpropret{D'}} C_2'$ and $C_1' \bisim*{R} C_2'$;
    \item \emph{Final configurations:} if $C_1 \xtoo{\lopapp{i}{D[\vec\alpha]}\lpropret{D'}} C_1'$ with $\vec\alpha$ fresh for $C_2$, there exists $C_2'$ such that $C_2 \xtoo{\lopapp{i}{D[\vec\alpha]}\lpropret{D'}} C_2'$ and $C_1' \bisim*{R} C_2'$;
  \item $M_{C_2}\subseteq M_{C_1}$ (where $M_{C_i}$ is the $M$-component of $C_i$).
\end{itemize}
  If \bisim{R}, $\bisim{R}^{-1}$ are weak simulations then
  \bisim{R} is a \emph{weak bisimulation}.
  Similarity $(\simil)$ and bisimilarity $(\bisimil)$ are the largest weak 
  simulation and bisimulation, respectively.
  \defqed
\end{definition}

This definition resembles that of applicative bisimulation for \lang, in that related top-level functions applied to identical arguments must co-terminate and return related results.
However the most important difference here is that there is no quantification over all possible programs. The context $D$ is a value without any functions in it (essentially containing constants and/or pairs) which is determined by the type of the $i$'th function. The fresh names $\vec\alpha$ correspond to opponent-generated functions but are first-order entities that are equivalent up to renaming. Thus this definition constitutes a big-step Normal Form bisimulation.



\begin{definition}[Bisimilar Expressions]
Expressions $\vdash e_1: T$ and $\vdash e_2: T$ are bisimilar, written $e_1 \bisimil e_2$, when 
  $C_{e_1} \bisimil C_{e_2}$. 
  \defqed
\end{definition}

\begin{lemma}\label{lem:somelemma}
$e_1 \bisimil e_2$ iff $\sem{e_1}=\sem{e_2}$.
\end{lemma}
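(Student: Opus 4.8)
The plan is to prove both directions by turning the trace semantics $\sem{\cdot}$ into a handle on the coinductive structure of $\bisimil$. For this I first extend the notation to arbitrary configurations, writing $\sem{C} = \{(t,M)\mid C\xtoo{t}\oconf{A}{M}{\emptyK}{t'}{V}{\vec v}\}$, so that $\sem{e}=\sem{C_e}$ by definition. Two structural observations drive everything. First, the only observable move out of an initial (proponent, empty-stack) configuration is a $\lpropret{D'}$, which lands on a final configuration, and the only observable continuation from a final configuration is a round $\lopapp{i}{D[\vec\alpha]}\lpropret{D'}$ ending again on a final configuration; hence each checkpoint of an observable trace is exactly an initial or final configuration, i.e.\ precisely the configurations related by the two bisimulation clauses. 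Second, $M$ is \emph{monotone}: every transition either preserves the memory or replaces it by $M[t']=M\cup\{t'\}$, so $M$ only grows along a run, and legality is closed under subsets, since both $|\nextmove{M}{t}|\le1$ and the ``each $\alpha$ at most once'' condition are downward closed.

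For the ($\Rightarrow$) direction, I assume $e_1\bisimil e_2$ and take $(t,M)\in\sem{e_1}$, witnessed by $C_{e_1}\xtoo{t}C_1$ with $M_{C_1}=M$. I decompose $t$ into its rounds and match them one at a time along the bisimulation: the Initial clause for the leading $\lpropret{D'}$ and the Final clause for each subsequent $\lopapp{i}{D[\vec\alpha]}\lpropret{D'}$, renaming the fresh opponent names so that they are fresh for the $e_2$-side (legitimate since the LTS and $\sem{\cdot}$ are closed under renaming of opponent names). This yields $C_{e_2}\xtoo{t}C_2$ with $C_1\bisimil C_2$. Because $\bisimil$ compares $M$-components contravariantly and is itself a bisimulation, both $M_{C_2}\subseteq M_{C_1}$ and $M_{C_1}\subseteq M_{C_2}$ hold, so $M_{C_2}=M$ and $(t,M)\in\sem{e_2}$; the symmetric argument gives the converse inclusion, hence $\sem{e_1}=\sem{e_2}$.

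For ($\Leftarrow$) I assume $\sem{e_1}=\sem{e_2}$ and take the symmetric candidate $\bisim{R}=\{(C_1,C_2)\mid C_1,C_2\text{ both initial or both final, and }\sem{C_1}=\sem{C_2}\}$, which contains $(C_{e_1},C_{e_2})$; since $\bisim{R}=\bisim{R}^{-1}$ it suffices to check it is a simulation. The $M$-clause is immediate: for final configurations $(\emptytr,M_{C_i})\in\sem{C_i}$ forces $M_{C_1}=M_{C_2}$, and for initial configurations both memories are empty. The Initial clause is also easy: proponent transitions are deterministic, so the forced move $\lpropret{D'}$ puts $\sem{C_i}$ in bijection with $\sem{C_i'}$ by prepending $\lpropret{D'}$, and equality of $\sem{C_1},\sem{C_2}$ transfers to $\sem{C_1'},\sem{C_2'}$. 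The crux, and the step I expect to be the main obstacle, is the Final clause, where the internal opponent computation hidden inside a round is nondeterministic. Given $C_1\xtoo{\lopapp{i}{D[\vec\alpha]}\lpropret{D'}}C_1'$ with $\vec\alpha$ fresh, finality of $C_1'$ gives $(\lopapp{i}{D[\vec\alpha]}\lpropret{D'},M_{C_1'})\in\sem{C_1}=\sem{C_2}$, so there is a final $C_2'$ reached by the same round with $M_{C_2'}=M_{C_1'}$. To show $\sem{C_1'}=\sem{C_2'}$, I take $(s,M)\in\sem{C_1'}$, push it through to get $(\lopapp{i}{D[\vec\alpha]}\lpropret{D'}+s,M)\in\sem{C_2}$, and obtain some $C_2''$ with $C_2\xtoo{\lopapp{i}{D[\vec\alpha]}\lpropret{D'}}C_2''\xtoo{s}$ final with memory $M$. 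Now both $C_2'$ and $C_2''$ arise from $C_2$ via the \emph{same} observable round; by monotonicity $M_{C_2'}=M_{C_1'}\subseteq M$ and $M_{C_2''}\subseteq M$, so $M_{C_2'}\cup M_{C_2''}$ is a subset of the legal $M$ and is therefore legal, whence \cref{lem:Mdet} collapses the two outcomes to $C_2''=C_2'$ and $(s,M)\in\sem{C_2'}$. The reverse inclusion is symmetric, so $(C_1',C_2')\in\bisim{R}$ and $\bisim{R}$ is a bisimulation. The whole difficulty is thus concentrated in this clause: the observable round conceals branching opponent choices, and it is $M$-determinacy (\cref{lem:Mdet}), fed by monotonicity of $M$ and downward closure of legality, that lets the trace-plus-memory data $\sem{\cdot}$ faithfully stand in for the full branching behaviour.
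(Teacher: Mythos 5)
Your proof is correct, and it rests on exactly the same three ingredients as the paper's — monotonicity of the $M$-component along transitions, downward closure of legality, and $M$-determinacy (\cref{lem:Mdet}) — but your completeness direction is organised around a genuinely different candidate relation. The paper uses the \emph{history-based} relation consisting of pairs of final configurations reached from $C_{e_1}$ and $C_{e_2}$ by the \emph{same} trace and carrying equal memories: there, re-establishing the invariant after a matched round is trivial, but showing that $C_2$ itself can answer a challenge is the hard part, since $\sem{e_1}=\sem{e_2}$ only yields some run $C_{e_2}\xtoo{t}\hat C_2\xtoo{\lopapp{i}{D[\vec\alpha]}\lpropret{D'}}C_2'$, and \cref{lem:Mdet} must be applied repeatedly along the entire history $t$ to collapse $\hat C_2$ onto $C_2$. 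Your \emph{future-based} relation ($\sem{C_1}=\sem{C_2}$ on pairs of initial or final configurations) reverses the burden: answering the challenge is immediate from the definition of $\sem{\cdot}$, and the work goes into re-establishing the invariant, where one application of \cref{lem:Mdet} per continuation $(s,M)$ reconciles the chosen response $C_2'$ with the configuration $C_2''$ witnessing $(s,M)$; the two deployments of determinacy are essentially dual (iterated over the past versus local on the future). Your version buys two things: the appeal to \cref{lem:Mdet} is local rather than threaded through the whole history, and your relation actually contains the pair $(C_{e_1},C_{e_2})$ and discharges the Initial clause explicitly — the paper's relation, as literally defined, contains only final pairs, so that (easy but necessary) step is left implicit there. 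What the paper's version buys is terseness: it never needs to extend $\sem{\cdot}$ from expressions to configurations, and its invariant is restated for free after each round. Your forward direction, including the equivariance/renaming caveat for the freshness side-condition, matches the paper's one-line argument but spells it out in more detail.
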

\begin{proof}
  Note first that if $e_1 \bisimil e_2$ and $(t,M)\in\sem{e_1}$ then, starting from $C_{e_2}$, we can simulate the transitions producing $t$ and arrive at the same $M$. Conversely, suppose that $\sem{e_1}=\sem{e_2}$ and define:
  \[
    \bisim{R} = \{ (C_1,C_2)\mid
M_{C_1}= M_{C_2}\land
    \exists t.\,C_{e_i}\xtoo{t} C_i \land \text{$C_i$ final}\}.
  \]
We show that $\bisim{R}$ is a weak bisimulation.
Suppose $C_1\bisim*{R}C_2$ with trace $C_{e_i}\xtoo{t}C_i$,
and let $C_1 \xtoo{\lopapp{i}{D[\vec\alpha]}\lpropret{D'}} C_1'$ with $\vec\alpha$ fresh for $C_2$.
As $\sem{e_1}=\sem{e_2}$, there is a transition sequence:
\[
C_{e_2}\xtoo{t}\hat C_2\xtoo{\lopapp{i}{D[\vec\alpha]}\lpropret{D'}} C_2'
\]
such that $M_{C_1'}=M_{C_2'}$. Since $M_{C_2}=M_{C_1}\subseteq M_{C_1'}$, we have $M_{C_2}\subseteq M_{C_2'}$. Hence, starting from $C_{e_2}$ and repeatedly applying Lemma~\ref{lem:Mdet}, we conclude that $C_2=\hat C_2$, and thus $C_2$ can match the challenge of $C_1$. Hence, $\bisim{R}$ is a weak simulation and, by symmetry, a weak bisimulation.
\end{proof}

The previous result can be used to show that 
bisimilarity is sound and complete with respect to contextual equivalence. The proof is discussed in the next section.

\begin{theorem}[Full abstraction]\label{theorem:SC}
  $e_1 \bisimil e_2$ iff $e_1 \cxteq e_2$.  
\end{theorem}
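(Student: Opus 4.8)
The plan is to prove the two implications separately, reducing each to a more tractable form using results already in hand. By \cref{lem:somelemma}, $e_1 \bisimil e_2$ is equivalent to $\sem{e_1} = \sem{e_2}$, so it suffices to relate equality of trace semantics to contextual equivalence. Moreover, by \cref{prop:CL} contextual equivalence can be tested against applicative contexts $E_a$ alone, and these are precisely the shape of interaction the LTS models at top level (iterated application to values, constant tests, and projections). Hence the theorem reduces to showing that $\sem{e_1}=\sem{e_2}$ holds iff for every applicative context $E_a$ we have $E_a\hole[e_1]\trm$ iff $E_a\hole[e_2]\trm$.

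For soundness ($\sem{e_1}=\sem{e_2}$ implies the applicative-context condition) I would establish an adequacy/correspondence lemma linking the concrete reduction of $E_a\hole[e]$ to the trace transitions of $C_e$. An applicative context fixes a top-level protocol: each application it performs corresponds to an $\lopapp{i}{\cdot}$ move, each constant test and projection inspects a $\lpropret{\cdot}$ move, and the function values it supplies are exactly what drive the hidden inner interaction recorded in the memory $M$. I would make this precise by a simulation between ``$E_a$ running against $e$'' and the LTS, showing that the context's supplied functions behave as a legal opponent, so that $E_a\hole[e]$ converges exactly when $C_e$ admits a matching trace/memory pair $(t,M)\in\sem{e}$ realising this protocol. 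Since $\sem{e_1}=\sem{e_2}$, both the top-level traces and, crucially, the $M$-components agree, so the two composite computations co-terminate for every $E_a$.

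For completeness I would argue contrapositively by a definability construction. If $\sem{e_1}\neq\sem{e_2}$, pick a pair, say $(t,M)\in\sem{e_1}\setminus\sem{e_2}$, and synthesise an applicative context $E_a$ that (i) replays the top-level moves of $t$, applying the returned proponent functions to the values dictated by its $\lopapp{i}{\cdot}$ moves and checking each $\lpropret{\cdot}$ against the expected pattern, and (ii) supplies, as function arguments, purely functional values realising the inner opponent behaviour encoded in $M$, responding to each proponent call exactly as prescribed by $\nextmove{M}{\cdot}$. Legality of $M$ (at most one next opponent move per proponent-ending trace, and each $\alpha$ used once) is what lets these responder functions be written as deterministic \lang terms. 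By construction $E_a\hole[e_1]\trm$. If $E_a\hole[e_2]$ also converged, then since the context enforces opponent responses drawn solely from $M$, the interaction would complete the top-level trace $t$ with memory $M$, giving $(t,M)\in\sem{e_2}$ and contradicting the choice; determinacy of the enforced run (\cref{lem:Mdet}, using legality of the combined memory) rules out any alternative completion. Hence $E_a\hole[e_2]\not\trm$, witnessing $e_1\not\cxteq e_2$.

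I expect the definability step in the completeness direction to be the main obstacle. The difficulty is to show that every legal memory $M$ — which packages up the opponent's entire innocent, well-bracketed, knowledge-constrained inner behaviour — is realisable by genuine purely functional context code, and that the resulting context detects a difference in the $M$-component rather than merely in the top-level trace $t$. This requires carefully threading the opponent knowledge (the $\vec v$ stacks and the $A$-map) through the constructed functions so that each context-generated function has access to exactly the proponent values available when the corresponding $\alpha$ was introduced (as in \cref{ex:fix-curried-ineq}), while still behaving deterministically. The soundness correspondence, though more routine, also requires care to verify that the knowledge-dropping at top-level transitions (rules \iref{propretbarb} and \iref{opappbarb}) is consistent with how an applicative context genuinely loses access to earlier arguments.
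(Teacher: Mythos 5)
Your overall reduction (via \cref{lem:somelemma} and \cref{prop:CL}) and your soundness direction follow the paper's plan in spirit: the paper also proves soundness by composing runs of the expression with runs of its context, except that it performs the composition on an auxiliary \emph{game}-LTS in which all moves, including inner ones, label transitions --- precisely because composition is awkward on the main LTS, whose inner interaction is hidden inside $M$ --- and then transports the result back via a translation between the two systems. Your proposal to compose an applicative context directly against the main LTS is the same idea with heavier bookkeeping, and is not where the problem lies.

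The genuine gap is in your completeness direction, at the step ``if $E_a\hole[e_2]$ also converged, then \ldots the interaction would complete the top-level trace $t$ with memory $M$, giving $(t,M)\in\sem{e_2}$.'' This inference is false. Convergence of $E_a\hole[e_2]$ only produces some $(t,M')\in\sem{e_2}$ whose opponent responses \emph{follow} $M$; since $e_2$ may exercise strictly fewer of the context's behaviours than $e_1$ did, $M'$ can be a proper subset of $M$, and then there is no contradiction with $(t,M)\notin\sem{e_2}$. \cref{lem:Mdet} cannot repair this: it identifies two runs of the \emph{same} configuration whose memories are jointly legal; it cannot force the $e_2$-run to realise all of $M$. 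Concretely, take $e_1=\lambda f.\,(\lambda x.\,0)\,(f\,())$ and $e_2=\lambda f.\,0$ at type $(\Unit\to\Int)\to\Int$, and pick $(t,M)\in\sem{e_1}\setminus\sem{e_2}$, so that $M$ consists of the inner call $\lpropapp{\alpha_f}{()}$ and its extension by the return $\lopret{5}$ (say). Your synthesised context applies the tested term to $\lambda z.\,5$ and tests the result against $0$: it converges on \emph{both} terms, so your argument stalls on this pair, even though the terms are distinguishable --- but only by the context arising from the \emph{other} difference, $(t,\emptyset)\in\sem{e_2}\setminus\sem{e_1}$, whose responder function diverges. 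Hence the choice of witnessing pair is not a ``without loss of generality'' matter, and a single definability-plus-contradiction step cannot conclude. The missing idea is exactly what the paper's proof is organised around: a distinguishing context can only ever force the other term into a run whose memory is \emph{contained} in the given one, so contextual equivalence directly yields only the mutual domination $\sem{e_1}\leq\sem{e_2}\leq\sem{e_1}$, where $\sem{e}\leq\sem{e'}$ means every $(t,M)\in\sem{e}$ is matched by some $(t,M')\in\sem{e'}$ with $M'\subseteq M$ (cf.\ the remark following \cref{theorem:SC}; this is also why the bisimulation clause is the contravariant $M_{C_2}\subseteq M_{C_1}$, and why the paper's game-LTS completeness lemma establishes $\OVs(\C_{e_1})\leq\OVs(\C_{e_2})\leq\OVs(\C_{e_1})$ rather than equality). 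Equality is then recovered by a sandwiching argument: from $(t,M_1)\in\sem{e_1}$ obtain $(t,M_2)\in\sem{e_2}$ with $M_2\subseteq M_1$, then $(t,M_3)\in\sem{e_1}$ with $M_3\subseteq M_2$, and determinacy applied to two runs of the \emph{same} term $e_1$ (\cref{lem:Mdet}, respectively $O$/$P$-determinacy in the game-LTS) forces $M_3=M_1$, hence $M_1=M_2$. Without this step your completeness direction does not go through.
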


\begin{remark}
Following~\cite{ChurchillEtal2020}, we can define $\sem{e_1}\leq\sem{e_2}$ to hold if
$\forall (t,M_1)\in\sem{e_1}.\,\exists M_2.\, (t,M_2)\in\sem{e_2} \wedge M_2\subseteq M_1$. Then, \Cref{lem:somelemma} can be sharpened to its similarity variant, which would lead to full abstraction of normal-form similarity.
\end{remark}
%
%
%


  \section{Full Abstraction}
  \label{sec:fa}
  To prove that the LTS is sound and complete we use an extended LTS based on operational game semantics~\cite{Laird07}. The latter differs from our main LTS in that proponent and opponent can play the same kinds of moves, and in particular they can pass fresh function names to the other player, or apply functions of the other player by referring to their corresponding names. This duality in roles allows for the modelling of both expressions and contexts.
Moreover, all moves are recorded in the trace, not just top-level ones, which in turn enables us to compose two LTS's corresponding respectively to an expression and its context.

We shall call this the \emph{game-LTS}, whereas the main LTS shall simply be \emph{the/our LTS}. We shall be re-using some of our main LTS terminology here, for example traces will again be sequences of moves, albeit of different kind of moves. This is done for notional economy and we hope it is not confusing.

\subsection{The game-LTS}

We start by introducing an enriched notion of trace. Traces shall now consist of \emph{moves} of the form:
\begin{align*}
\text{Moves} &&  m \ &::= \ p\mid o\\
\text{Proponent moves} &&  p \ &::= \ \Epropapp{\al}{D}{\vec\be}\mid \Epropret{D}[\vec\be]\\
\text{Opponent moves} &&  o \ &::= \ \Eopapp{\be}{D}{\vec\al}\mid \Eopret{D}[\vec\al]
\end{align*}
where $\al,\be$ (and variants thereof) are sourced from disjoint sets
$\ONs$ and $\PNs$  
of \emph{opponent} and \emph{proponent names} respectively.
Names represent abstract functions and
are used to abstract away the functions that a context and an expression are producing in a computation.
We shall often be abbreviating ``proponent'' and ``opponent'' to $P$ and $O$ respectively and write, for instance, ``$O$-moves'' or ``$P$-names''.

A \boldemph{complete trace} is then given by the following grammar.
\begin{align*}
  CT \ &\to \ CT_P\mid CT_O\\
  CT_P \ &\to \ \Epropret{D}[\vec\be]\ CT_{OP}\\
  CT_O \ &\to \ \Eopret{D}[\vec\al]\ CT_{PO}\\
  CT_{OP} \ &\to \ \emptytrace\mid \Eopapp{\be}{D}{\vec\al}\ CT_{PO}\, \Epropret{D}[\vec\be]\ CT_{OP}\\
  CT_{PO} \ &\to \ \emptytrace\mid \Epropapp{\al}{D}{\vec\be}\ CT_{OP}\, \Eopret{D}[\vec\al]\ CT_{PO}
\end{align*}
A \emph{trace} is a prefix of a complete trace.
  A trace $t$ is called \emph{legal} if it satisfies these conditions:
  \begin{itemize}
\item
  for each $t'p\sqsubseteq t$ with  $p=\Epropapp{\al}{D}{\vec\be}$ or $p=\Epropret{D}{\vec\be}$:
  \begin{itemize}
    \item 
   $\vec\be$ do not appear in $t'$\,---\,we say that move $p$ \emph{introduces} each $\be\in\vec\be_i$\,---\,and
 \item
   there is some move $o'$ in ${t'}$ that introduces $\al$;
 \end{itemize}
  \item for each $t'o\sqsubseteq t$ with $o=\Eopapp{\be}{D}{\vec\al}$ or $o=\Eopret{D}{\vec\al}$:
    \begin{itemize}
    \item $\vec\al$ do not appear in $t'$\,---\,we say that move $o$ \emph{introduces} each $\al_i\in\vec\al$\,---\,and
      \item 
  there is some move $p'$ in ${t'}$ that introduces $\be$.
  \end{itemize}
 \end{itemize}
 Thus, in a legal trace all applications refer to names introduced earlier in the trace. Put otherwise, all function calls must be to functions that are available when said calls are made.
 We say that an application
$\Eopapp{\be}{D}{\vec\al}$ (or $\Epropapp{\al}{D}{\vec\be}$)
is \emph{justified} by the (unique) earlier move that introduced $\be$ (resp.\ $\al$).
On the other hand, a return is justified by the call to which it returns.
In a legal trace, all call moves are justified.

Due to the modelled language being functional, not all names are visible to the players (i.e.\ proponent and opponent) at all times. For example if  opponent makes two calls to proponent function $\be$, say first $\Eopapp{\be}{D_1}{\vec\al_1}$ and later $\Eopapp{\be}{D_2}{\vec\al_2}$, the second call will hide from proponent all the trace related to the first one.
This limitation is captured by the notion of \emph{view}.
 Given a legal trace $t$, we define its \emph{$P$-view} $\pview{t}$ and \emph{O-view} $\oview{t}$ respectively as follows:
\begin{align*}
  \pview{t} &=\begin{cases}
    t & \text{if }|t|\leq1\\
    \pview{t'}p & \text{if }t=t'p\\
    \pview{t'}po & \text{if $t=t'pt''o$ and $o$ is justified by $p$} 
    \end{cases}
  \\
  \oview{t} &=\begin{cases}
    t & \text{if }|t|\leq1\\
    \oview{t'}o & \text{if }t=t'o\\
    \oview{t'}op & \text{if $t=t'ot''p$ and $p$ is justified by $o$}
    \end{cases}
\end{align*}
  We will focus on traces where each player's moves are uniquely determined by their current view. This corresponds to game-semantics \emph{innocence} (cf.\ \cite{HO}).

In the following definitions
  we employ basic elements from nominal set theory~\cite{nom2} to formally account for names in our constructions. Let us write $\mathcal{N}$ for $\ONs\uplus\PNs$. Finite-support name permutations that respect $O$- and $P$-ownership of names are given by:
  \begin{align*}
    \Perm = \{\pi:\mathcal{N}\xrightarrow{\cong}\mathcal{N}\mid \exists X\subseteq_{\text{finite}}\mathcal{N}.\, \forall  y\in\mathcal{N}\setminus X.\,\pi(y)=y\\ 
{}\land\forall x\in X.\, x\in\ONs\iff\pi(x)\in\ONs \}
  \end{align*}
  Given a trace $t$ and 
  a permutation $\pi$, 
  we write $\pi\cdot t$ for the trace we obtain by applying $\pi$ to all names in $t$. We write $t\sim t'$ if there exists some $\pi$ such that $t'=\pi\cdot t$. The latter defines an equivalence relation, the classes of which we denote by $[t]$: 
  \[
[t] = \{\pi\cdot t\mid \pi\in\Perm\}.
    \] 
  Moreover, we define the sets of $O$-views and $P$-views of $t$ (under permutation) as:
  \begin{align*}
\PVs(t) &= \{ \pi\cdot\pview{t'}\mid t'\sqsubseteq t\land\pi\in\Perm\} \\
\OVs(t) &= \{ \pi\cdot\oview{t'}\mid t'\sqsubseteq t\land\pi\in\Perm\}
  \end{align*}
  \begin{definition}\label{def:plays}\normalfont
    A legal trace $t$ is called a \boldemph{play} if: 
    \begin{itemize}
      \item
 for each $t'p,t''o\sqsubseteq t$, the justifier of $p$ (of $o$) is included in $\pview{t'}$ (resp.\ $\oview{t''}$);
\item for all $t_1p_1,t_2p_2,t_1'o_1,t_2'o_2\sqsubseteq t$,
  \begin{itemize}
  \item if $\pview{t_1}\sim \pview{t_2}$ then $\pview{t_1p_1}\sim \pview{t_2p_2}$,
  \item if $\oview{t_1'}\sim \oview{t_2'}$ then $\oview{t_1'o_1}\sim \oview{t_2'o_2}$.
  \end{itemize}\end{itemize}
We refer to the conditions above as \emph{visibility} and \emph{innocence} respectively.
\end{definition}
Visibility and innocence are standard game conditions (cf.~\cite{HO,Mccusker}):
the former corresponds to the fact that an expression (or context) can only call functions in its syntactic context; while the latter enforces purely functional behaviour.
    

\begin{figure*}[t] 

  \[\begin{array}{@{}cllll@{}}
    \irule[PropRet][propret]{
      (D,\vec v) \in \ulpatt(v)
      \\
      t' = t+\Epropret{D}[\vec\be]
      \\
      \conc'=\conc\uplus[\vec\be\mapsto\vec v^{\vec\al}]
    }{
      \pconf{\A}{\conc}{K}{t}{v}{\vec\be',V}[\vec\al]
      \trans{\Epropret{D}[\vec\be]}
      \oconf{\A}{\conc'}{K}{t'}{V}{\vec\be',\vec\be}
    }
    \\[2em]
    \irule[PropTau][proptau]{
      e \red e'
    }{
      \pconf{\A}{\conc}{K}{t}{e}{V}[\vec\al] \trans{\tau} \pconf{\A}{\conc}{K}{t}{e'}{V}[\vec\al]
    }
    \\[2em]
    \irule[PropApp][propappf]{
      (D,\vec v) \in \ulpatt(v)
      \\
      t'= t+\Epropapp{\al}{D}{\vec\be}
      \\
      \A(\al) = \vec\be'
      \\
      \conc' = \conc\uplus[\vec\be\mapsto \vec v^{\vec\al}]
    }{
      \pconf{\A}{\conc}{K}{t}{E[\app {\al_{T_1 \arrow T_2}} v]}{V}[\vec\al]
      \trans{\Epropapp{\al}{D}{\vec\be}} 
      \oconf{\A}{\conc'}{(E\hole_{T_2},\vec\al),K}{t'}{V}{\vec \be',\vec \be}
    }
    \\[2em]
    \irule[OpRet][opret]{
      \nextmove{O}{t}\subseteq_\star[\Eopret{D}[\vec \al]] 
      \\
      (D,\vec \al) \in \ulpatt(T')
      \\
      t'=t+\Eopret{D}[\vec \al]
    }{
      \oconf{\A}{\conc}{(E\hole_{T'},\vec\al'),K}{t}{V}{\vec\be}
      \trans{\Eopret{D}[\vec\al]}
      \pconf{\A\uplus\vec\al^{\vec\be}}{\conc}{K}
      {t'}{E\hole[D{\hole[\vec \al]}]}{V}[\vec\al',\vec\al]
    }
    \\[2em]
    \irule[OpApp][opappf]{
      \nextmove{O}{t}\subseteq_\star [\Eopapp{\be_i}{D}{\vec \al}]
      \\
       \be_i : T_1 \arrow T_2
      \\
      \conc(\be_i)=v^{\vec\al'}
      \\
        (D,\vec \al) \in \ulpatt(T_1)
       \\
        t'=t+\Eopapp{\be_i}{D}{\vec \al}
    }{
      \oconf{\A}{\conc}{K}{t}{V}{\vec\be}
      \trans{\Eopapp{\be_i}{D}{\vec\al}}
      \pconf{\A\uplus\vec\al^{{\vec \be}}}{\conc}{K}{t'}{e}{\vec \be,V}[\vec\al',\vec\al]
    }
  \end{array}\]
  \hrule
  \caption{The Game Labelled Transition System (game-LTS).}\label{fig:lts3}
\end{figure*}

We can now proceed to the definition of the game-LTS. Similarly to the previous section,
we extend the language syntax of \cref{fig:lang} by including  O-names as values.
We define proponent and opponent \emph{game-configurations} respectively by:
  \[
\pconf{\A}{\conc}{K}{t}{e}{V}[\vec\al]
\quad\text{and}\quad
\oconf{\A}{\conc}{K}{t}{V}{\vec \be}
\]
and range over them by $\C$ and variants.
Here:
\begin{itemize}
\item $\A$ is a map which assigns to each (introduced) opponent name a sequence of proponent names.
  We write $\al^{\vec\be}\in \A$ for $\A(\al)=\vec\be$.
The sequence  $\vec\be$ are the proponent (function) names that were available to opponent when the name $\al$ was introduced.
\item Dually, $\kappa$ is a \emph{concretion map} which assigns to each (introduced) proponent name the function that it represents and the opponent names that are available to it.
  \item $t$ is a play recording all the moves that have been played thus far. Given $t$, we define
the partial function $\nextmove{O}{t}$, which we use to impose innocence on $O$-moves, by:
\[
  \nextmove{O}{t} = \{ \pi\cdot o \mid \exists t'o\sqsubseteq t.\, \oview{t}=\pi\cdot \oview{t'}\land t(\pi\cdot o) \text{ a play}\}
\]
When we write $\nextmove{O}{t}\subseteq_\star[o]$, for some $o,t$, we mean that either $o\in\nextmove{O}{t}$ or $\nextmove{O}{t}=\emptyset$.
\item $K$ is a stack of proponent continuations (pairs of evaluation contexts and opponent names $\vec\al$), and $e$ is the expression reduced in proponent configurations. 
    \item $\vec\al$ and $\vec\be$ are sequences of other-player names that are available to proponent and opponent respectively at the given point in the interaction; $V$ is a stack of $\vec\be$'s.
\end{itemize}
Note that we store the full trace in configurations and we use names ($\be$ and variants) to abstract proponent higher-order values. There is no need of an $M$-component as we can rely on the full play.
We call a configuration
   \emph{initial} if it is in one of these forms (called respectively \emph{$P$- and $O$-initial}):\footnote{we write $V=\emptyK$ for an empty stack, and $V=\emptyseq$ for a singleton stack containing the empty sequence; moreover, here and elsewhere, we use underscore ($\_$) to denote any component of the appropriate type.}
  \[
  \C_e =  \pconf{\emptyA}{\cdot}{\emptyK}{\emptytrace}{e}{\emptyseq}[\emptyseq]
    \,\text{ or }\,
\C_{E} = \oconf{\emptyA}{\cdot}{(E\hole_T{:}\Unit,\emptyseq)}{\emptytrace}{\emptyK}{\emptyseq}
\]
and
   \emph{final} if it is in one of these forms  (\emph{$O$- and resp.\ $P$-final}):
   \[
     \oconf{\_}{\_}{\emptyK}{\_}{\emptyK}{\_}
    \quad\text{or}\quad
    \pconf{\_}{\_}{\emptyK}{\_}{()}{\emptyK}[\_].
\]
Note that, by definition of the LTS, a $P$-initial configuration can only lead to $O$-final configurations, whereas 
$O$-initial configurations lead to $P$-final configurations.

\begin{definition}
The game-LTS is defined by the rules in \cref{fig:lts3}.
Given initial configuration $\C$, 
we set:
\[
\CP(\C) = \{ t\in\Tr(\C)\mid t\text{ complete}\}
\]
where we let $\Tr(\C)$ be the set of plays produced by the LTS starting from $\C$.
\end{definition}

We can show that the traces produced by the game-LTS are plays and define a model for \lang based on sets of complete plays, but that would not be fully abstract.
Though presented  in operational form, our game-LTS is equivalent to the (base) game-model of \lang~\cite{PCFv}. Consequently,
if we model expressions by the sets of complete plays they produce, we miss even simple equivalences like $\lambda f.\,f()\equiv\lambda f.\,f(f())$\,---\,plays are too intentional and do not take into account the limitations of functional contexts. To address this, one can use a semantic quotient (cf.~\cite{PCFv}) or, alternatively, group the plays of an expression into sets of plays so as to profile functional contexts the expression may interact with (cf.~\cite{ChurchillEtal2020}). Thus, an expression is modelled by a \emph{set of sets of plays}, one for each possible context.
We follow the latter approach, and also combine it with the fact that {applicative tests suffice} (cf.~\cref{prop:CL}). 

\begin{definition}
Given a $P$-starting play $t$, we call a move $m$ of $t$ \boldemph{top-level} if:
\begin{itemize}
\item either $m$ is the initial $P$-return of $t$,
\item or $m$ is an $O$-call justified by a top-level $P$-move,
\item or $m$ is a $P$-return to  a top-level $O$-move.
\end{itemize}
We say that $t$ is \boldemph{top-linear} if each top-level $O$-move in $t$ is justified by the $P$-move that precedes it.
\end{definition}

Hence, top-level moves are those that start from or go to a final configuration.
If $t$ is complete and top-linear then:
\[
t = p_0o_1\cdots p_1\cdots o_n\cdots p_n\quad\text{and}\quad
\oview{t} = p_0o_1p_1\cdots o_n p_n
\]
where each $o_{i+1}$ is justified by $p_i$, each $p_i$ returns $o_i$ ($i>0$), and the $o_i,p_i$ above are all the top-level moves in $t$.
This means that, at the top level of a top-linear play, opponent may only choose one of the functions provided by proponent in their last move and examine it (i.e.\ call it), which precisely corresponds to what an applicative context would be able to do.

We can now present our main results for the game-LTS.
Given initial $P$-configuration $\C$, we define:
\begin{align*}
\OVs(\C) &= \{ \OVs(t) \mid t\in\CP(\C)\}\\
\OVTLs(\C) &= \{ \OVs(t) \mid t\in\CP(\C)\text{ and $t$ top-linear}\}
\end{align*}

  \begin{proposition}[Correspondence]
    Given $\vdash e_1,e_2:T$, $\OVTLs(\C_{e_1})=\OVTLs(\C_{e_2})$ iff $\sem{e_1}=\sem{e_2}$.
  \end{proposition}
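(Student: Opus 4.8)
The plan is to exhibit a bijection $\Phi$ between the data recorded by the two semantics that is \emph{uniform in the expression}, i.e.\ depends only on the combinatorial shape of traces and views and not on $e$ itself. Concretely, I would define $\Phi$ as a map sending each pair $(t,M)\in\sem{e}$ to a set of $O$-views, and establish (i) that $\Phi$ is injective on the union of all such pairs, and (ii) that $\Phi(\sem{e})=\OVTLs(\C_e)$ for every $\vdash e:T$. Granting these, the proposition follows at once: $\sem{e_1}=\sem{e_2}$ implies $\Phi(\sem{e_1})=\Phi(\sem{e_2})$, hence $\OVTLs(\C_{e_1})=\OVTLs(\C_{e_2})$ by (ii); and the converse uses injectivity of $\Phi$ together with (ii).

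The construction of $\Phi$ rests on an operational correspondence between the runs of the two LTSs. First I would show that each top-level run of the main LTS from $C_e$ is mirrored by a top-linear complete play of the game-LTS from $\C_e$, and conversely. The two systems differ only in bookkeeping: the main LTS hides inner call/return moves, storing them as opponent-visible traces in the memory $M$ and policing opponent behaviour through legality and the $\nextmove{M}{t}$ condition, whereas the game-LTS records every move in the full play $t$ and enforces the same discipline through \emph{innocence} and $\nextmove{O}{t}$. The restriction to \emph{top-linear} plays is precisely what matches the main LTS's top-level opponent calls (rule \textsc{OpAppBarb}), which drop accumulated knowledge and let opponent examine only a function just returned by proponent. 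Under this correspondence I would prove two matching facts. The top-level observable trace $t$ of a pair $(t,M)\in\sem{e}$ (its alternating $\PropRet$ and $\OpApp$ moves) corresponds to the $O$-view of the whole top-linear play, namely $p_0o_1p_1\cdots o_np_n$. Dually, each opponent-visible trace in $M$\,---\,which by construction (rule \textsc{PropApp}) begins with a proponent call to an opponent abstract function and records the local sub-interaction it triggers\,---\,corresponds to exactly one $O$-view of the game play, and every $O$-view arises this way; so $M$ and $\OVs(t)$ carry the same information up to the naming quotient $\Perm$. Assembling these yields the required $\Phi$, whose definition manifestly refers only to the shape of moves, establishing uniformity.

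The main obstacle is this last matching step: reconciling the flat memory $M$ with the justification-based set $\OVs(t)$. Three discrepancies must be handled together. First, the main LTS suppresses proponent higher-order values entirely from moves, whereas game $O$-views retain proponent names $\be$ and their cross-move relationships; I would argue that, within a single $O$-view, these names and their pointers are forced by the view's skeleton together with innocence, so the extra data is recoverable and the map stays injective after quotienting by $\Perm$. Second, the index $j$ that the main LTS attaches to reuses of an opponent name $\alpha$, recording the proponent functions available at that use, corresponds exactly to the game-LTS map $\A$ with $\A(\al)=\vec\be$; matching them amounts to checking that both encode the same knowledge attached at introduction. Third, I must verify that the main LTS's determinacy condition on $M$ and the game-LTS's innocence condition $\nextmove{O}{t}$ cut out the same admissible continuations, so that a pair lies in $\sem{e}$ exactly when its image is a genuine set of $O$-views of a play in $\CP(\C_e)$. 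Once these alignments are in place, $\Phi$ is well defined and invertible, and the proposition follows.
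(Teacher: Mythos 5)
Your proposal is correct and takes essentially the same route as the paper: the paper likewise establishes a step-by-step operational correspondence between the plain LTS and the (top-linear) game-LTS via a translation of configurations, and then proves that the pair $(t,M)$ and the set $\OVs(t)$ of a top-linear complete play determine one another up to name permutation (its lemmas showing $M$ equals the set of translated $O$-views, $\hat s$ equals the translated top-level view, and the two inclusion lemmas relating $M$-containment with $\OVs$-containment), which is exactly your $\Phi$ together with its injectivity-up-to-$\Perm$ and image conditions. The only difference is packaging: the paper argues the two directions of the proposition directly from these lemmas rather than assembling them into an explicit bijection.
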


\begin{proposition}[Game-LTS full abstraction]
Given $\vdash e_1,e_2:T$,    $e_1\cxteq e_2$ iff $\OVTLs(\C_{e_1})=\OVTLs(\C_{e_2})$. 
\end{proposition}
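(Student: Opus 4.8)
The plan is to prove the two implications separately, in each case reducing via \cref{prop:CL} to applicative contexts $E_a$, and relying on a composition property of the game-LTS: since the game-LTS models both expressions (as $P$-configurations $\C_e$) and contexts (as $O$-configurations $\C_{E}$), one can run $\C_e$ against the configuration $\C_{E_a}$ induced by an applicative context and recover the reduction of the composite program $E_a[e]$. The adequacy statement I would first establish is that $E_a[e]\trm$ holds iff this interaction produces a \emph{complete} play, i.e.\ reaches a final configuration returning $()$. Because an applicative context consumes each returned value immediately (by applying, projecting, or comparing it), every play arising from such an interaction is \emph{top-linear}, and, being a pure functional context, $E_a$ is \emph{innocent}, so its moves are fully determined by its $O$-views. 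This is exactly what makes $\OVTLs(\C_e)$, the sets of $O$-views of top-linear complete plays, the correct invariant to compare.

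For soundness (that $\OVTLs(\C_{e_1})=\OVTLs(\C_{e_2})$ implies $e_1\cxteq e_2$) I would fix an applicative $E_a$ with $\vdash E_a[e_i]:\Unit$ and observe that $E_a$ determines an innocent $O$-strategy, i.e.\ a set of $O$-views. By the adequacy statement, $E_a[e_i]\trm$ iff $e_i$ admits a complete top-linear play whose $O$-views are the ones supplied by $E_a$; equivalently, iff some $U\in\OVTLs(\C_{e_i})$ is matched by the strategy of $E_a$. Since the two terms share the same $\OVTLs$, the existence of such a $U$ coincides for $e_1$ and $e_2$, so $E_a[e_1]\trm$ iff $E_a[e_2]\trm$, and \cref{prop:CL} yields $e_1\cxteq e_2$.

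For completeness I argue the contrapositive by \emph{definability}. Assuming $\OVTLs(\C_{e_1})\neq\OVTLs(\C_{e_2})$, I pick (without loss of generality) some $U\in\OVTLs(\C_{e_1})\setminus\OVTLs(\C_{e_2})$, witnessed by a complete top-linear play $t_1\in\CP(\C_{e_1})$ with $\OVs(t_1)=U$. The $O$-views in $U$ constitute a finite innocent $O$-strategy, which the definability argument realizes as a concrete applicative \lang context $E_a$. Against $e_1$ this context drives the interaction along $t_1$ to completion, so $E_a[e_1]\trm$. Against $e_2$, innocence forces every $O$-move to come from $U$ and every $P$-move to come from $e_2$'s strategy; were the interaction to complete, it would yield a complete top-linear play of $e_2$ whose $O$-view set lies within $U$, and one must argue this set is in fact exactly $U$, contradicting $U\notin\OVTLs(\C_{e_2})$, so the interaction diverges and $E_a[e_2]\not\trm$.

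I expect the main obstacle to be this definability step together with the contravariant, nesting subtlety just flagged: realizing an innocent $O$-strategy (a set of $O$-views) as an actual applicative context, and ruling out that $e_2$ completes against it via a \emph{proper} sub-strategy $U'\subsetneq U$. Resolving the latter requires choosing the witness $U$ (or constructing $E_a$) so that termination genuinely needs all of $U$; following \cite{ChurchillEtal2020}, this is where the structure of the families $\OVTLs(\C_{e_i})$ and the contravariant comparison (mirroring the $M$-containment of the bisimulation and \cref{lem:Mdet}) come into play. Combined with the Correspondence Proposition above and \cref{lem:somelemma}, this proposition then yields \cref{theorem:SC}.
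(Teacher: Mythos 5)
Your overall route is the paper's own: soundness via composition/adequacy of the game-LTS together with innocence of applicative contexts, and completeness via definability of innocent $O$-strategies as applicative contexts. The soundness half is acceptable as a plan; the step you compress into ``matched by the strategy of $E_a$'' is carried out in the paper by an explicit alternating-innocence argument: if $t\in\CP(\C_{e_1})$ and $\hat t\in\CP(\C_{e_2})$ have the same $O$-view sets and the dual of $t$ drives $\C_{E_a}$ to the final $()$-return, then $P$-innocence of the context and $O$-innocence of the shared view set force the duals of $t$ and $\hat t$ to have the same $P$-views, so the dual of $\hat t$ also completes, and the composition lemma gives $E_a[e_2]\trm$.

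The completeness half, however, has a genuine gap, and it is exactly the one you flag. After building $E_a$ from a witness $U\in\OVTLs(\C_{e_1})\setminus\OVTLs(\C_{e_2})$, nothing prevents $E_a[e_2]$ from terminating along a complete play $s$ with $\OVs(s)\subsetneq U$, and your proposed remedy\,---\,``choosing the witness $U$ so that termination genuinely needs all of $U$''\,---\,is not an argument: the existence of such a witness is precisely what must be proved, and it cannot be secured by a more careful choice among the plays witnessing elements of the set difference. The paper closes this hole at the level of the whole families rather than of a single play. It first shows, applying definability in \emph{both} directions of the contextual equivalence, that $e_1\cxteq e_2$ implies the mutual containment $\OVs(\C_{e_1})\leq\OVs(\C_{e_2})\leq\OVs(\C_{e_1})$, where $\Sigma\leq\Sigma'$ means that every $\F\in\Sigma$ contains some $\F'\in\Sigma'$. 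It then proves a separate lemma that this mutual containment already forces $\OVs(\C_{e_1})=\OVs(\C_{e_2})$, and this lemma is where the sub-strategy issue dies. Its proof is a determinacy argument: if $S_2\subseteq T\subseteq S_1$ with $S_1,S_2\in\OVs(\C_{e_1})$, $T\in\OVs(\C_{e_2})$ and $S_1\neq S_2$, then the two plays of $e_1$ witnessing $S_1$ and $S_2$ first differ at an $O$-move (by $P$-determinacy of the LTS), which yields two distinct $O$-moves following the same $O$-view inside $S_1$, contradicting $O$-innocence. Hence $S_1=S_2=T$. Read contrapositively, this is what guarantees a witness $U\in\OVs(\C_{e_1})$ no subset of which lies in $\OVs(\C_{e_2})$, which is what your distinguishing context needs. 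Without this lemma (or an equivalent), your contrapositive argument does not go through; with it, your plan coincides with the paper's proof, modulo the final bookkeeping step transferring equality of $\OVs$ to equality of $\OVTLs$ using the fact that a play whose $O$-views are all top-linear is itself top-linear.
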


\cref{theorem:SC} follows from the two results above.
For the first result we build a translation from the game-LTS to the (plain) LTS that forms a certain bisimulation between the two systems.
To prove full abstraction of the game-LTS we use standard and operational game semantics techniques (cf.~\cite{HO,Laird07,TzeGhica}) along with the characterisation of PCF equivalence by sets of $O$-views presented in~\cite{ChurchillEtal2020}. 


  \section{Prototype Implementation}
  \label{sec:tool}
  We implemented the LTS with symbolic higher-order transitions in a prototype bisimulation checking tool for programs written in an ML-like syntax for \lang. Our tool implements a Bounded Symbolic Execution--via calls to Z3--for a big-step bisimulation of the LTS; the tool was developed in OCaml\footnote{\url{https://github.com/LaifsV1/pcfeq}}.

The tool performs symbolic execution of base type values through an extension of the LTS to include a \textit{symbolic environment} $\sigma : \Val \to \Val$ that accumulates constraints on \textit{symbolic constants} $\varkappa \in \Val$ that extend the set of values. Symbolic constants are of base type and may only be introduced by opponent moves (arguments and return values) and by reducing expressions that involve symbolic constants; their semantics follows standard symbolic execution. 
The exploration is performed over \textit{configuration pairs} $\langle C_1, C_2, M, \sigma, k \rangle$ of bisimilar term configurations $C_1$ and $C_2$, shared memory $M$ and given bound $k$.
This shared memory is the combination of memories in $C_1$ and $C_2$. When configurations $C_1$ and $C_2$ are final, equivalence requires $M_{C_1}=M_{C_2}=M$.
Being a symbolic execution tool, our prototype implementation is \textit{sound} (reports only true positives and true negatives) and \textit{bounded-complete} since it exhaustively and precisely explores all possible paths up to the given bound, which defines the number of consecutive function calls allowed.

Because of the infinite nature of proving equivalence\,---\,and even of disproving equivalence\,---\,of pure higher-order programs, a bounded exploration often does not suffice for automatic verification. For this reason, we implement simple enhancements that attempt to prune the state-space and/or prove that cycles have been reached to finitise the exploration for several examples in our testsuite. 
We currently have not implemented more involved up-to bisimulation enhancements, perhaps guided by user annotations, which we leave for future work. 
In particular we make use of: 
\begin{itemize}

\item \emph{Memoisation}, which caches configuration pairs. When bounded exploration reaches a memoised configuration pair, the tool does not explore any further outgoing transitions from this pair; these were explored already when the pair was added to the memoisation set.

\item \emph{Identity}, which deems two configurations in a pair equivalent when they are syntactically identical; no further exploration is needed in this case.

\item \emph{Normalisation}, which renames bound variables and symbolic constants before comparing configuration pairs for membership in the memoisation set. This also normalises the symbolic environments $\sigma$ in the configuration pairs.

\item \emph{Proponent call caching}, which caches proponent calls once the corresponding opponent return is reached. When the same call (same function applied to the same argument) is reached again on the same trace, it is immediately followed by the cached opponent return move. Performing this second call would not have materially changed the configuration, as the behaviour of the call is determined by the traces in the memory $M$ of the configuration.

\item \emph{Opponent call skipping}, which caches opponent calls once the corresponding proponent return is reached. If the same call is possible from later configurations with the same opponent knowledge, the call is skipped as the opponent cannot increase its knowledge by repeating the same call.

\item \emph{Stack-based loop detection}, which searches the stack component $K$ of a configuration for nested identical proponent calls. When this happens, it means that the configuration is on an infinite trace of interactions between opponent and proponent which will keep applying the same function indefinitely. We deem these configurations diverging.
\end{itemize}

Running our tool on the examples in this paper on an Intel Core i7 1.90GHz machine with 32GB RAM running OCaml 4.10.0 and Z3 4.8.10 on Ubuntu 20.04 we obtain the following three-trial average results: \cref{ex:pure}, deemed equivalent, 8ms;
\cref{ex:u's}, inequivalent, 3ms; \cref{ex:v's}, inequivalent, 4ms; \ref{ex:fix-curried-ineq}, inequivalent, 3ms.
For the entire benchmark of thirty seven program pairs, we successfully verify six equivalences and nineteen inequivalences with twelve inconclusive results in 471ms total time.
The complete set of examples is available in our online repository.






  \section{Conclusion}
  \label{sec:concl}
  We have proposed a technique which combines operational and denotational approaches in order to provide a (quotient-free) characterisation of contextual equivalence in call-by-value PCF. 
This technique provides the first fully abstract normal form bisimulation for this language.
We have justified several of our choices in designing our LTS via examples, and we believe the LTS is succinct in  not carrying more information than needed for completeness. 
Our technique gives rise to a sound and complete  technique for checking of \lang program equivalence, which we implemented into a bounded bisimulation checking tool.

After testing our tool implementation, we have found it useful for deciding instances of the equivalence problem. This is particularly true for inequivalences: the tool was able to verify most of our examples, including some which were difficult to reason about even informally.
Further testing and optimisation of the implementation are needed in order to assess its practical relevance, particularly on larger examples. Currently, the main limitation for the tool is the difficulty in establishing equivalences, as these typically entail infinite bisimulations and are hard to capture in a bounded manner. To address this, we aim to develop  up-to techniques~\cite{PousS11} and
(possibly semi-automatic) abstraction methods in order to finitise the examined bisimulation space.



  \bibliographystyle{splncs04}

  \bibliography{references}

\begin{thebibliography}{10}
\providecommand{\url}[1]{\texttt{#1}}
\providecommand{\urlprefix}{URL }
\providecommand{\doi}[1]{https://doi.org/#1}

\bibitem{applicative90}
Abramsky, S.: The Lazy Lambda Calculus, p. 65–116. Addison-Wesley Longman
  Publishing Co., Inc., USA (1990)

\bibitem{AJM}
Abramsky, S., Malacaria, P., Jagadeesan, R.: Full abstraction for {PCF}. In:
  Hagiya, M., Mitchell, J.C. (eds.) Theoretical Aspects of Computer Software,
  International Conference {TACS} '94. LNCS, vol.~789, pp. 1--15. Springer
  (1994)

\bibitem{PCFvb}
Abramsky, S., McCusker, G.: Call-by-value games. In: Nielsen, M., Thomas, W.
  (eds.) Computer Science Logic, 11th International Workshop, {CSL} '97, Annual
  Conference of the EACSL. LNCS, vol.~1414, pp. 1--17. Springer (1997)

\bibitem{step-indexed}
Appel, A.W., McAllester, D.: An indexed model of recursive types for
  foundational proof-carrying code. ACM Trans. Program. Lang. Syst.
  \textbf{23}(5),  657–683 (sep 2001)

\bibitem{BiernackiLengletNFB2}
Biernacki, D., Lenglet, S.: Normal form bisimulations for delimited-control
  operators. In: Schrijvers, T., Thiemann, P. (eds.) Functional and Logic
  Programming. pp. 47--61. Springer, Berlin, Heidelberg (2012)

\bibitem{BiernackiLP19}
Biernacki, D., Lenglet, S., Polesiuk, P.: A complete normal-form bisimilarity
  for state. In: {FOSSACS} 2019, Held as Part of {ETAPS} 2019. Springer (2019)

\bibitem{ChurchillEtal2020}
Churchill, M., Laird, J., McCusker, G.: A concrete representation of
  observational equivalence for {PCF} (2010),
  \url{http://arxiv.org/abs/1003.0107}

\bibitem{TzeGhica}
Ghica, D.R., Tzevelekos, N.: A system-level game semantics. In: Berger, U.,
  Mislove, M.W. (eds.) Proceedings of the 28th Conference on the Mathematical
  Foundations of Programming Semantics, {MFPS} 2012. Electronic Notes in
  Theoretical Computer Science, vol.~286, pp. 191--211. Elsevier (2012)

\bibitem{GordonApplicative95}
Gordon, A.D.: Functional programming and input/output. Ph.D. thesis, University
  of Cambridge, {UK} (1992)

\bibitem{PCFv}
Honda, K., Yoshida, N.: Game-theoretic analysis of call-by-value computation.
  Theor. Comput. Sci.  \textbf{221}(1-2),  393--456 (1999)

\bibitem{Howe96}
Howe, D.J.: Proving congruence of bisimulation in functional programming
  languages. Information and Computation  \textbf{124}(2),  103--112 (1996)

\bibitem{HO}
Hyland, J.M.E., Ong, C.L.: On full abstraction for {{PCF}:} {I}, {II}, and
  {III}. Inf. Comput.  \textbf{163}(2),  285--408 (2000)

\bibitem{Jaber15}
Jaber, G.: Operational nominal game semantics. In: Pitts, A.M. (ed.)
  Foundations of Software Science and Computation Structures - 18th
  International Conference, FoSSaCS 2015, Held as Part of {ETAPS} 2015. LNCS,
  vol.~9034, pp. 264--278. Springer (2015)

\bibitem{JagadeesanPitcherRiely:aspects}
Jagadeesan, R., Pitcher, C., Riely, J.: Open bisimulation for aspects. In:
  Rashid, A., Ossher, H. (eds.) Transactions on Aspect-Oriented Software
  Development V. pp. 72--132. Springer, Berlin, Heidelberg (2009)

\bibitem{fromapplicative}
Koutavas, V., Levy, P.B., Sumii, E.: From applicative to environmental
  bisimulation. In: Mislove, M.W., Ouaknine, J. (eds.) Twenty-seventh
  Conference on the Mathematical Foundations of Programming Semantics, {MFPS}
  2011. Electronic Notes in Theoretical Computer Science, vol.~276, pp.
  215--235. Elsevier (2011)

\bibitem{hobbit}
Koutavas, V., Lin, Y.Y., Tzevelekos, N.: From bounded checking to verification
  of equivalence via symbolic up-to techniques. In: Fisman, D., Rosu, G. (eds.)
  Tools and Algorithms for the Construction and Analysis of Systems. pp.
  178--195. Springer International Publishing, Cham (2022)

\bibitem{KW06esop}
Koutavas, V., Wand, M.: Bisimulations for untyped imperative objects. In:
  Sestoft, P. (ed.) {ESOP} 2006, Held as Part of {ETAPS} 2006. LNCS, vol.~3924,
  pp. 146--161. Springer (2006)

\bibitem{KoutavasW06}
Koutavas, V., Wand, M.: Small bisimulations for reasoning about higher-order
  imperative programs. In: Conference Record of the 33rd ACM SIGPLAN-SIGACT
  Symposium on Principles of Programming Languages, {POPL} 2006. {ACM} (2006)

\bibitem{Laird07}
Laird, J.: A fully abstract trace semantics for general references. In: Arge,
  L., Cachin, C., Jurdzi{\'{n}}ski, T., Tarlecki, A. (eds.) Automata, Languages
  and Programming. pp. 667--679. Springer, Berlin, Heidelberg (2007)

\bibitem{LassenHNF}
Lassen, S.: Bisimulation in untyped lambda calculus:: Böhm trees and
  bisimulation up to context. In: MFPS XV, Mathematical Foundations of
  Progamming Semantics, Fifteenth Conference. vol.~20, pp. 346--374 (1999)

\bibitem{LassenENFB}
Lassen, S.: Eager normal form bisimulation. In: Proceedings of the 20th Annual
  IEEE Symposium on Logic in Computer Science. p. 345–354. LICS '05, IEEE
  Computer Society, USA (2005)

\bibitem{LassenNondet}
Lassen, S.B.: Normal form simulation for {McCarthy}'s {Amb}. In: Proceedings of
  the 21st Annual Conference on Mathematical Foundations of Programming
  Semantics (MFPS XXI). vol.~155, pp. 445--465 (2006)

\bibitem{LassenLevy:nfbistypes}
Lassen, S.B., Levy, P.B.: Typed normal form bisimulation. In: Duparc, J.,
  Henzinger, T.A. (eds.) Computer Science Logic, 21st International Workshop,
  {CSL} 2007, 16th Annual Conference of the EACSL, Lausanne, Switzerland,
  September 11-15, 2007, Proceedings. Lecture Notes in Computer Science,
  vol.~4646, pp. 283--297. Springer (2007)

\bibitem{LassenL07}
Lassen, S.B., Levy, P.B.: Typed normal form bisimulation. In: Computer Science
  Logic. Springer, Berlin, Heidelberg (2007)

\bibitem{LassenLevy:nfbisimpoly}
Lassen, S.B., Levy, P.B.: Typed normal form bisimulation for parametric
  polymorphism. In: LICS. pp. 341--352. IEEE Computer Society (2008)

\bibitem{Loader01}
Loader, R.: Finitary {PCF} is not decidable. Theor. Comput. Sci.
  \textbf{266}(1-2),  341--364 (2001)

\bibitem{Mccusker}
McCusker, G.: Games and full abstraction for a functional metalanguage with
  recursive types. {CPHC/BCS} distinguished dissertations, Springer (1998)

\bibitem{Morris68}
Morris, Jr., J.H.: Lambda Calculus Models of Programming Languages. Ph.D.
  thesis, MIT, Cambridge, MA (1968)

\bibitem{Nickau}
Nickau, H.: Hereditarily sequential functionals. In: Nerode, A., Matiyasevich,
  Y.V. (eds.) Logical Foundations of Computer Science, Third International
  Symposium, LFCS'94, St. Petersburg, Russia, July 11-14, 1994, Proceedings.
  LNCS, vol.~813, pp. 253--264. Springer (1994)

\bibitem{OHearn95}
O'Hearn, P., Riecke, J.: Kripke logical relations and {PCF}. Information and
  Computation  \textbf{120}(1),  107--116 (1995)

\bibitem{Pitts98LR}
Pitts, A.M.: Existential types: Logical relations and operational equivalence.
  In: Larsen, K.G., Skyum, S., Winskel, G. (eds.) Automata, Languages and
  Programming. pp. 309--326. Springer (1998)

\bibitem{nom2}
Pitts, A.M.: Nominal Sets: Names and Symmetry in Computer Science. Cambridge
  University Press, New York, NY, USA (2013)

\bibitem{plotkin77}
Plotkin, G.D.: {LCF} considered as a programming language. Theor. Comput. Sci.
  \textbf{5}(3),  223--255 (1977)

\bibitem{PousS11}
Pous, D., Sangiorgi, D.: Enhancements of the bisimulation proof method. In:
  Advanced Topics in Bisimulation and Coinduction. CUP (2012)

\bibitem{Sangiorgi:lazylambda}
Sangiorgi, D.: The lazy lambda calculus in a concurrency scenario. Information
  and Computation  \textbf{111}(1),  120--153 (1994)

\bibitem{environmental}
Sangiorgi, D., Kobayashi, N., Sumii, E.: Environmental bisimulations for
  higher-order languages. {ACM} Trans. Program. Lang. Syst.  \textbf{33}(1),
  5:1--5:69 (2011)

\bibitem{StorvingLassenPOPL07}
St{\o}vring, K., Lassen, S.B.: A complete, co-inductive syntactic theory of
  sequential control and state. In: Hofmann, M., Felleisen, M. (eds.)
  Proceedings of the 34th {ACM} {SIGPLAN-SIGACT} Symposium on Principles of
  Programming Languages, {POPL} 2007. pp. 161--172. {ACM} (2007)

\bibitem{SumiiPierce07a}
Sumii, E., Pierce, B.C.: A bisimulation for dynamic sealing. Theor. Comput.
  Sci.  \textbf{375}(1-3),  169--192 (2007)

\bibitem{SumiiPierce07b}
Sumii, E., Pierce, B.C.: A bisimulation for type abstraction and recursion. J.
  {ACM}  \textbf{54}(5), ~26 (2007)

\bibitem{Tze09}
Tzevelekos, N.: Full abstraction for nominal general references. Log. Methods
  Comput. Sci.  \textbf{5}(3) (2009)

\end{thebibliography}


  \clearpage
  \onecolumn
  \appendix[Soundness and completeness]
  
  \section{Soundness and Completeness}\label{apx:SC}

\subsection{Soundness of game-LTS}
In this section we show that the game-LTS is sound for contextual equivalence. Our argument proceeds by first introducing two notions of composition on the LTS (so called \emph{internal} and \emph{external}) and then by using compositionality to relate the semantics of an expression-in-context to traces of the expression and the context respectively (cf.~\cite{Laird07}). 

We start off with a definition and a couple of technical lemma on the game-LTS.
If $\C$ is an $O$-configuration, we also define:
\[
  \CP^+(\C) = \{ t\Epropret{()}   \mid \C\xtoo{t}\pconf{\_}{\_}{\emptyK}{\_}{()}{\emptyK}[\_]
\}.
\]
where note above that $t$ must be complete.

    
    \begin{lemma}\label{ref:playsSTD}
      Let $t$ be a play. Then both $\pview{t}$ and $\oview{t}$ are legal traces satisfying visibility.
    \end{lemma}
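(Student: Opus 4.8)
The plan is to prove the statement by induction on the length $|t|$, treating $\pview{t}$ and $\oview{t}$ simultaneously; the two are dual, so I would spell out the $P$-view and note that the $O$-view follows by symmetry. Two facts will be used throughout. First, every prefix of a play is again a play (visibility and innocence are stated at all prefixes), so visibility of $t$ is available at each prefix and the induction hypothesis applies to the strictly shorter traces produced by the view recursion. Second, both $\pview{t}$ and $\oview{t}$ are subsequences of $t$ that retain justification pointers; hence freshness of introduced names is inherited directly from legality of $t$, since a name introduced once in $t$ cannot be introduced twice in a subsequence.

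For legality of $\pview{t}$ it then remains to verify that every justifier retained by a move of the view also lies in the view and precedes it. I would argue this following the two clauses of the view recursion. If $t=t_0\,p$ ends in a $P$-move, then $\pview{t}=\pview{t_0}\,p$; by the induction hypothesis $\pview{t_0}$ is legal, and visibility of $t$ at the prefix $t_0p$ places the justifier of $p$ inside $\pview{t_0}$, a prefix of $\pview{t}$. If $t=t_0\,p\,t_1\,o$ ends in an $O$-move, then by the definition of the $P$-view $o$ is justified by $p$ and $\pview{t}=\pview{t_0}\,p\,o$; again the justifier of $p$ lies in $\pview{t_0}$ by visibility, while the justifier of $o$ is exactly the immediately preceding $p$. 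So in both cases all retained justifiers are present, giving legality of $\pview{t}$; the $O$-view case is dual.

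For visibility of the views the crux is a stability property: prefixes of a $P$-view are again $P$-views, i.e.\ $\pview{v'}=v'$ for every $v'\sqsubseteq\pview{t}$, and dually for $O$-views. I would establish this by a short induction on $|v'|$, using the structural fact already noted that inside a $P$-view every $O$-move is justified by its immediately preceding $P$-move, so the $O$-ending clause of the recursion jumps back exactly one step and reproduces $v'$. Granting stability, the $P$-clause of visibility for $\pview{t}$ is immediate: for $v'p'\sqsubseteq\pview{t}$ the relevant $\pview{v'}$ equals $v'$, and the justifier of $p'$ lies in $v'$ by the legality of $\pview{t}$ proved above. For the $O$-clause, for $v'o'\sqsubseteq\pview{t}$ the justifier of $o'$ is the $P$-move terminating $v'$; computing $\oview{v'}$ (a trace ending in that $P$-move) shows it retains this last move, so the justifier belongs to $\oview{v'}$. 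The $O$-view is handled symmetrically.

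The main obstacle I anticipate is precisely the stability lemma $\pview{v'}=v'$ for prefixes of views, together with the bookkeeping of which player's view each visibility clause refers to: the cross-clauses\,---\,the $O$-clause of visibility for the $P$-view and vice versa\,---\,are where one must invoke the explicit shape of the opposite view rather than stability. All of this is standard in Hyland--Ong style game semantics, and the only care specific to this setting is to check that the name-based notions of introduction and justification used here coincide with the pointer-based ones; since each justifier is uniquely determined by the name it introduces, that identification is routine.
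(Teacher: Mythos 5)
Your proposal follows essentially the route the paper itself takes: the paper's entire proof is a one-line appeal to standard game-semantics techniques (citing~\cite{Mccusker}), and your induction on $|t|$, with the stability property $\pview{v'}=v'$ for prefixes $v'\sqsubseteq\pview{t}$ and the observation that inside a $P$-view every $O$-move immediately follows its justifier, is precisely that standard argument. Your handling of the name conditions of legality (freshness inherited by subsequences; the introducing move retained in the view because visibility of $t$ places it in the relevant $P$-view) and of both visibility clauses for the views is sound, including the cross-clause where you compute $\oview{v'}$ for a prefix of $\pview{t}$ and observe it retains its final $P$-move.

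There is, however, one obligation you do not discharge, and it is exactly the setting-specific one --- not the name-versus-pointer identification you flag at the end. In this paper a \emph{trace} is by definition a prefix of a \emph{complete trace}, so it carries a call/return stack discipline: every return must answer the most recent unanswered call. Hence ``$\pview{t}$ is a legal trace'' asserts, in addition to the two name conditions, that $\pview{t}$ is well-bracketed, and this does not follow from $\pview{t}$ being an alternating subsequence of a well-bracketed $t$: deleting moves from a well-nested sequence can in general leave calls dangling under a later return. The claim is true, but it needs its own step in the induction. For $O$-returns it is immediate from the structure you already isolated: in a $P$-view every $O$-move sits immediately after its justifier, so an $O$-return closes the $P$-call directly preceding it. For a $P$-return $p'$ answering an $O$-call $o'$, you must argue that $o'$ is also the most recent unanswered call \emph{of the view}, which uses the bracketing of the play $t$ itself together with visibility: the segments of $t$ lying strictly between a call and its matching return are complete, and these are exactly what the view recursion excises, so no unanswered call of the view can sit between $o'$ and $p'$. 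Dually for $\oview{t}$. With that point added, your proof is complete and coincides with the argument the paper delegates to the literature.
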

    \begin{proof}
    Proved using standard game semantics techniques (e.g.~\cite{Mccusker}).
      \end{proof}

\begin{lemma}\label{lem:extLTS}
Let $\C$ be an initial configuration and suppose that $\C\xtoo{t} \C'$. Then:
\begin{enumerate}
\item $t$ is a legal trace and if $\C'$ has components $\A,\conc$ then the names in $t$ are precisely $\A\cup\dom{\conc}$;
  \item for any name permutation $\pi$, $\C\xtoo{\pi\cdot t} \pi\cdot \C'$;
\item if $\C'=\oconf{\_}{\_}{\_}{t}{\_}{\vec\be}$ then the $P$-names in $\oview{t}$ are $\vec\be$ (in the same order);
\item if $\C'=\pconf{\_}{\_}{\_}{t}{\_}{\_}[\vec\al]$ then the $O$-names in $\pview{t}$ are $\vec\al$ (in the same order); moreover, for each play $t'o$ we have $\C'\xrightarrow{o}\C''$ for some $\C''$;
\item if $\C\xtoo \C''$ with $\pview{t}=\pview{t'}$ and both $\C',\C''$ are P-configurations not preceded by P-configurations then the expressions of $\C',\C''$ are the same; 
\item $t$ is a play;
  \item if, for some play $\hat t$, $\PVs(t)=\PVs(\hat t)$, then $\C\xtoo{\hat t}\C''$ for some $\C''$.
\end{enumerate}
\end{lemma}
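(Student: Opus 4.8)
The plan is to prove all seven parts by a single induction on the number of transitions in the derivation $\C\xtoo{t}\C'$, with the name-bookkeeping of part (1) and the equivariance of part (2) serving as the backbone on which the remaining parts rest. The base case is an initial configuration with empty trace and empty $\A,\conc$, where every claim holds vacuously. For the inductive step I would isolate the last labelled transition $\C'\trans{m}\C''$ (the intervening $\tau$-steps come from \textsc{PropTau} and alter only the expression component) and check each of the four move rules \textsc{PropApp}, \textsc{PropRet}, \textsc{OpApp}, \textsc{OpRet} in turn.

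For part (1), each rule appends exactly one move to $t$ and simultaneously extends exactly one of $\A$ (via $\A\uplus\vec\al^{\vec\be}$ in the $O$-rules) or $\conc$ (via $\conc\uplus[\vec\be\mapsto\vec v^{\vec\al}]$ in the $P$-rules) by precisely the names that move introduces; legality of a new application move is immediate from the lookup premises ($\A(\al)=\vec\be'$ in \textsc{PropApp}, $\conc(\be_i)=v^{\vec\al'}$ in \textsc{OpApp}), which force the called name to lie already in the domain and hence to have been introduced by an earlier move. Part (2) is pure equivariance: since all side-conditions ($\ulpatt$, type matching, the lookups, and freshness of introduced names) are stable under any $\pi\in\Perm$, each rule instance is carried to a rule instance under $\pi$, and the claim follows transition-by-transition.

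Parts (3) and (4) amount to maintaining the invariant that the available-name component of a configuration is exactly the list of opposite-player names occurring in its view. Here I would check, rule by rule, that the update of $\vec\be$ (resp.\ $\vec\al$) mirrors the update of $\oview{t}$ (resp.\ $\pview{t}$): a \textsc{PropApp}/\textsc{PropRet} move appends the fresh $\vec\be$, and for a call the view-jump in the definition of $\oview{\cdot}$ reinstates precisely the $\vec\be'$ retrieved from $\A(\al)$, matching the stack bookkeeping; the $O$-rules are dual. The receptiveness clause — that from the relevant configuration any $O$-move $o$ extending the play can be performed, $\C'\xrightarrow{o}\C''$ — I would establish by inspecting \textsc{OpApp} and \textsc{OpRet}: when extending the current trace by $o$ yields a play on an available justifier, the guard $\nextmove{O}{t}\subseteq_\star[o]$ is satisfied (either $o$ is the unique innocent continuation, or $\nextmove{O}{t}=\emptyset$), while the remaining premises ($\ulpatt$, typing) hold by construction.

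The substance of the lemma lies in parts (5), (6), (7), which I expect to be the main obstacle. Part (5) is proponent innocence: I would show by induction on $|\pview t|$ that the expression entering a fresh $P$-phase is determined by $\pview{t}$, using that an \textsc{OpApp}/\textsc{OpRet} transition builds the new expression solely from the resumed continuation (fixed by the justifier, hence by the view via part (4)) and from the move's argument $D[\vec\al]$, which is recorded in the move itself. Part (6) then assembles ``$t$ is a play'': legality from part (1); visibility from the view-correspondence of parts (3)--(4), since each justifier then lies in the relevant view; and innocence from determinacy at $P$-configurations together with part (5) for $P$-moves, and from the $\nextmove{O}{\cdot}$ guard for $O$-moves. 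Finally, for part (7) I would replay $\hat t$ against $\C$ step by step: each $O$-move of $\hat t$ is admissible by receptiveness, and each $P$-move is forced to agree with the one the LTS takes, because its $P$-view belongs to $\PVs(t)=\PVs(\hat t)$, so by part (5) the proponent expression — and hence, by determinacy, the move — coincides. The delicate point throughout is keeping the permutation bookkeeping coherent, since the $P$-views and the freshly introduced names of $\hat t$ match those of $t$ only up to some element of $\Perm$; I would thread a single permutation witness through the reconstruction, extending it each time a new name is introduced.
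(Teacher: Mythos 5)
Your proposal is correct and follows essentially the same route as the paper's proof: per-part arguments resting on the LTS's stack discipline and freshness for legality (1), nominal/equivariance reasoning for (2), inductions on $|t|$ and $|\pview{t}|$ with the $V$/$K$ bookkeeping for (3)--(5), assembly of visibility and innocence from the earlier parts plus the $\nextmove{O}{\cdot}$ guard and proponent determinacy for (6), and prefix-by-prefix replay using receptiveness and (5) for (7). The permutation bookkeeping you flag as delicate is exactly what the paper handles via claim (2) and the strong support lemma.
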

\begin{proof}
  For~1,
the fact that $t$ is a legal trace follows by construction of the game-LTS, i.e.\ by the stack discipline and the fact that all names played as arguments are fresh.
Moreover, $\A$ and $\conc$ are populated precisely by names introduced in the trace $t$.

Claim~2 follows by nominal sets reasoning.

For~3, we proceed by induction on $|t|$. The case for $t\leq 1$ is straightforward. Otherwise, suppose that the last move in $t$ is some $p=\Epropret{\_}[\_]$ and $|t|>1$. By the stack discipline (for the $V$ component), we have that the current $V$ component is some $\vec\be,V'$, and that $p$ is returning the O-application that pushed $\vec\be$. By IH, the names in the O-view at that O-application were $\vec\be$, which suffices for our claim.
Finally, suppose that the last move in $t$ is some $p=\Epropapp{\al}{\_}{\_}$ and suppose that $\al$ in this application is decorated as $\al^{\vec\be'}$. By IH, at the point of introduction of $\al$ in $t$, the O-view contained precisely $\vec\be'$, which concludes the claim.

For~4, we proceed by induction on $|t|$. The case for $t\leq 1$ is straightforward. 
Otherwise, suppose that the last move in $t$ is some $o=\Eopret{\_}[\_]$ and $|t|>1$. By the stack discipline (for the $K$ component), we have that the current $K$ component is some $(E,\vec\al),K'$, and that  $o$ is returning the P-application that pushed $(E,\vec\al)$. By IH, the names in the P-view at that P-application were $\vec\al$, which suffices for our claim.
Now  suppose that the last move in $t$ is some $o=\Eopapp{\be}{\_}{\_}$ and suppose that $\be$ is mapped to some value decorated as $v^{\vec\al'}$. By IH, at the point of introduction of $\be$ in $t$, the P-view contained precisely $\vec\al'$, which concludes the claim.
Finally, if $t'o$ is a play then $o$ satisfies the $\mathsf{next}_O$ requirements, and if $o$ is an application of some $P$-name $\be$ then the latter is in the last component of $\C'$. Hence $\C'$ can play $o$.

For~5, we do induction on $n=|\pview{t}|$. The cases for $n\leq 1$ are straightforward. Suppose that the last move in $t,t'$ is some $o=\Eopret{\_}[\_]$ and $|\pview{t'}|>1$. By the stack discipline (for the $K$ component), we have that the current $K$ component is some $(E,\vec\al),K'$, and that  $o$ is returning the P-application that pushed $(E,\vec\al)$. Using the IH, and applying Tau rules on the same expression on both sides, we have that the expression $E\hole[\al v]$ triggering that P-application is common in the two cases. Hence, the expressions in $\C',\C''$ are the same.
Finally, suppose that the last move in $t,t'$ is some $o=\Eopapp{\be}{\_}{\_}$ and suppose that $\be$ is mapped to values  $v,v'$ in $\C',\C''$ respectively. By IH (and also applying Tau rules), at the point of introduction of $\be$ there was a common expression in both traces and, therefore, $v=v'$. This concludes the claim.

For~6, by claims~3 and~4 we have that applications refer to names in the corresponding views. 
Moreover, $O$-innocence follows by definition of the $\nextmove{O}{\cdot}$ function. We need to show $P$-innocence. Suppose that $t_1p_1,t_2p_2\sqsubseteq t$ with $\pview{t_1}=\pi\cdot \pview{t_2}$ for some $\pi$. Let $\C\xtoo{t_1}\C_1$ and $\C\xtoo{t_2}\C_2$, with $\C_1,\C_2$ not preceded by $P$-configurations, be the corresponding LTS reductions. By~2, $\C\xtoo{\pi\cdot t_2}\pi\cdot \C_2$ and hence, by~5, $\C_1$ and $\pi\cdot \C_2$ contain the same expressions. Thus, the Tau rules that can be applied on $\C_1$ and $\C_2$ are uniquely defined, and reach to configurations $\C_1',\C_2'$ such that $\C_1'$ and $\pi\cdot \C_2'$ contain the same expressions and are ready to trigger the same P-move, apart possibly from the selection of fresh $P$-names. By strong support lemma, we obtain that $\pview{t_1p_1}\sim\pview{t_2p_2}$.

For~7, we show that $\C$ produces every prefix $t'\sqsubseteq\hat t$, by induction on $|t'|$. If $|t'|\leq1$ then by hypothesis we have that $\C$ produces $t'$. Suppose now $t'=t''p$ with $t''$ non-empty. By IH, $\C\xtoo{t''}\C''$, and suppose that $\C''$ is not preceded by a $P$-configuration. By hypothesis, there are a $\pi$ and a prefix $\tilde t$ of $\pi\cdot t$ such that $\pview{\tilde t}=\pview{t''}$, and say $\C\xtoo{\tilde t}\tilde \C$, with the latter not preceded by a $P$-configuration. Then, by~5, $\C''$ and $\tilde \C$ contain the same expression and, hence, will play the same move $p$, assuming an appropriate choice of $\pi$, as required.
Finally, suppose $t'=t''o$ with $t''$ non-empty. By IH, $\C\xtoo{t''}\C''$ and $\C''$ is an $O$-configuration.
By~4, we have that $\C''\xrightarrow{o}{\C'''}$, as required.
\myqed
\end{proof}

\subsubsection{Semantic Composition}

We start by defining a notion of composition that combines the traces produced by two configurations. These are supposed to correspond to an expression and its context, but for now we will only require that the configurations satisfy a set of compatibility conditions.

Let $\phi$ be a partial bijection from $\ONs$ to $\PNs$. Given a trace $t$
with its names included in $\dom{\phi}\cup\rng{\phi}$, we define its \emph{dual} with respect to $\phi$, written $\bar{t}^\phi$, inductively by
$\bar{\emptyseq}^\phi=\emptyseq$ and:
\begin{align*}
  \overline{\Epropret{D}[\vec\be]}^\phi&=\Eopret{D}[\phi^{-1}(\vec\be)],&
  \overline{\Eopret{D}[\vec\al]}^\phi&=\Epropret{D}[\phi(\vec\al)] \\ 
  \overline{\Epropapp{\al}{D}{\vec\be}}^\phi&=\Eopapp{\phi(\al)}{D}{\phi^{-1}(\vec\be)}&
  \overline{\Eopapp{\be}{D}{\vec\al}}^\phi&=\Epropapp{\phi^{-1}(\be)}{D}{\phi(\vec\al)}.\end{align*}
We say configurations $\C$ and $\C'$ of {opposite polarity} are \emph{$\phi$-compatible} ($\C \asymp_\phi \C'$) if:
\begin{itemize}
\item their names are disjoint: $\A\cap \A'=\dom{\conc}\cap\dom{\conc'}=\emptyset$;
\item their traces are complementary up to $\phi$, i.e.\ $t=\bar{t'}^\phi$;
\item their stacks are compatible, written $(K,V) \asymp_\phi (K',V')$, which means:
\begin{itemize}
\item $\eval = \eval'=V=V'=\varepsilon$; or
\item $\eval = (E,\vec\al),\eval_1$ and $V' = \vec\be,V_1'$, with $\phi(\vec\al)=\vec\be$, and $(\eval_1,V) \asymp_{\phi'} (\eval',V_1')$ with $\phi'\subseteq\phi$; or
\item $V = \vec\be,V_1$ and $\eval' = (E,\vec\al),\eval_1'$, with $\phi(\vec\al)=\vec\be$, and $(\eval,V_1) \asymp_{\phi'} (\eval_1',V')$ with $\phi'\subseteq\phi$;
\end{itemize}
\item $\phi:\A\cup \A'\to\dom{\conc}\cup \dom{\conc'}$ and:
  \begin{itemize}
  \item for each $\al^{\vec\be}\in \A_1$, if $\conc_2(\phi(\al))=v^{\vec\al}$ then $\phi(\vec\al)=\vec\be$;
  \item for each $\al^{\vec\be}\in \A_2$, if $\conc_1(\phi(\al))=v^{\vec\al}$ then $\phi(\vec\al)=\vec\be$;
  \item if $\C_1,\C_2$ have last components $\vec\al,\vec\be$ (or $\vec\be,\vec\al$) then $\phi(\vec\al)=\vec\be$.
  \end{itemize}
\end{itemize}
With these definitions, we proceed to defining different notions of composition.

Suppose $\C \asymp_\phi \C'$. The following rules define the semantic composition of two configurations (symmetric rules $\textsc{App}_R,\textsc{Ret}_R$ are omitted).
\begin{align*}
{\begin{prooftree}
\Hypo{ \C_1 \xrightarrow{\tau} \C_1' }
\Hypo{ \C_2' = \C_2 }
\Infer2[ $\textsc{Int}_L$ ]
{ \C_1 \semcomp_\phi \C_2 \to' \C_1' \semcomp_\phi \C_2' }
\end{prooftree}}
\quad
{\begin{prooftree}
\Hypo{ \C_2 \xrightarrow{\tau} \C_2' }
\Hypo{ \C_1' = \C_1 }
\Infer2[ $\textsc{Int}_R$ ]
{ \C_1 \semcomp_\phi \C_2 \to' \C_1' \semcomp_\phi \C_2' }
\end{prooftree}}
\end{align*}
\begin{align*}
{\begin{prooftree}
\Hypo{ \C_1 \xrightarrow{\Epropapp{\al}{D}{\vec\be}} \C_1' }
\Hypo{ \C_2 \xrightarrow{\Eopapp{\be}{D}{\vec\al}} \C_2' }
\Hypo{ \phi(\al)=\be }
\Hypo{ \vec\al\notin \C_1\land\vec\be\notin \C_2 }
\Infer4[ $\textsc{App}_L$ ]
{ \C_1 \semcomp_\phi \C_2 \to' \C_1' \semcomp_{\phi\uplus[\vec\al\mapsto\vec\be]} \C_2' }
\end{prooftree}}
\end{align*}
\begin{align*}
{\begin{prooftree}
\Hypo{ \C_1 \xrightarrow{\Epropret{D}{\vec\be}} \C_1' }
\Hypo{ \C_2 \xrightarrow{\Eopret{D}{\vec\al}} \C_2' }
\Hypo{ \vec\al\notin \C_1\land\vec\be\notin \C_2 }
\Infer3[ $\textsc{Ret}_L$ ]
{ \C_1 \semcomp_\phi \C_2 \to' \C_1' \semcomp_{\phi\uplus[\vec\al\mapsto\vec\be]} \C_2' }
\end{prooftree}}
\end{align*}
We can show the following.
\begin{lemma}
If $\C_1\asymp_\phi \C_2$ and $\C_1\semcomp_\phi \C_2\to' \C_1'\semcomp_{\phi'} \C_2'$ then $\C_1'\asymp_{\phi'} \C_2'$.
\end{lemma}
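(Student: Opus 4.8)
The plan is to argue by induction-free case analysis on which composition rule fires in the step $\C_1\semcomp_\phi\C_2\to'\C_1'\semcomp_{\phi'}\C_2'$, re-establishing for $(\C_1',\C_2')$ and the (possibly extended) bijection $\phi'$ each of the four clauses of $\asymp$: disjointness of names, complementarity of traces up to $\phi'$, stack compatibility, and the coherence of the $\A$/$\conc$ maps and last components with $\phi'$. The internal rules $\textsc{Int}_L,\textsc{Int}_R$ are immediate: a $\tau$-step is a $\textsc{PropTau}$ transition, which alters only the expression component of one side and leaves $\A,\conc,K,t,V$ and all name-lists untouched, with $\phi'=\phi$; hence every clause is inherited verbatim from $\C_1\asymp_\phi\C_2$.

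For the synchronising rules I would treat $\textsc{App}_L$ in detail, with $\textsc{App}_R$ symmetric and $\textsc{Ret}_L,\textsc{Ret}_R$ entirely analogous. Here $\C_1$ fires $\textsc{PropApp}$ emitting $\Epropapp{\al}{D}{\vec\be}$ and extending $\conc_1$ by $[\vec\be\mapsto\vec v^{\vec\al_1}]$, where $\vec\al_1$ is $\C_1$'s current list of available $O$-names, while $\C_2$ fires the dual $\textsc{OpApp}$ emitting $\Eopapp{\be}{D}{\vec\al}$ with $\be=\phi(\al)$ and extending $\A_2$ by $\vec\al^{\vec\be_2}$, where $\vec\be_2$ is $\C_2$'s list of available $P$-names; and $\phi'=\phi\uplus[\vec\al\mapsto\vec\be]$. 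Disjointness survives because the freshness side-conditions $\vec\al\notin\C_1$ and $\vec\be\notin\C_2$ keep the newly introduced names outside the opposite configuration. Complementarity follows since $\overline{\Eopapp{\be}{D}{\vec\al}}^{\phi'}=\Epropapp{\phi'^{-1}(\be)}{D}{\phi'(\vec\al)}=\Epropapp{\al}{D}{\vec\be}$, while the common prefix uses only old names and so its dual is unaffected by passing from $\phi$ to $\phi'$. Stack compatibility reduces, through the defining clauses of $(K,V)\asymp(K',V')$, to the original stack compatibility of $(\C_1,\C_2)$ together with the identity $\phi(\vec\al_1)=\vec\be_2$ furnished by the last-component clause of $\C_1\asymp_\phi\C_2$.

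The crux, and the step I expect to require the most care, is re-establishing the coherence clauses for $\phi'$: that $\phi'$ remains a bijection $\A_1'\cup\A_2'\to\dom{\conc_1'}\cup\dom{\conc_2'}$ respecting both the knowledge attached to names and the last components. For \emph{old} names this is inherited, as neither side modifies $\conc$ or $\A$ on previously introduced names. For the \emph{new} names two checks remain. First, each fresh $\al_k\in\vec\al$ now lies in $\A_2'$ with $\A_2'(\al_k)=\vec\be_2$, and $\conc_1'(\phi'(\al_k))=v_k^{\vec\al_1}$, so the required condition $\phi'(\vec\al_1)=\vec\be_2$ is exactly the last-component identity already used for the stack. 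Second, the last-component clause for $\C_1'$ (an $O$-configuration with $P$-available names $\vec\be_1',\vec\be$, where $\vec\be_1'=\A_1(\al)$) against $\C_2'$ (a $P$-configuration with $O$-available names $\vec\al_2',\vec\al$, where $\conc_2(\be)=v^{\vec\al_2'}$) demands $\phi'(\vec\al_2')=\vec\be_1'$ and $\phi'(\vec\al)=\vec\be$; the latter holds by construction of $\phi'$, and the former is precisely the concretion-coherence condition of $\C_1\asymp_\phi\C_2$ applied to $\al^{\vec\be_1'}\in\A_1$ with $\conc_2(\phi(\al))=v^{\vec\al_2'}$.

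Finally, the $\textsc{Ret}$ cases run identically, reading $\textsc{PropRet}$/$\textsc{OpRet}$ in place of the applications, except that the matching $\phi(\vec\al_2')=\vec\be_1''$ is now read off the \emph{popped continuation frame} $(E\hole,\vec\al_2')$ of $\C_2$ via the stack-compatibility recursion (which also directly yields $(K_1,V_1)\asymp(K_2,V_2)$) rather than from a concretion entry. In every case, then, each coherence obligation for $\phi'$ either concerns an old name, where it is inherited, or is discharged by exactly one of the invariants that the definition of $\asymp_\phi$ was engineered to maintain; the only genuine content of the argument is lining up, in each rule, which invariant feeds which new obligation, and I expect the bookkeeping of the $\vec\al_1/\vec\be_2$ and $\vec\al_2'/\vec\be_1'$ correspondences to be the sole point demanding attention.
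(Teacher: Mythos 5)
Your proof is correct, and it is in fact more than the paper provides: the paper states this lemma without proof (treating it as routine bookkeeping), and your rule-by-rule case analysis---inheriting all clauses verbatim for $\textsc{Int}_{L/R}$, and for $\textsc{App}_{L/R}$/$\textsc{Ret}_{L/R}$ discharging the new-name obligations via the last-component identity $\phi(\vec\al_1)=\vec\be_2$ and the concretion-coherence (resp.\ popped-frame) conditions---is exactly the verification the paper implicitly relies on. No gaps.
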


\subsubsection{Composite Semantics and Internal Composition}

We now introduce the notion of composing LTS configurations \emph{internally}, which occurs when merging two compatible configurations into a single expression. The composition will ensure that the external functions (i.e.\ $O$-names) of each configuration are closed by the other one.
Hence, we will use extended configurations of the form:
\[
(\A_L,\A_R,\conc_L,\conc_R,\phi,e)\text{ or, for brevity, }(\vec \A,\vec \conc,\phi,e)
\]
such that $\A_L\cap \A_R=\dom{\conc_L}\cap\dom{\conc_R}=\emptyset$ and $\phi:\A_L\cup \A_R\to\dom{\conc_L}\cup\dom{\conc_R}$.
Our aim is to keep extended configurations in sync with composite configurations (i.e.\ those using $\semcomp$), and for this reason it will be useful to extend our expression syntax with an ``external return'' construct:
\[
e \ ::= \ \dots \mid \ret_L(e)\mid \ret_R(e)
  \]
We next define the semantics for extended configurations ($\compsemto$).
  \[\begin{array}{r@{\;\,}l@{\;\,}l}
      (\vec \A,\vec\conc,\phi,e)    & \compsemto & (\vec \A,\vec\conc,\phi,e') \;\ \text{if } e \to e'  \\
      (\vec \A,\vec\conc,\phi,E[\al v]) &\compsemto &
\begin{cases}
          (\vec \A\uplus_R\vec\al,\vec\conc\uplus_L[\vec\be\mapsto \vec v],\phi\uplus[\vec\al\mapsto\vec\be],E[\ret_L(e)]) & \text{if } \al\in \A_L, \app {\conc_L(\phi(\al))} {D\hole[\vec \al]} \funred e \\
          (\vec \A\uplus_L\vec\al,\vec\conc\uplus_R[\vec\be\mapsto \vec v],\phi\uplus[\vec\al\mapsto\vec\be],E[\ret_R(e)]) & \text{if } \al\in \A_R, \app {\conc_R(\phi(\al))} {D\hole[\vec \al]} \funred e
          \end{cases}\\
        (\vec \A,\vec\conc,\phi,E[\ret_L( v)]) & \compsemto & (\vec \A\uplus_L\vec\al,\vec\conc\uplus_R[\vec\be\mapsto\vec v],\phi\uplus[\vec\al\mapsto\vec \be],E[D\hole[\vec \al]]) \\
        (\vec \A,\vec\conc,\phi,E[\ret_R( v)]) & \compsemto & (\vec \A\uplus_R\vec\al,\vec\conc\uplus_L[\vec\be\mapsto\vec v],\phi\uplus[\vec\al\mapsto\vec \be],E[D\hole[\vec \al]]) 
  \end{array}\]
where, in the last four rules, $(D,\vec v) \in \ulpatt(v)$.

We continue by defining the \boldemph{internal composition} of compatible configurations $\C_L \asymp_\phi \C_R$. We define the internal composition $\C_L \confcomp_\phi \C_R$ to be a configuration in our new composite semantics by pattern matching on the configuration polarity and evaluation stacks according to the following rules. 

\paragraph*{Initial Configuration:}
\begin{align*}
\C_L &= \pconf{\emptyset}{\cdot}{\emptyK}{\emptytrace}{e}{\emptyK}[\emptyseq]\\
\C_R &= \oconf{\emptyset}{\cdot}{(E\hole,\emptyseq)}{\emptytrace}{\emptyK}{\emptyseq}\\
\C_L \confcomp \C_R &= (\emptyset,\emptyset,\cdot,\cdot,\cdot,E[e])
\end{align*}

\paragraph*{Interim Configuration (case PO):}
\begin{align*}
\C_L &= \pconf{\A_L}{\conc_L}{K_L}{t}{e}{\vec\be_L,V_L}[\vec\al_L]\\
\C_R &= \oconf{\A_R}{\conc_R}{K_R}{t}{V_R}{\vec\be_R}\\
\C_L \confcomp_\phi \C_R &= (\A_L,\A_R,\conc_L,\conc_R,\phi,(K_L\confcomp_R K_R)[e])
\end{align*}

\paragraph*{Interim Configuration (case OP):}
\begin{align*}
\C_L &= \oconf{\A_L}{\conc_L}{K_L}{t}{V_L}{\vec\be_L}\\
\C_R &= \pconf{\A_R}{\conc_R}{K_R}{t}{e}{\vec\be_R,V_R}[\vec\al_R]\\
\C_L \confcomp_\phi \C_R &= (\A_L,\A_R,\conc_L,\conc_R,\phi,(K_L\confcomp_L K_R)[e])
\end{align*}
where $K_L \confcomp_{L/R} K_R$ is a single evaluation context resulting from the composition of compatible stacks, which we define as follows:
\begin{align*}
\emptyK \confcomp_{L/R} \emptyK &= \hole\\
((E\hole,\vec\al),K_L) \confcomp_L K_R &= (K_L \confcomp_R K_R)[E[\ret_L(\hole)]]\\
K_L \confcomp_R ((E\hole,\vec\al),K_R) &= (K_L \confcomp_L K_R)[E[\ret_R(\hole)]]
\end{align*}
Notice that there is only one case for initial configurations, and that is because the game must start from an opponent-proponent configuration where stacks are empty.

\subsubsection{Bisimilarity of Semantic and Internal Composition}
 We begin by defining (weak) bisimilarity for the semantic and internal composition. A set $\mathcal{R}$ with elements of the form $(X_1,X_2)$, {where $X_1$ is a pair of the form $\C_L\semcomp_\phi \C_R$ and $X_2$ is from the composite semantics}, is an \emph{si-bisimulation} if for all $(X_1,X_2) \in R$:
\begin{itemize}
\item if $X_1 \to' X_1'$ then $X_2 \compsemto^* X_2'$ and $(X_1',X_2') \in \mathcal{R}$;
\item if $X_2 \compsemto X_2'$ then $X_1 \to'^* X_1'$ and $(X_1',X_2') \in \mathcal{R}$.
\end{itemize}
We say that $X_1,X_2$ are \emph{si-bisimilar}, and write $X_1 \simsi X_2$, if there is an si-bisimulation $\mathcal{R}$ such that $X_1 \mathcal{R} X_2$.


{\lemma{\label{bisim}}{Given configurations $\C_L \asymp_\phi \C_R$, it is the case that $(\C_L \semcomp_\phi \C_R) \simsi (\C_L \confcomp_\phi \C_R)$.}}

\cutout{\begin{proof}
We want to show that $\mathcal{R} = \{(\C_1 \semcomp \C_2, \C_1 \confcomp \C_2) \mid \C_1 \asymp \C_2\}$ is a bisimulation. Suppose $(\C_1 \semcomp \C_2, \C_1 \confcomp \C_2) \in \mathcal{R}$. We begin with case analysis on the transitions available to the semantic composite. If $(\C_1 \semcomp \C_2) \to' (\C_1' \semcomp \C_2')$, then $\C_1' \asymp \C_2'$. Now, by cases of the transitions, we prove that composite semantics can be obtained from the semantic composition.
\begin{enumerate}
\item If $(\C_1 \semcomp \C_2) \to' (\C_1' \semcomp \C_2')$ is an ($\textsc{Int}_L$) move, then we have internal moves in the execution of $\C_1$ up to $\C_1'$. Since the composite semantics is concrete and, by construction, equivalent to operational semantics when no methods of opposite polarity are called, we can see that $(\C_1 \confcomp \C_2) \compsemto (\C_1' \confcomp \C_2)$.

\item If $(\C_1 \semcomp \C_2) \to' (\C_1' \semcomp \C_2')$ is a ($\textsc{Call}$) move, then we have that $\C_1 \xrightarrow{\mathtt{call}(m,v)}{\!\!}^\prime \C_1'$ and $\C_2 \xrightarrow{\mathtt{call}(m,v)}{\!\!}^\prime \C_2'$. We thus have two cases: (1) $m$ is defined in $R_1$ and (2) it is in $R_2$. In case (1), we have the following semantics for $\C_1$ and $\C_2$ where the evaluation stacks are not equal:
\begin{align*}
&((m',E')::\eval_1,-,R_1,S_1,\pub_1,\abs_1,k_1,l_1)_o\\
&\quad \xrightarrow{\mathtt{call}(m,v)}{\!\!}^\prime((m,l_1 +1)::(m',E')::\eval_1,mv,R_1,S_1,\pub_1,\abs_1',k_1,-)_p\\
&((m',l_2)::\eval_2, E[mv], R_2, S_2, \pub_2, \abs_2, k_2, -)_p\\
&\quad \xrightarrow{\mathtt{call}(m,v)}{\!\!}^\prime((m,E)::(m',l_2)::\eval_2,-,R_2,S_2,\pub_2',\abs_2,k_2,l_0)_o
\end{align*}
We thus have:
\begin{align*}
&\C_1 \confcomp \C_2 = ((\eval_1 \confcomp \eval_2)[E'^1[\evaltag{E^2[m^2v]}{1,l_2}]],\vec{R},S_1 \cup S_2,\vec{k},\vec{l}) \\
&\C_1' \confcomp \C_2' = ((\eval_1 \confcomp \eval_2)[E'^1[\evaltag{E^2[\evaltag{m^1v}{2,l_1+1}]}{1,l_2}]],\\
&\qquad \qquad \quad \vec{R},S_1 \cup S_2,\vec{k},\vec{l}[l_2\mapsto 0]+_1 1)
\end{align*}
From the composite semantics evaluating $\C_1 \confcomp \C_2$ we have:
\begin{align*}
&((\eval_1 \confcomp \eval_2)[E'^1[\evaltag{E^2[m^2v]}{1,l_2}]],\vec{R},S_1 \cup S_2,\vec{k},\vec{l})\\
&\compsemto ((\eval_1 \confcomp \eval_2)[E'^1[\evaltag{E^2[\evaltag{m^1 \hat{v}}{2,l_1+1}]}{1,l_2}]],\\
&\qquad \qquad \vec{R},S_1 \cup S_2,\vec{k},\vec{l}[l_2\mapsto 0]+_{1}1)
\end{align*}
Since $v = \hat{v}$ by determinism of the operational semantics, we have that $(\C_1 \confcomp \C_2) \compsemto (\C_1' \confcomp \C_2')$. In addition, we can observe that the case for equal evaluation stacks is proven by substituting the initial stacks with equal ones, which results in an empty evaluation context. Similarly, the dual case (2), where $m$ is defined in $R_1$, is identical but with polarities swapped--i.e. shown by the polar complement of $(\C_1 \confcomp \C_2) \compsemto (\C_1' \confcomp \C_2')$.

\item If $(\C_1 \semcomp \C_2) \to' (\C_1' \semcomp \C_2')$ is a ($\textsc{Ret}$) move, then we have that $\C_1 \xrightarrow{\mathtt{ret}(m,v)}{\!\!}^\prime \C_1'$ and $\C_2 \xrightarrow{\mathtt{ret}(m,v)}{\!\!}^\prime \C_2'$. As with the \textsc{Call} case,  if $m \in dom(R_2)$ and stacks are not equal, we have:
\begin{align*}
&((m,E)::\eval_1, - , R_1,S_1, \pub_1, \abs_1, k_1, l_1)_o\\
&\quad \xrightarrow{\mathtt{ret}(m,v)}{\!\!}^\prime(\eval_1, E[v],R_1,S_1,\pub_1,\abs_1', k_1, -)_p\\
&((m,l_2)::\eval_2,v,R_2,S_2,\pub_2,\abs_2,k_2,-)_p\\
&\quad \xrightarrow{\mathtt{ret}(m,v)}{\!\!}^\prime(\eval_2, - , R_2,S_2, \pub_2', \abs_2, k_2, l_2)_o
\end{align*}
Here, we have two cases: $\eval_1 = \eval_2$, and otherwise. We start with the case where $\eval_1 \neq \eval_2$, since the opposite case is a simpler version of it. Again, we have the following composite configurations:
\begin{align*}
&\C_1 \confcomp \C_2 = ((\eval_1 \confcomp \eval_2)[E^1[\evaltag{v^2}{1,l_2}]],\vec{R},S_1 \cup S_2,\vec{k},\vec{l})\\
&\C_1' \confcomp \C_2' = ((\eval_1' \confcomp \eval_2')[E'^2[\evaltag{E^1[v^1]}{2,l_1'}]],\\
&\qquad \qquad \quad \vec{R},S_1 \cup S_2,\vec{k},l_1',l_2)
\end{align*}
where $\eval_1 = (m',l_1')::\eval_1'$ and $\eval_2 = (m',E')::\eval_2$.
	
Now, from the composite semantics, we have:
\begin{align*}
&((\eval_1 \confcomp \eval_2)[E^1[\evaltag{v^2}{1,l_2}]],\vec{R},S_1 \cup S_2,\vec{k},\vec{l})\\
&\compsemto ((\eval_1 \confcomp \eval_2)[E^1[\hat{v}^1]],\vec{R},S_1 \cup S_2,\vec{k},last((\eval_1 \confcomp \eval_2)[E^1[\bullet]]),l_2)\\
&~= ((\eval_1' \confcomp \eval_2')[E'^2[\evaltag{E^1[\hat{v}^1]}{2,l_1'}]],\vec{R},S_1 \cup S_2,\vec{k},l_1',l_2)
\end{align*}

We can observe that $last(E) = l_1'$ since $E$ comes directly from the evaluation stack and is, thus, untagged, and the top-most counter is $l_1'$ since 
\[(\eval_1' \confcomp \eval_2')[E'^2[\evaltag{E^1[\bullet]}{2,l_1'}]] = (\eval_1 \confcomp \eval_2)[E^1[\bullet]]\]

Finally, we have that $k_2 = k_2'$ when returning a value since, from Lemma~\ref{kbounds}, $k$ must always decrease back to its original value after evaluating a method call.

We thus have $(\C_1 \confcomp \C_2) \compsemto (\C_1' \confcomp \C_2')$. As previously, the case for empty stacks is a simpler version of this, while the dual case (2) is the polar complement of the configurations.
\end{enumerate}
	
Having shown that external composition produces composite semantics transitions, we continue with the other direction of the argument, which aims to show that the external composition can be produced from composite semantics transitions. We now derive the corresponding semantic compositions by case analysis on the composite semantics rules.
\begin{enumerate}
\item If we have an untagged transition, or one where the redex involves no names of opposite polarity being called, then we have an exact correspondence with internal moves, since the composite semantics are identical to the operational semantics on closed terms.

\item If the transition involves a method called from an opposite polarity, we have a transition of the form
\[(E[m^i v],\dots,\vec{l}) \compsemto (E[\evaltag{m^{3-i} v}{i,l_{3-i}+1}],\dots,\vec{l}[l_i \mapsto 0]+_{3-i}1)\]
which corresponds to evaluating the semantics on an initial configuration $\C_1 \confcomp \C_2$ with the following cases:
\begin{enumerate}
\item for an OP configuration, we have the following: 
\[\C_1 = (\eval_1, -, R_1, S_1, k_1, l_1)_o\] 
\[\C_2 = (\eval_2, E[m v], R_2, S_2, k_2, -)_p\]
where $\eval_1 = (m',E')::\eval_1'$ and $\eval_2 = (m',l_2)::\eval_2'$. Let us set $E[m^i v] = (\eval_1' \confcomp \eval_2')[E'^1[\evaltag{e^2}{1,l_2}]]$ and $e^2 = E''[m^i v]$, where $m \nin R_2$, $i=2$, and $E''$ is untagged. We therefore have:
\begin{align*}
&((\eval_1' \confcomp \eval_2')[E'^1[\evaltag{e^2}{1,l_2}]],\vec{R},S_1 \cup S_2,\vec{k},\vec{l})\\
&\compsemto (E[\evaltag{m^{1} v}{2,l_1+1}],\vec{R},S_1 \cup S_2,\vec{k},\vec{l}[l_2 \mapsto 0] +_1 1)
\end{align*}
We now want to show that semantically composing the configurations results in an equivalent transition $\C_1 \semcomp \C_2 \to' \C_1' \semcomp \C_2'$. Since this is a $\textsc{Call}$ move, we know that $\C_1 \xrightarrow{\mathtt{call}(m,v)}{\!\!}^\prime \C_1'$ and $\C_2 \xrightarrow{\mathtt{call}(m,v)}{\!\!}^\prime \C_2'$. Evaluating those transitions, we have that
\[\C_1' = ((m,l_1+1)::\eval_1,m v, \dots, k_1, -)_o\]
\[\C_2' = ((m,E'')::\eval_2, -, \dots, k_2, 0)_p\]
which, when syntactically composed, form the configuration 
\[((\eval_1 \confcomp \eval_2)[E''^2[\evaltag{(m v)^1}{2,l_1+1}]], \vec{R}, S_1 \cup S_2, \vec{k}, \vec{l}[l_2 \mapsto 0] +_1 1)\]
We can observe that the resulting configurations are equivalent since $E'' = E''^2$, which follows from $E''[m^i v] = e^2$. Additionally, since
\[(\eval_1' \confcomp \eval_2')[E'^1[\evaltag{E''^2[\bullet]}{1,l_2}]] = (\eval_1 \confcomp \eval_2)[E''^2[\evaltag{\bullet}{2,l_1+1}]]\]
it suffices to show $(m v)^1 = m^1 v$, particularly that $v = v^1$. Now, since the composite semantics ensures that $v$ will be tagged with $1$ when called from a method $m^1$, as it reduces to $e\{v/y\}^1$, we have that $v = v^1$, meaning that the transitions are equal.
			
\item for a PO configuration, the polar complement of case (a) suffices.
			
\item for an initial configuration OP, we have a simpler version of case (a) where the evaluation stacks are equal, resulting in an empty evaluation context $\eval_1' \confcomp \eval_2' = \bullet$.
\end{enumerate}
		
\item If the transition involves a tagged value and is of the form
\begin{align*}
&(E[\evaltag{v}{i,l}],\vec{R},S_1 \cup S_2,\vec{k},\vec{l}) \\
&\compsemto (E[v^{i}],\vec{R},S_1 \cup S_2,\vec{k},\vec{l}[l_{3-i}\mapsto l, l_i \mapsto last(E)])
\end{align*}
then we want to show an equivalence to a \textsc{Ret} move in the semantic composite. As with case (2), we start by defining this transition as the syntactic composite transition $(\C_1 \confcomp \C_2) \compsemto (\C_1' \confcomp \C_2')$. Then, by case analysis on $\C_1 \confcomp \C_2$:
\begin{enumerate}
\item for an OP configuration, we have the following:
\[\C_1 = (\eval_1, -, R_1,S_1, k_1, l_1)_o\] 
\[\C_2 = (\eval_2, v, R_2,S_2, k_2, -)_p\]
where $\eval_1 = (m,E')::\eval_1'$ and $\eval_2 = (m,l_2)::\eval_2'$. Let $E[v] = (\eval_1' \confcomp \eval_2')[E'^1[\evaltag{v^2}{1,l_2}]]$. We thus have:
\begin{align*}
&(E[\evaltag{v^2}{1,l_2}],\vec{R},S_1 \cup S_2,\vec{k},\vec{l})\compsemto(E[v^1],\vec{R},S_1 \cup S_2,\vec{k},last(E),l_2)
\end{align*}
We then show that semantic composition produces an equivalent transition $\C_1 \semcomp \C_2 \to' \C_1' \semcomp \C_2'$. Given we have a \textsc{Ret} move, we know that $\C_1 \xrightarrow{\mathtt{ret}(m,v)}{\!\!}^\prime \C_1'$ and $\C_2 \xrightarrow{\mathtt{ret}(m,v)}{\!\!}^\prime \C_2'$, such that:
\[\C_1' = (\eval_1', E'[v], \dots, k_1, -)_p\]
\[\C_2' = (\eval_2', -, \dots, k_2, l_2)_o\]
where $\eval_1' = (m',l_1')::\eval_1''$ and $\eval_2' = (m',E')::\eval_2''$. Internally composing these resulting configurations, we have:
\[((\eval_1'' \confcomp \eval_2'')[E''[\evaltag{E'^1[v^1]}{2,l_1'}]], \vec{R'}, S_1 \cup S_2, \vec{k}, l_1',l_2)\]
Since $(\eval_1'' \confcomp \eval_2'')[E''[\evaltag{\bullet}{2,l_1'}]] = (\eval_1' \confcomp \eval_2')[\bullet]$, we have that $(\eval_1' \confcomp \eval_2')[E'^1[v^1]]$, from which we have $(\eval_1'' \confcomp \eval_2'')[E''[\evaltag{E'^1[v^1]}{2,l_1'}]] = E[v^1]$, and that $last(E) = l_1'$ since $E_1'$ is untagged. Thus, the transition produces the composition.
			
\item for a PO configuration, we have the polar complement of (a) as previously.
			
\item for an initial OP configuration, we again have a simplification of (a), where the evaluation stacks are equal and the resulting evaluation context is empty.
\end{enumerate}		
\end{enumerate}
	
With this, we are done showing the equivalence of transitions. Lastly, we can observe that $\C$ is final iff $\C'$ is final since they are both leaf nodes generated by equivalent terminal rules. We therefore have $(\C \semcomp \C') \sim (\C \confcomp \C')$.
\end{proof}}
	
\subsubsection{Soundness}

To prove soundness of the game-LTS semantics, we want to show that syntactic composition can be obtained from semantic counterpart and vice versa. We have si-bisimilarity between semantic and internal composition, we only need to show that internal composition is related to syntactic composition under some notion of equivalence. 

\begin{lemma}\label{lem:composition}
  Let $\vdash e:T$ by any expression and  $E$ any evaluation context such that $\vdash E[e]:\mathsf{unit}$,
  and let
    \[
\C_L = \pconf{\emptyset}{\cdot}{\emptyK}{\emptytrace}{e}{\emptyK}[\emptyseq]
\ \text{ and } \
\C_R = \oconf{\emptyset}{\cdot}{(E\hole_T,\emptyseq)}{\emptytrace}{\emptyK}{\emptyseq}.
\]
Then,
  $E[e]\Downarrow$ iff there is a complete play $t$ and some $\phi$ such that
    $t\in\CP(\C_L)$ and $\bar{t}^\phi\Epropret{()}\in\CP^+(\C_R)$.
\end{lemma}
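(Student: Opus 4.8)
The plan is to factor the statement through the two compositions introduced above, using the initial-configuration clause $\C_L\confcomp\C_R=(\emptyset,\emptyset,\cdot,\cdot,\cdot,E[e])$ together with the si-bisimulation of \cref{bisim}. Concretely, I would prove two correspondences and chain them: first that $E[e]\Downarrow$ iff the internal composition reduces under $\compsemto$ to a configuration whose expression is $()$; and second that the internal composition so terminates iff $\C_L$ and $\C_R$ produce dual complete plays as in the statement. Since $\C_L\asymp_\phi\C_R$ holds for the empty $\phi$ (their names are disjoint, their traces are the empty complementary traces, and their empty stacks are compatible), \cref{bisim} gives $(\C_L\semcomp_\phi\C_R)\simsi(\C_L\confcomp_\phi\C_R)$, and the second correspondence is then read off from the semantic composition.

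For the first correspondence I would introduce an erasure $\lfloor\cdot\rfloor$ sending an extended configuration to a plain expression: it drops the maps $\A_L,\A_R$ and the bijection $\phi$, deletes the $\ret_L/\ret_R$ markers, and re-inlines each abstract name by the value it concretises in $\conc_L$ or $\conc_R$. One checks $\lfloor(\emptyset,\emptyset,\cdot,\cdot,\cdot,E[e])\rfloor=E[e]$ and that every $\compsemto$ step erases either to the identity or to a single $\red$ step on $E[e]$; the only non-trivial steps are the four cross-calling rules, each of which, once the concretion maps are expanded, is exactly the $\beta$-redex that $\red$ would contract. The key sub-fact is that decomposing a value by $\ulpatt(v)=(D,\vec v)$ and then re-substituting the concretised functions for the freshly introduced names recovers $v$, so no information is lost at a cross-call. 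As $\compsemto$ is deterministic, this yields $E[e]\Downarrow$ iff $(\emptyset,\emptyset,\cdot,\cdot,\cdot,E[e])\compsemto^*(\vec\A,\vec\conc,\phi,())$, and moreover $E[e]$ diverges iff the $\compsemto$-run is infinite.

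For the second correspondence I would use that si-bisimilarity matches $\to'$ with $\compsemto$ steps and relates joint-final states to terminal $()$-configurations, so the internal composition terminates at $()$ iff the semantic composition $\C_L\semcomp\C_R$ reaches a state in which $\C_L$ is $O$-final and $\C_R$ is $P$-final with expression $()$. Unfolding the rules $\textsc{Int}_L,\textsc{Int}_R,\textsc{App}_{L/R},\textsc{Ret}_{L/R}$, any such terminating run splits into a run of $\C_L$ emitting a trace $t$ and a run of $\C_R$ emitting a trace $t'$, synchronised exactly on the $\textsc{App}$ and $\textsc{Ret}$ steps, with $\phi$ accumulating each introduced name-pair. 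The compatibility-preservation lemma stated above keeps the invariant $\C_L^{(n)}\asymp_{\phi_n}\C_R^{(n)}$, whose trace clause forces $t=\bar{t'}^{\phi}$; since duality is an involution, hence $t'=\bar t^{\phi}$. Since $\C_L$ is $P$-initial its run ends at an $O$-final configuration, so $t$ is complete and $t\in\CP(\C_L)$; since $\C_R$ is $O$-initial with top-level hole of type $\Unit$, its run ends at a $P$-final configuration whose expression is $()$, which records $\bar t^{\phi}\Epropret{()}\in\CP^+(\C_R)$. For the converse I would interleave any given pair of dual complete plays back into a terminating composition run, using part~(4) of \cref{lem:extLTS} to ensure each $O$-move is enabled opposite the matching $P$-move, and determinacy of $P$-configurations for the reverse matching.

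The step I expect to be the main obstacle is the first correspondence: making the erasure argument watertight requires showing that all the bookkeeping of $\compsemto$\,---\,fresh-name generation, the $\ulpatt$ split of passed and returned values into a pattern plus a vector of functions stored in the concretion maps, and the eventual re-inlining\,---\,is semantically transparent, creating and destroying no reductions of $E[e]$. The delicate cases are the cross-calls, where I must verify that substituting the $\ulpatt$-decomposed argument through the concretion maps produces precisely the redex contracted by $\red$, and that divergence of $E[e]$ matches an infinite $\compsemto$-run, so that the ``only if'' direction cannot be spoofed by an infinite trace posing as a complete play.
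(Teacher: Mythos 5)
Your proposal follows exactly the paper's own route: it chains $E[e]\Downarrow$ to termination of the internal composition $\C_L\confcomp\C_R$ (which the paper dispatches as ``by inspection of the composite semantics'' and you flesh out with an erasure argument), transfers this to the semantic composition via the si-bisimilarity of \cref{bisim}, and then reads off the dual complete plays $t$ and $\bar{t}^\phi$ from the definition of $\semcomp$. The extra detail you supply (erasure transparency of $\ulpatt$ and the concretion maps, the compatibility invariant forcing $t=\bar{t'}^\phi$, and \cref{lem:extLTS}(4) for re-interleaving in the converse) is a correct elaboration of the steps the paper leaves implicit, so this is the same proof in expanded form.
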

\begin{proof}
  Suppose $E[e]\Downarrow$, and 
recall that $\C_L\confcomp \C_R=(\emptyset,\emptyset,\cdot,\cdot,\cdot,E[\ret_R(e)])$.
  Then: 
  \begin{enumerate}
  \item By inspection of the composite semantics, we have that $\C_L\confcomp \C_R$ reaches $()$.
	\item By si-bisimilarity (Lemma~\ref{bisim}) we have that $\C_L \semcomp \C_R$ reaches $()$, say with final name bijection $\pi$. 
	\item By definition of semantic composition, we know there are traces $t \in \CP(\C_L)$ and $t'\Epropret{()}\in\CP^+(\C_R)$  such that $t'=\bar{t}^\phi$.
\end{enumerate}
Conversely, suppose there is complete $t\in\CP(\C_L)$ such that $\bar{t}^\phi\Epropret{()}\in\CP^+(\C_R)$. Then:
\begin{enumerate}
	\item By definition of semantic composition we have that $\C_L \semcomp \C_R$ reaches $()$, with final bijection $\phi$.
	\item By si-bisimilarity (Lemma~\ref{bisim}) we have that $\C_L\confcomp \C_R$ reaches $()$.
	\item By inspection of the composite semantics, we obtain that $E[e]$ reaches $()$. 
\end{enumerate}\myqed
\end{proof}

\begin{corollary}[Complete-play Soundness]
Given $\vdash e_1,e_2:T$, if  $\CP(\C_{e_1})=\CP(\C_{e_2})$ then $e_1\cxteq e_2$.
\end{corollary}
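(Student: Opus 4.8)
The plan is to combine the reduction to applicative contexts (\cref{prop:CL}) with the compositionality result (\cref{lem:composition}). First I would invoke \cref{prop:CL}, which reduces the goal $e_1 \cxteq e_2$ to showing that for every applicative context $E_a$ with $\vdash E_a\hole[e_1], E_a\hole[e_2] : \Unit$ we have $E_a\hole[e_1]\trm$ iff $E_a\hole[e_2]\trm$. This step is what makes the composition lemma applicable at all: \cref{lem:composition} only speaks about single-hole \emph{evaluation} contexts (hole in reduction position), whereas the contexts of \cref{def:cxt-equiv} are arbitrary; the applicative contexts sit precisely in the overlap.

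Next I would observe that every applicative context $E_a$ is, up to the standard encoding of projection as a tuple-deconstructing let, a single-hole evaluation context $E$, by checking each production of the $E_a$ grammar against the grammar of $\EC$: the hole, the application $\app E_a v$, the test $\cond{E_a = c}{()}{\bot_\Unit}$ (a nesting of evaluation contexts through an arithmetic comparison and a conditional), and the projection $\pi_i(E_a)$, expanded as $\elet{(\vec x)}{E_a}{x_i}$, all yield evaluation contexts, and evaluation contexts are closed under this form of nesting. Hence $E_a\hole[e_i] = E[e_i]$ for a common $E$, the $O$-initial configuration $\C_E$ of this $E$ is exactly the configuration $\C_R$ appearing in \cref{lem:composition}, and the $P$-initial configuration $\C_L$ there coincides with $\C_{e_i}$.

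I would then apply \cref{lem:composition} twice, once for each $e_i$, with this common $E$: it gives $E[e_i]\trm$ iff there exist a complete play $t$ and a partial bijection $\phi$ with $t \in \CP(\C_{e_i})$ and $\bar{t}^\phi\,\Epropret{()} \in \CP^+(\C_E)$. The point is that the second conjunct $\bar{t}^\phi\,\Epropret{()} \in \CP^+(\C_E)$ is independent of $e_i$, referring only to the context side $\C_E$, so the whole right-hand side depends on $e_i$ solely through the set $\CP(\C_{e_i})$. By the hypothesis $\CP(\C_{e_1}) = \CP(\C_{e_2})$ the two existential predicates are literally the same, whence $E[e_1]\trm$ iff $E[e_2]\trm$. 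As $E$ was an arbitrary applicative context, \cref{prop:CL} yields $e_1 \cxteq e_2$.

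The argument is essentially a bookkeeping combination of the two prior results, so I do not anticipate a deep obstacle. The only point requiring care is the syntactic embedding of applicative contexts into evaluation contexts in the second step\,---\,in particular the projection former, which is not a primitive of the $\EC$ grammar and must be expanded via a let\,---\,together with confirming that the configuration $\C_E$ built from this evaluation context matches $\C_R$ of \cref{lem:composition} exactly, so that the shared $\CP^+$ membership condition really does factor out of both instances.
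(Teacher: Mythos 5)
Your proposal is correct and follows essentially the same route as the paper: the paper's own proof is just the contrapositive, taking a distinguishing evaluation context $E$ with $E[e_1]\Downarrow$ and $E[e_2]\Uparrow$ and using \cref{lem:composition} to exhibit a complete play $t\in\CP(\C_{e_1})$ whose dual lies in $\CP^+(\C_E)$, whence $t\notin\CP(\C_{e_2})$. The only difference is that you spell out the reduction to applicative contexts via \cref{prop:CL} and their embedding into evaluation contexts, a step the paper leaves implicit when it assumes the distinguishing context is an evaluation context.
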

\begin{proof}
  Suppose there is some evaluation context such that $E[e_1]\Downarrow$ and $E[e_2]\Uparrow$. Then, by previous Lemma, there is some $t\in\CP(\C_{e_1})$ and $\bar{t}^\phi\in\CP(\C_{E})$. By the same Lemma, and the fact that $E[e_2]\Uparrow$, we have that $t\notin\CP(\C_{e_2})$.\myqed
\end{proof}

\begin{proposition}[Soundness]
Given $\vdash e_1,e_2:T$, if  $\OVTLs(\C_{e_1})=\OVTLs(\C_{e_2})$ then $e_1\cxteq e_2$.
\end{proposition}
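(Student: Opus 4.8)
The plan is to reduce soundness for $\OVTLs$ to the composition machinery already set up, exploiting that applicative contexts induce exactly the top-linear plays. By \cref{prop:CL} it suffices to show, for every applicative context $E_a$ with $\vdash E_a\hole[e_1]:\Unit$ and $\vdash E_a\hole[e_2]:\Unit$, that $E_a\hole[e_1]\Downarrow$ iff $E_a\hole[e_2]\Downarrow$; by symmetry only one direction is needed. So first I would take $\C_L=\C_{e_1}$ and $\C_R=\C_{E_a}$ as in \cref{lem:composition}, assume $E_a\hole[e_1]\Downarrow$, and obtain a complete play $t\in\CP(\C_{e_1})$ together with a bijection $\phi$ such that $\bar{t}^\phi\Epropret{()}\in\CP^+(\C_{E_a})$.

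The first substantive step is to argue that, because $E_a$ is applicative, the play $t$ is top-linear. An applicative context can only take the value last returned by proponent and either apply it to a closed argument, project it, or test it against a constant, and in every case the resulting top-level $O$-move is justified by the immediately preceding $P$-move. I would establish this by induction on the structure of $E_a$ (equivalently, on the run of $\C_{E_a}$), showing that $\C_{E_a}$ only ever produces top-level $O$-moves whose justifier is the last $P$-move; dualising to the expression side shows that every top-level $O$-move of $t$ is justified by the $P$-move preceding it, i.e.\ $t$ is top-linear. This is precisely what makes the hypothesis applicable: now $\OVs(t)\in\OVTLs(\C_{e_1})$, so from $\OVTLs(\C_{e_1})=\OVTLs(\C_{e_2})$ we obtain a top-linear complete play $t'\in\CP(\C_{e_2})$ with $\OVs(t')=\OVs(t)$.

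It then remains to transfer the context's successful completion from $t$ to $t'$. Since the $P$-views of a dual trace are exactly the duals of the $O$-views of the original, and all of $\OVs,\PVs$ are taken up to name permutation, $\OVs(t')=\OVs(t)$ yields $\PVs(\bar{t'}^{\phi'})=\PVs(\bar{t}^{\phi})$ for a suitably chosen bijection $\phi'$. Because $\C_{E_a}$ is innocent, I would invoke parts~5 and~7 of \cref{lem:extLTS} to replay: as $\C_{E_a}$ already produces $\bar{t}^\phi$ to a $P$-final configuration with body $()$, the equality of $P$-views forces $\C_{E_a}$ to produce $\bar{t'}^{\phi'}$ to the analogous terminal configuration, so $\bar{t'}^{\phi'}\Epropret{()}\in\CP^+(\C_{E_a})$. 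Applying \cref{lem:composition} in the converse direction to $t'$ and this dual play gives $E_a\hole[e_2]\Downarrow$, which (together with its symmetric counterpart and \cref{prop:CL}) yields $e_1\cxteq e_2$.

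The main obstacle I anticipate is this last transfer: one must reconcile the bijections and name permutations so that ``same set of $O$-views'' genuinely lets the \emph{concrete} context $\C_{E_a}$ reach termination along $\bar{t'}^{\phi'}$, and in particular check that the matching of $P$-views propagates all the way to the terminal $P$-final configuration recorded by $\CP^+$, not merely along the interior of the play. Here innocence of the context, as packaged in \cref{lem:extLTS}, is doing the real work. The top-linearity step is comparatively routine, but it is the conceptual crux, since it is exactly what makes the $\OVTLs$ hypothesis\,---\,rather than the stronger $\CP$ hypothesis of the complete-play corollary\,---\,sufficient.
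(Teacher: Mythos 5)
Your proposal is correct and follows essentially the same route as the paper's own proof (which is merely phrased contrapositively): \cref{prop:CL}, \cref{lem:composition} in both directions, top-linearity of the extracted play from applicativity of the context, the $\OVTLs$ hypothesis to obtain a matching complete top-linear play of $e_2$, and \cref{lem:extLTS} plus innocence to replay the context along that play's dual up to the terminal $\Epropret{()}$. The obstacle you flag at the end is exactly the step the paper carries out explicitly: a ping-pong argument along top-level moves, alternating the context's $P$-innocence with $O$-innocence on the shared set of $O$-views, which yields $\pview{\bar{t}^{\phi}}=\pview{\bar{\hat{t}}^{\phi}}$ and hence membership of the dual play in $\CP^+(\C_E)$.
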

\begin{proof}
  Suppose there is some applicative context $E$ such that $E[e_1]\Downarrow$ and $E[e_2]\Uparrow$. Then, by Lemma~\ref{lem:composition}, there is some $t\in\CP(\C_{e_1})\setminus\CP(\C_{e_2})$ and $\phi$ such that $\bar{t}^\phi\Epropret{()}\in\CP^+(\C_{E})$. Since $E$ is applicative, $t$ must be top-linear.
  Taking $S=\{\pi\cdot\oview{t'}\mid t'\sqsubseteq t\}$, we have $S\in\OVTLs(\C_{e_1})$.
Let us suppose that $S\in\OVTLs(\C_{e_2})$. Then, there is $\hat t\in\CP(\C_{e_2})$ with
  $\{\pi\cdot\oview{t'}\mid t'\sqsubseteq \hat{t}\}=S$. Then, by Lemma~\ref{lem:extLTS}(7) we have that $\bar{\hat{t}}^{\phi}\in\Tr(\C_{E})$. Now observe that, since $t,\hat t$ are complete:
  \[
\bar{t}^{\phi} = o_0p_1\dots o_1\cdots p_n\dots o_n\text{ and }
\bar{\hat{t}}^{\phi} = \hat o_0\hat p_1\dots\hat o_1\cdots \hat p_{\hat n}\dots\hat o_{\hat n}
\]
where, for each $i>0$, $o_i$ returns $p_i$ and $\hat o_i$ returns $\hat p_i$. By hypothesis, we have that $o_0=\hat o_0$ and thus P-innocence implies that $p_1\sim\hat p_1$. WLOG we can assume that $p_1=\hat p_1$(since otherwise we can consider permutation variants of $t,\hat t$ which agree on the corresponding fresh names). Again by hypothesis there is a prefix $t'$ of $\bar{t}^\phi$ such that $\pview{t'}=\hat o_0\hat p_1\hat o_1=o_0p_1\hat o_1$. By O-innocence (on $S$) we have
$o_1\sim \hat o_1$ and, WLOG,
$o_1=\hat o_1$. Applying the same reasoning repeatedly we obtain that  $\pview{\bar{t}^{\phi}} =\pview{\bar{\hat{t}}^{\phi}}$, and therefore $\bar{\hat t}^\phi\Epropret{()}\in\CP^+(\C_E)$.
  Hence, by Lemma~\ref{lem:composition} we have that $E[e_2]\Downarrow$, a contradiction. Thus, $S\notin\OVTLs(\C_{e_2})$. 
\myqed
\end{proof}

\subsection{Definability and Completeness for Game-LTS}

We prove completeness via a definability argument which follows the lines of game-semantics definability proofs (e.g.~\cite{HO,Tze09}).
We show that, for each complete play $t$, the set $\OVs(t)$ has a matching evaluation context $E$ such that $\C_E$ realises (the \emph{duals} of) all plays in $\OVs(t)$.

For the definability argument it will be useful to consider open terms, where the open variables are instantiated with (possibly abstract) values. We therefore
redefine initial configurations to be of  the form $\langle\Delta\vdash e:T\rangle$ and  
extend the game-LTS of \cref{fig:lts3} with initialisation moves ($\Eopapp{?}{D}{\vec\al}$) and the initialisation rule:
\[\tag{\textsc{Init}}
  \langle\Delta\vdash e:T\rangle\xrightarrow{\Eopapp{?}{D}{\vec\al}}
  \pconf{\vec\al}{\cdot}{\emptyK}{\emptytrace}{e[\vec\al/\vec x]}{\emptyseq}[\vec\al]\qquad\text{if }(D,\vec\al)\in\ulpatt(T_1*\cdots*T_n)
  \]
  assuming $\Delta=\{x_1:T_1.\dots,x_n:T_n\}$. Accordingly, \emph{extended complete traces} are given by the grammar:
  \[
ECT  \ \to \ \Eopapp{?}{D}{\vec\al}\,CT_P
\]
whereas \emph{extended traces} are prefixes thereof. 
The notions of $O$-view and $P$-view are defined as in plain traces, with the caveat that the second move in an extended trace is justified by the first move (if they exist). Hence, \emph{extended plays} are legal extended traces satisfying the conditions of Definition~\ref{def:plays}.

We denote configurations in the extended LTS by $\K$ and variants for clarity. We write $\CEP(\Delta\vdash e:T)$ for the set of complete extended plays produced by the extended LTS starting from 
$\langle\Delta\vdash e:T\rangle$. We then let
\[
  \EPVs(\Delta\vdash e:T)=\{\pview{t'}\mid t'\sqsubseteq t\in\CEP(\Delta\vdash e:T)\land |t'|\,\text{even}\}.
\]
Note that we require that only even-length $P$-views are in $\EPVs$ (this is for technical convenience).

\begin{definition}
  We call a set of extended plays $\F$ a \emph{viewfunction} if:
  \begin{enumerate}
  \item for all $t\in \F$, $\pview{t}=t$, and $\F$ is even-prefix closed and also closed under name permutations (i.e.\ from $\Perm$);
  \item if $tm_1,tm_2\in \F$ then $tm_1\sim tm_2$;
    \item if $m_1t_1,m_2t_2\in \F$ then $m_1\sim m_2$.
    \end{enumerate}
Given types $\vec T,T$,
    we write $\vec T\vdash\F:T$ if either $\F$ is empty or it contains $\Eopapp{?}{D_0}{\vec\al_0}\Epropret{D_1}[\vec\be_1]$ such that $(D_0,\vec\al_0)\in\ulpatt{T_1*\cdots*T_n}$ and $D_1[\vec\be]:T$.

Finally, we call $\F$ \emph{finite-orbit} if its set of orbits (under permutations):
  \[
\mathcal{O}(\F) = \{ [t] \mid t\in \F\}
\]
is finite. We then set $\|\F\| = |\mathcal{O}(\F)|$.
\end{definition}

Thus, a viewfunction is a set of extended $P$-views representing a deterministic proponent behaviour. Condition~3 in the definition above imposes such a notion of determinacy (\emph{P-innocence}): each next opponent move is determined by the preceding $O$-view.
Condition~4 imposes uniqueness of the initialising move and, combined with $P$-innocence, ensures that typings are unique for non-empty $\F$. 

\begin{proposition}[Definability]\label{prop:definability}
For any finite-orbit $\vec T\vdash \F:T$ there is a term $\Delta\vdash e:T$ evaluation context $E$ such that $\EPVs(\Delta\vdash e:T)={\F}$. 
\end{proposition}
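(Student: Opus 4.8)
The plan is to prove \cref{prop:definability} by induction on the orbit count $\|\F\|$, building the term by peeling off the fixed initial segment of $\F$ and recursing on residual viewfunctions of strictly smaller orbit count. Since residuals may begin with a proponent \emph{call} rather than a return, I would first strengthen the statement: call a \emph{computation viewfunction} any set of extended plays satisfying the three viewfunction conditions whose nonempty members all begin with the same first $P$-move (a return or a call), and prove definability for these. The base case is $\F=\emptyset$, realised by $e=\bot_T$: after the $\textsc{Init}$ move the term diverges, yielding no even-length $P$-view, so $\EPVs(\Delta\vdash\bot_T:T)=\emptyset$.

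For nonempty $\F$ with $\vec T\vdash\F:T$, the uniqueness conditions pin down a single initial move $\Eopapp{?}{D_0}{\vec\al_0}$ and first $P$-return $\Epropret{D_1}[\vec\be_1]$ (up to $\sim$); the $\textsc{Init}$ rule then forces $e$ to run under $[\vec\al_0/\vec x]$ and reduce to a value of ultimate pattern $(D_1,\vec w)$, with each $w_j$ a $\lambda$-abstraction realising $\be_j$. The crucial observation is that the $P$-view \emph{re-roots} at the move introducing $\be_j$: the $P$-view immediately after any $O$-call $\Eopapp{\be_j}{D}{\vec\al}$ equals the fixed prefix $\pview{\cdot}$ up to that introducing move, followed by the call itself. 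Hence the residual behaviour depends only on the argument $(D,\vec\al)$, never on intervening history\,---\,precisely what lets a \emph{pure} term realise an innocent strategy. By finite-orbit there are finitely many argument orbits per $\be_j$; for each I collect the residual computation viewfunction $\F_{j,D}$ (the members of $\F$ continuing this call, re-rooted over the context augmented by the $O$-names then available to $P$, namely $\vec\al_0$ and $\vec\al$), observe $\|\F_{j,D}\|<\|\F\|$, and invoke the induction hypothesis. The body of $w_j$ pattern-matches its argument on the first-order skeleton $D$ (binding the function components to fresh variables) and in each branch runs the term for $\F_{j,D}$, supplying the earlier $O$-names through the enclosing closures. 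A residual whose first $P$-move is a call $\Epropapp{\al}{D'}{\vec\be'}$ is realised analogously: the term applies the variable $y_\al$ standing for $\al$ to an argument of shape $D'$ whose function slots hold fresh $\lambda$s realising $\vec\be'$, inside a continuation $E'$ that pattern-matches the ensuing $O$-return $\Eopret{D''}[\vec\al']$ and recurses.

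With $e$ so defined I would establish $\EPVs(\Delta\vdash e:T)=\F$ by the two inclusions, threaded through the construction. For $\EPVs(\Delta\vdash e:T)\subseteq\F$, every $\tau$-reduction and every move $e$ can make was dictated by a clause of $\F$, so each produced $P$-view lies in $\F$. For $\F\subseteq\EPVs(\Delta\vdash e:T)$, each $P$-view of $\F$ is the branch selected by the matching argument or return pattern, and innocence guarantees that $e$'s deterministic response coincides with the unique $P$-response recorded in $\F$. Repeated $O$-calls to the same $\be_j$ with the same argument need no special treatment: purity forces $e$ to answer identically, matching the single orbit in $\F$ by the re-rooting observation; distinct arguments are separated by the pattern-match, and fresh function arguments are symmetric under the permutation action.

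The hard part is the name bookkeeping and the precise definition of the residuals. I must check that each $\F_{j,D}$ is again a genuine (finite-orbit) computation viewfunction\,---\,even-prefix closed, permutation closed, and innocent after re-rooting\,---\,and that the $O$-names $P$ may legitimately call inside $w_j$ match exactly the variables in scope in its body; here \cref{lem:extLTS}(3,4) supplies the correspondence between the names of a view and the configuration's name components. The delicate point is the strict decrease $\|\F_{j,D}\|<\|\F\|$ across \emph{all} branches, which must hold even though $O$ may call a proponent function arbitrarily often; it follows because the re-rooting collapses every call to $\be_j$ onto the same fixed prefix, so the orbits of the length-$0$ and length-$2$ prefixes are consumed once while the residuals partition the strictly longer orbits, making the recursion well-founded.
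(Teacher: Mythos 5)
Your proposal is correct and follows the same overall route as the paper's proof: induction on $\|\F\|$, base case $\bot_T$, a case split on whether the (unique up to $\sim$) first $P$-move is a return or a call, re-rooting of residual behaviours into new initialisation moves carrying the enlarged name context, realisation by $D_1[\overrightarrow{\lambda y.e_j}]$ in the return case and by an application of the relevant context function inside a let-continuation in the call case, and the same well-foundedness argument (the length-$0$ and length-$2$ orbits are consumed once, while the longer orbits are partitioned among the residuals). Two of your choices differ from the paper in ways worth recording. First, your ``computation viewfunction'' strengthening is exactly the statement the paper's induction actually runs on: the typing judgement $\vec T\vdash\F:T$ as defined only covers return-initial (or empty) viewfunctions, yet the paper's proof applies the induction hypothesis to residuals (such as $\F_R$, and the $\F_j$ arising in the call case) whose first $P$-move may be a call. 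Second, and more substantively, you split the residual of each $\be_j$ into per-argument-pattern viewfunctions $\F_{j,D}$, one for each of the finitely many orbits of opponent argument patterns, and compile the dispatch into a pattern-match inside the body of the $\lambda$ realising $\be_j$. The paper instead forms a single residual $\F_j=\bigcup\{[\Eopapp{?}{(D_0,D)}{\vec\al_0,\vec\al}\,t]\mid\dots\}$ ranging over \emph{all} $D$ and asserts it is a viewfunction; strictly, this fails condition~3 of the viewfunction definition (uniqueness of the initial move up to $\sim$) as soon as opponent interrogates $\be_j$ with two distinct ultimate patterns --- say integer arguments $3$ and $5$ --- and then the guard construction $\cond{\vec x=D_0}{e'}{\bot_T}$, which retains only one initial pattern per recursive call, cannot deliver $\EPVs(e_j)=\F_j$: the resulting term would diverge on all but one of the patterns opponent actually plays. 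Your finite case analysis over argument-pattern orbits --- which is precisely what the finite-orbit hypothesis buys, and is how classical PCF definability proofs handle base-type branching --- is the natural repair, so on this delicate point your rendering is the more careful one; the rest of your argument (the re-rooting observation, the strict decrease $\|\F_{j,D}\|<\|\F\|$, and the two-inclusion verification) matches the paper's proof at an equal or better level of detail.
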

\begin{proof}
  Suppose that $\Delta=\{x_1:T_1,\dots,x_n:T_n\}$.
  We do induction on $\|\F\|$. For the base case ($\F=\emptyset$) we can set $E\defeq \bot_T$. Suppose now $\|\F\|>0$, so there is some $\Eopapp{?}{D_0}{\vec\al_0}\,m_0\in{\F}$.
  If there is some evaluation term $\Delta\vdash e':T$ such that
  \[\label{IH}\tag{$*$}
  {\F} = \{ mt\in\EPVs(\Delta\vdash e':T)\mid m\sim\Eopapp{?}{D_0}{\vec\al_0}\}
  \]
  then we can set $e\defeq{\cond{\vec x=D_0}{e'}{\bot_T}}$, where $\vec x=D_0$ is a (definable) macro that checks whether the ultimate pattern of $\vec x$ is $D_0$.

  We next produce $e'$. The move $m_0$ can either be a return or an application of some $\al_{0i}$. Suppose first that $m_0=\Epropret{D_1}[\vec\be]$, and let $\vec\be=\be_1 \cdots \be_k$. For each $j$, assuming $\be_j$ has function type $T_j=T_j'\to T_j''$, consider the set:

  \[
\F_j=\bigcup\{ [\,\Eopapp{?}{(D_0,D)}{\vec\al_0,\vec\al}\,t\,]\mid \Eopapp{?}{D_0}{\vec\al_0}\Epropret{D_1}[\vec\be]\Eopapp{\be_j}{D}{\vec\al}\,t\in\F\}
\]
We can see that $\F_j$ is a viewfunction and, moreover, $\|\F_j\|<\|\F\|$, so by IH we obtain a term $\Delta,y:T_j'\vdash e_j:T_j''$ such that $\EPVs(e_j)=\F_j$.
Thus, taking 
$
e'\defeq {D_1[\overrightarrow{\lambda y.e_j}]}
$
we can satisfy~\eqref{IH}.

Finally, suppose that $m_0=\Epropapp{\al_{0i}}{D_1}{\vec\be}$, with $\vec\be=\be_1\cdots\be_k$, and assume that $\be_j:T_j'\to T_j''$ for each $j=1,\dots,k$, while $\al_{0i}:T_0'\to T_0''$.
In this case, proponent responds by calling one of the opponent functions ($\al_{0i}$). We consider the following sets of P-views which determine proponent's behaviour after each subsequent opponent move:
\begin{align*}
{\F_R} &= \bigcup \{ [\,\Eopapp{?}{(D_0,D)}{\vec\be_0,\vec\be}\,t  \,] \mid \Eopapp{?}{D_0}{\vec\al_0}\Epropapp{\al_{0i}}{D_1}{\vec\be}\Eopret{D}[\vec\al]\,t\in\F\} 
  \\
{\F_j} &=\bigcup\{ [\,\Eopapp{?}{(D_0,D)}{\vec\be_0,\vec\be}\,t\,]\mid \Eopapp{?}{D_0}{\vec\be_0}\Epropapp{\al_{0i}}{D_1}{\vec\be}\Eopapp{\be_j}{D}{\vec\al}\,t\in{\F}\}
\end{align*}
we can see that each of the above has smaller measure than $\F$ hence, by applying the IH, we obtain terms $\Delta,y:T_0''\vdash e_R:T$ and $\Delta,z:T_j'\vdash e_j:T_j''$ (for each $j$). Then, taking
\[
  e'\defeq
    \elet{y}{\piD{D_0}{i}(x)(D_1[\overrightarrow{\lambda z.e_j}])}{e_R}
\]
we can satisfy~\eqref{IH}.
Here, $\piD{D_0}{i}(x)$ is a macro obtaining the $i$-th function component from an $x$ with ultimate pattern $D_0$.
This completes the proof.
\end{proof}

The remainder of this section follows closely~\cite{ChurchillEtal2020}.

\begin{definition}
Suppose $\Sigma,\Sigma'$ are sets of $P$-starting $O$-views. We let $\Sigma\leq\Sigma'$ if for all $\F\in\Sigma$ there is $\F'\in\Sigma'$ such that $\F'\subseteq\F$. 
  \end{definition}

Let us fix a bijection $\phi:\ONs\overset{\cong}{\to}\PNs$. For each set of plays $S$, we shall write $\dual[]{S}$ for the set $\{\dual{t}\mid t\in S\}$. 

\begin{lemma}
Given $\vdash e_1,e_2:T$, if   $e_1\cxteq e_2$ then $\OVs(\C_{e_1})\leq\OVs(\C_{e_2})\leq\OVs(\C_{e_1})$.
\end{lemma}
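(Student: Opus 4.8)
The plan is to prove the contrapositive of each inequality, using the symmetry of $\cxteq$ in $e_1,e_2$ so that it suffices to establish: if $\OVs(\C_{e_1})\not\leq\OVs(\C_{e_2})$ then $e_1\not\cxteq e_2$ (the other inclusion then follows by swapping $e_1$ and $e_2$). Unfolding the definition of $\leq$, the failing hypothesis yields a single complete play $t\in\CP(\C_{e_1})$ whose $O$-view set $\F=\OVs(t)$ has the property that no complete $\hat t\in\CP(\C_{e_2})$ satisfies $\OVs(\hat t)\subseteq\F$. First I would use this one $\F$ to manufacture a distinguishing context.

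To turn $\F$ into a context I would invoke Definability. Since $t$ is a single finite play it has finitely many prefixes, so $\F$ has finitely many orbits under $\Perm$; and since $t$ is a play, its $O$-views are $O$-innocent, so the dual set $\dual[]{\F}$ (obtained by swapping $O/P$ polarities via the fixed bijection $\phi$) is a finite-orbit viewfunction of the appropriate type. Feeding $\dual[]{\F}$ into \cref{prop:definability} then produces a context $E$ (hole of type $T$, so that $E\hole[e_i]:\Unit$) whose $P$-strategy is exactly $\dual[]{\F}$\,---\,intuitively $E$ realises the opponent behaviour recorded by the $O$-views of $t$. The main bookkeeping obstacle here is checking that $\dual[]{\F}$ really meets every clause of a viewfunction\,---\,in particular $P$-innocence (read off from $O$-innocence of $t$), even-prefix and permutation closure, and the unique-initial-move and typing conditions\,---\,and that the type annotations and the passage between the extended-LTS open term and the configuration $\C_E$ of the composition lemma line up correctly.

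Next I would run the composition machinery of \cref{lem:composition} on both sides. For $e_1$: the play $t$ and its dual $\bar{t}^{\phi}$ compose, so $t\in\CP(\C_{e_1})$ together with $\bar{t}^{\phi}\Epropret{()}\in\CP^+(\C_E)$ witnesses $E\hole[e_1]\Downarrow$. For $e_2$ I would argue by contradiction: if $E\hole[e_2]\Downarrow$, then \cref{lem:composition} supplies a complete $\hat t\in\CP(\C_{e_2})$ and some $\phi'$ with $\bar{\hat{t}}^{\phi'}\Epropret{()}\in\CP^+(\C_E)$. Because $E$ is innocent with view function $\dual[]{\F}$, every $P$-view along its play $\bar{\hat{t}}^{\phi'}$ lies in $\dual[]{\F}$; dualising back, every $O$-view along $\hat t$ lies in $\F$, and since $\F$ is permutation-closed this gives $\OVs(\hat t)\subseteq\F$, contradicting the choice of $t$. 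Hence $E\hole[e_2]\Uparrow$, so the context $E$ separates the two terms and $e_1\not\cxteq e_2$.

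The hard part will be the crux step for $e_2$: arguing precisely that a converging composite forces the realised $O$-views of $e_2$ to be confined to $\F$. This relies on the exact correspondence ``the $P$-view of a dual trace equals the dual of the $O$-view of the original'', together with the fact that an innocent $E$ with view function $\dual[]{\F}$ can only take part in plays whose $O$-side is prescribed by $\F$ (for which I would use \cref{lem:extLTS} to relate realised traces, their views, and the reached configurations). A secondary subtlety is ensuring the object returned by Definability is a legitimate context of the right type; if one prefers, the observation that applicative tests suffice (\cref{prop:CL}) lets one take $E$ to be applicative when appealing to contextual equivalence.
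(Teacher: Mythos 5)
Your route is the paper's own proof read contrapositively: pick a complete play $t\in\CP(\C_{e_1})$, dualise its $O$-views into a viewfunction, invoke \cref{prop:definability} to obtain a context $E$, use \cref{lem:composition} to get convergence of $E[e_1]$, and then use $P$-innocence of $E$ together with the duality between $P$-views of $\bar{s}^{\phi}$ and $O$-views of $s$ to confine the $O$-views of any play of $e_2$ composing with $E$ to $\OVs(t)$. The contraposition changes nothing of substance, and your crux step for $e_2$ is exactly the paper's final step ($\PVs(\dual{s})\subseteq\PVs(\C_E)=\dual[]{\OVs(t)}$, hence $\OVs(s)\subseteq\OVs(t)$).

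There is, however, one genuine gap, and it sits precisely where you assert $\bar{t}^{\phi}\Epropret{()}\in\CP^+(\C_E)$. You feed Definability the set $\dual[]{\F}$ with $\F=\OVs(t)$, i.e.\ only the duals of the $O$-views of $t$. That set prescribes the context's moves \emph{during} the interaction, but it contains no entry telling the context what to do once the interaction recorded by $t$ is exhausted, namely to return $()$. The Definability construction fills every unprescribed continuation with divergence (see the base case $\F=\emptyset\mapsto\bot_T$ and the guards $\cond{\vec x=D_0}{e'}{\bot_T}$ in its proof), so the term $e$ built from $\dual[]{\F}$ diverges after the last $O$-return of $\bar{t}^{\phi}$ arrives. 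Since $\CP^+(\C_E)$ requires $\C_E$ to reach a configuration whose expression is $()$ with empty stack, your claimed membership is false; hence $E[e_1]\Uparrow$ as well, the context distinguishes nothing, and no contradiction with the choice of $t$ (or with $e_1\cxteq e_2$) is ever produced. The paper closes exactly this hole by augmenting the viewfunction with the terminating entries $\Eopapp{?}{D_0}{\phi(\vec\be)}\,\dual{t'}\,\Epropret{()}$ (for $\pview{t}=\Epropret{D_0}[\vec\be]\,t'$), which is what forces the defined term to evaluate to $()$ precisely after a complete run of the dual of $t$; it is also what gives the defined term its result type $\Unit$, without which $E=\elet{x}{\hole_T}{e}$ is not even a well-typed context of the required shape. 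The remaining bookkeeping you flag (converting the initial dual move into an initialisation move, keeping only even-length views, permutation closure) is real but routine and is handled by the paper's explicit formula for $\F$; the missing termination component is the one item that is not cosmetic, since the entire convergence half of the argument lives in it. With that single augmentation your proof goes through essentially as the paper's.
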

\begin{proof}
  By symmetry, it suffices to show one inclusion. Suppose $e_1\cxteq e_2$ and let
  $t\in\CP(\C_{e_1})$ be a top-linear complete play. Consider the set of extended plays:
  \[
\F = \{\Eopapp{?}{D_0}{\phi(\vec\be)}\,\dual{t'}\mid \Epropret{D_0}[\vec\be]\,t'\in\OVs(t) \land \, |t'|\!\text{ odd}\}\cup\{ \Eopapp{?}{D_0}{\phi(\vec\be)}\,\dual{t'}\Epropret{()}\mid \pview{t}=\Epropret{D_0}[\vec\be]\,t'\}
\]
By determinacy of the LTS (for both $O$ and $P$), we have that $\F$ is a viewfunction. Hence, by Proposition~\ref{prop:definability}, there is a term $x:T\vdash e:\Unit$ such that $\F=\EPVs(e)$.
Now, define the context:
\[
E\defeq \elet{x}{\hole_T}{e}
\]
By inspection of the LTS rules, we can see that the plays in $\CP(\C_E)$ and the extended plays in $\CEP(e)$ are the same, modulo adjusting the first move of extended plays (from an initialisation move to an $O$-return one) and removing the last one ($\Epropret{()}$).
Hence, $\PVs(\C_E)=\dual[]{\OVs(t)}$.
Moreover, $\C_E$ produces all traces in $\PVs(\dual t)$, and therefore $\dual t\in\CP(\C_E)$. By simulating the reduction producing $\dual t$ in the extended LTS (for $e$), we can see that $\dual t\Epropret{()}\in\CP^+(\C_E)$. Hence, by Lemma~\ref{lem:composition}, $E[e_1]\Downarrow$ and thus, by hypothesis, $E[e_2]\Downarrow$.
Again, by Lemma~\ref{lem:composition} and using the fact that the LTS is closed with respect to name permutations, there is a play $s\in\CP(\C_{e_2})$ such that $\dual{s}\Epropret{()}\in\CP^+(\C_E)$.
By $\dual s\in\CP(\C_E)$, we have that $\PVs(\dual s)\subseteq\PVs(\C_E)=\dual[]{\OVs(t)}$ and, hence, $\OVs(s)\subseteq\OVs(t)$. Thus, $\OVs(\C_{e_1})\leq\OVs(\C_{e_2})$.
\end{proof}

\begin{lemma}
It $\OVs(\C_{e_1})\leq\OVs(\C_{e_2})\leq\OVs(\C_{e_1})$ then $\OVs(\C_{e_1})=\OVs(\C_{e_2})$.
\end{lemma}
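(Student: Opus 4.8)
The plan is to upgrade the two-sided approximation into genuine set equality. By symmetry it suffices to prove $\OVs(\C_{e_1})\subseteq\OVs(\C_{e_2})$, so I would fix an element $\F_0\in\OVs(\C_{e_1})$, say $\F_0=\OVs(t)$ for a complete play $t\in\CP(\C_{e_1})$, and aim to show $\F_0\in\OVs(\C_{e_2})$. First I would use the two hypotheses alternately to build a descending chain: from $\OVs(\C_{e_1})\leq\OVs(\C_{e_2})$ choose $\F_0'\in\OVs(\C_{e_2})$ with $\F_0'\subseteq\F_0$; from $\OVs(\C_{e_2})\leq\OVs(\C_{e_1})$ choose $\F_1\in\OVs(\C_{e_1})$ with $\F_1\subseteq\F_0'$; and so on, obtaining $\F_0\supseteq\F_0'\supseteq\F_1\supseteq\F_1'\supseteq\cdots$ whose members lie alternately in the two models.

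Each member of this chain is the set of $O$-views of a \emph{finite} complete play, hence finite-orbit, and all these sets are closed under permutations. Therefore $\F_0'\subseteq\F_0$ entails $\mathcal{O}(\F_0')\subseteq\mathcal{O}(\F_0)$ and so $\|\F_0'\|\leq\|\F_0\|$. The orbit counts along the chain thus form a non-increasing sequence of natural numbers, which must eventually be constant; on that constant tail any two successive $\subseteq$-nested members of equal orbit count coincide. This produces a set $\F_\star\in\OVs(\C_{e_1})\cap\OVs(\C_{e_2})$ with $\F_\star\subseteq\F_0$, i.e.\ a common lower bound of $\F_0$ that is realised by both expressions.

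The crux is then to promote this common lower bound to $\F_0$ itself. For this I would exploit that every element of $\OVs(\C_e)$ is realised by a complete play of the \emph{innocent} proponent $e$, and is therefore recoverable by replaying an opponent view-function against $e$. Reading $\F_0=\OVs(t)$ as a finite innocent $O$-strategy, the inclusion $\F_0'=\OVs(s)\subseteq\F_0$ witnesses that $e_2$ already follows this strategy along its play $s$; I would then replay the remaining opponent moves prescribed by $\F_0$ against $e_2$, using determinacy and innocence (\cref{lem:extLTS}, parts~5--7) to force $e_2$ to reproduce exactly the proponent responses recorded in $\F_0$ while keeping the replay a legal, visible play. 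This would yield a complete play $\hat s\in\CP(\C_{e_2})$ with $\OVs(\hat s)=\F_0$, whence $\F_0\in\OVs(\C_{e_2})$, as required; the reverse inclusion is symmetric.

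The main obstacle I anticipate is precisely this replay step: I must exclude that $e_2$, when driven by the strategy $\F_0$, ever answers with a $P$-move lying \emph{outside} $\F_0$, since such a response would make $\OVs(\hat s)\not\subseteq\F_0$ and break the argument. I expect to rule this out by combining $P$-innocence with the second hypothesis $\OVs(\C_{e_2})\leq\OVs(\C_{e_1})$ — a deviating response would generate an $e_2$-play whose $O$-views cannot be bounded below inside $\OVs(\C_{e_1})$ consistently with $\F_0$ — following closely the corresponding determinacy argument of~\cite{ChurchillEtal2020}. Granting this step, the two inclusions combine to give $\OVs(\C_{e_1})=\OVs(\C_{e_2})$.
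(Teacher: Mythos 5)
Your opening move---alternating the two hypotheses to obtain nested elements---is exactly how the paper's proof begins, but what you do with it leaves a genuine gap. The orbit-counting argument along your chain only produces \emph{some} common element $\F_\star\in\OVs(\C_{e_1})\cap\OVs(\C_{e_2})$ with $\F_\star\subseteq\F_0$; it says nothing about $\F_0$ itself. Everything therefore rests on your replay step, which you explicitly leave unproven (``granting this step\ldots''). That step is not a routine verification: to drive $e_2$ along the $O$-strategy read off from $\F_0$ you must show that $e_2$ never answers with a $P$-move whose induced $O$-view lies outside $\F_0$, and your suggested remedy---$P$-innocence plus the hypothesis $\OVs(\C_{e_2})\leq\OVs(\C_{e_1})$ applied to a deviating play---does not pin this down: a deviating interaction must first be completed into a complete play before that hypothesis applies to it at all, and the element of $\OVs(\C_{e_1})$ one then obtains below it need not be $\subseteq$-comparable with $\F_0$, so no contradiction follows without a further idea.

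The further idea, which is the actual content of the paper's proof, is a determinacy observation that makes chains and replays unnecessary: any two $\subseteq$-nested elements of $\OVs(\C_{e_1})$ coincide. Indeed, suppose $S_2\subseteq S_1$ with $S_i=\OVs(t_i)$ and $t_1,t_2\in\CP(\C_{e_1})$. If $t_1\neq t_2$ then, since proponent is deterministic (both plays come from the same expression $e_1$), the first disagreement occurs at an $O$-move, yielding an odd-length play $t$ and moves $o_1\not\sim o_2$ with $to_1,to_2\in\Tr(\C_{e_1})$ and $\oview{t}o_i\in S_i$. But then $\oview{t}o_2\in S_2\subseteq S_1$ places two incompatible $O$-views over the same view $\oview{t}$ inside $S_1=\OVs(t_1)$, contradicting $O$-innocence of the single play $t_1$. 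Hence $S_1=S_2$. Applied to the first three members of your own chain, $\F_1\subseteq\F_0'\subseteq\F_0$ with $\F_0,\F_1\in\OVs(\C_{e_1})$, this gives $\F_0=\F_1$, whence $\F_0=\F_0'\in\OVs(\C_{e_2})$---no stabilization, no replay, and no new play of $e_2$ ever needs to be constructed.
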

\begin{proof}
  Let $S_1\in\OVs(\C_{e_1})$. By hypothesis, there are $T\in\OVs(\C_{e_2})$ and $S_2\in\OVs(\C_{e_1})$ such that $S_2\subseteq T\subseteq S_1$. Let $t_1,t_2\in\CP(\C_{e_1})$ be such that $S_i=\OVs(t_i)$.
  Suppose that $S_1\neq S_2$, so $t_1\neq t_2$. As the LTS is $P$-deterministic, there is an odd-length play $t$ and $o_1\not\sim o_2$ such that $to_1,to_2\in\Tr(\C_{e_1})$ and $\oview{t}o_i\in\OVs(t_i)$.
  In particular, $\oview{t}o_2\in S_2\subseteq S_1$, so $\oview{t}o_2\in S_1$. But then $\oview{t}o_1,\oview{t}o_2\in S_1$, contradicting $O$-determinacy. Thus, $S_1=S_2=T$ and, hence,
  $S_1\in\OVs(\C_{e_2})$. 
\end{proof}

\begin{proposition}[Completeness]
Given $\vdash e_1,e_2:T$, if   $e_1\cxteq e_2$ then $\OVTLs(\C_{e_1})=\OVTLs(\C_{e_2})$.
\end{proposition}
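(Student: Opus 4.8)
The plan is to reduce the statement to the two lemmas immediately preceding it, together with a play-independent, purely combinatorial characterisation of top-linearity in terms of $O$-view sets. The two preceding lemmas give, from $e_1\cxteq e_2$, first $\OVs(\C_{e_1})\leq\OVs(\C_{e_2})\leq\OVs(\C_{e_1})$ and then the equality $\OVs(\C_{e_1})=\OVs(\C_{e_2})$. The task is therefore to transfer this equality of the full $O$-view sets to their top-linear fragments $\OVTLs$.

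The heart of the argument is to show that top-linearity of a complete play $t$ can be read off from $\OVs(t)$ alone. Call an $O$-view \emph{flat} if it uses only $P$-returns and $O$-calls (no $P$-calls and no $O$-returns). I would first prove that, for a complete play $t$, the flat members of $\OVs(t)$ are exactly the $O$-views of the prefixes of $t$ that end at a top-level (spine) move: on one hand every $O$-view begins with the initial $P$-return $p_0=\Epropret{D_0}[\vec\be_0]$ and is an alternating legal trace (\cref{ref:playsSTD}), so a flat $O$-view is forced into the spine shape $p_0o_1p_1\cdots$; on the other hand the $O$-view contraction collapses every completed inner $P$-call/$O$-return excursion, so that the prefix ending immediately after the $i$-th top-level $O$-call contracts exactly to $p_0o_1\cdots p_{i-1}o_i$, carrying the original justifier of $o_i$. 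One then observes that $t$ is top-linear precisely when, in every such flat $O$-view that ends in an $O$-call, that final $O$-call is justified by the immediately preceding $P$-return. Writing $\mathrm{TL}(S)$ for this condition on a set $S$ of $O$-views (which is manifestly closed under the name permutations $\Perm$, matching the permutation-closure of $\OVs(t)$), this yields the equivalence that $t$ is top-linear iff $\mathrm{TL}(\OVs(t))$.

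Granting this equivalence, the proposition is immediate. If $S\in\OVTLs(\C)$ then $S=\OVs(t)$ for a top-linear complete $t$, so $S\in\OVs(\C)$ and $\mathrm{TL}(S)$ hold; conversely, if $S\in\OVs(\C)$ and $\mathrm{TL}(S)$, then any complete $t'$ with $\OVs(t')=S$ satisfies $\mathrm{TL}(\OVs(t'))$ and is hence top-linear, so $S\in\OVTLs(\C)$. Thus $\OVTLs(\C)=\{\,S\in\OVs(\C)\mid\mathrm{TL}(S)\,\}$ depends only on $\OVs(\C)$, and applying this to $e_1,e_2$ together with $\OVs(\C_{e_1})=\OVs(\C_{e_2})$ gives $\OVTLs(\C_{e_1})=\OVTLs(\C_{e_2})$.

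The step I expect to be the main obstacle is precisely the sub-lemma identifying flat $O$-views with top-level spine views, and in particular checking that the justifier of each top-level $O$-call is faithfully preserved in a flat member of $\OVs(t)$ and is untouched by exactly those inner interleavings that $\OVs$ is designed to forget. This needs a careful induction on the $O$-view contraction, using \cref{ref:playsSTD} for legality and visibility and the stack discipline from \cref{lem:extLTS} to guarantee that returns match their calls, so that top-linearity can be neither created nor destroyed when passing between different plays that realise the same $O$-view set.
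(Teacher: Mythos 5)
Your overall route coincides with the paper's: feed $e_1\cxteq e_2$ into the two preceding lemmata to get $\OVs(\C_{e_1})=\OVs(\C_{e_2})$, and then argue that top-linearity of a complete play is determined by its $O$-view set alone, so that the equality transfers to $\OVTLs$. The gap is that your determining condition $\mathrm{TL}$ — inspecting only \emph{flat} $O$-views — is too weak, because \emph{top-level} in the paper is a justification-based notion, not a positional one: any $O$-call justified by a top-level $P$-move is top-level, even if it is played while a proponent call is still pending (inside an excursion). In a non-top-linear play such a call is precisely the witness of non-top-linearity, and it appears, together with its justifier, only in a \emph{non-flat} $O$-view, which $\mathrm{TL}$ never examines. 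Concretely, let $e$ be the pair $(\lambda f.\,(f\,();5),\ \lambda y.\,7)$ of type $((\Unit\to\Unit)\to\Int)*(\Unit\to\Int)$ and consider the complete play
\[
t \;=\; p_0\; o_1\; q_1\; o_2\; p_2\; r_1\; p_1
\]
where $p_0$ is the initial $P$-return introducing $\be_1,\be_2$; $o_1$ calls $\be_1$, introducing $\al$; $q_1$ is the $P$-call of $\al$; $o_2$ calls $\be_2$ — legal in the game-LTS of \cref{fig:lts3}, since opponent's knowledge after $q_1$ is $\A(\al)=\be_1\be_2$ — then $p_2$ returns $7$ to $o_2$, $r_1$ returns to $q_1$, and $p_1$ returns $5$ to $o_1$. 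Here $o_2$ is justified by $p_0$, hence top-level, but its immediate predecessor is the $P$-call $q_1$, so $t$ is \emph{not} top-linear (its opponent corresponds to a non-applicative context which calls $f_2$ from inside the function it passes to $f_1$). Yet the flat members of $\OVs(t)$ are exactly the prefixes of $p_0\,o_1\,p_1$ (the views of all other prefixes contain $q_1$ or $r_1$), and the only one ending in an $O$-call is $p_0\,o_1$, where $o_1$ is justified by the adjacent $p_0$: your condition $\mathrm{TL}(\OVs(t))$ holds. Moreover no top-linear play realises this set: any prefix whose $O$-view is $p_0\,o_1\,q_1\,o_2$ ends in a top-level $O$-call whose immediate predecessor (recorded by the view) is the $P$-call $q_1$ rather than its justifier $p_0$. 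Hence $\OVs(t)$ lies in $\{S\in\OVs(\C_{e})\mid\mathrm{TL}(S)\}\setminus\OVTLs(\C_{e})$, your claimed identity $\OVTLs(\C)=\{S\in\OVs(\C)\mid\mathrm{TL}(S)\}$ is false, and the concluding step of your proof collapses.

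The repair is the paper's (very terse) argument: the correct $\OVs$-invariant is that \emph{every} trace in $\OVs(s)$, flat or not, is itself top-linear. One direction is your claim (a play's prefix views inherit top-linearity); for the other, if a top-level $O$-call $o$ of $s$ is not justified by the $P$-move preceding it, then $\oview{s'}$, for $s'\sqsubseteq s$ ending at $o$, is a non-top-linear \emph{trace}: $o$'s justification chain down to $p_0$ survives inside the view (legality of $O$-views, \cref{ref:playsSTD}, puts call-justifiers in the view, and every $P$-move in a view is immediately preceded by its justifier), so $o$ is still top-level in $\oview{s'}$, while the move preceding $o$ in the view is exactly the offending move preceding it in $s$. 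With this invariant in place of $\mathrm{TL}$, your final paragraph goes through verbatim and is precisely the paper's proof: since $\OVs(s)=\OVs(t)$ and $t$ is top-linear, all plays in $\OVs(s)$ are top-linear, hence so is $s$.
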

\begin{proof}
  It suffices to show one inclusion. Suppose
 $e_1\cxteq e_2$ so
by the previous two lemmata,
$\OVs(\C_{e_1})=\OVs(\C_{e_2})$. Let $t\in\CP(\C_{e_1})$ be top linear, so $\OVs(t)\in\OVs(\C_{e_2})$, say $\OVs(t)=\OVs(s)$ for some $s\in\CP(\C_{e_2})$. As all the plays in $\OVs(s)$ are top-linear, so is $s$.
Hence, $\OVs(t)\in\OVTLs(\C_{e_2})$.
  \end{proof}

\subsection{Correspondence of LTS and game-LTS}

\begin{figure*}[t] 

  \[\begin{array}{@{}cllll@{}}
    \irule[PropRetBarb][propretbarb]{
      (D,\vec v) \in \ulpatt(v)
      \\
      t' = t+\Epropret{D}[\vec\be]
      \\
      \conc'=\conc\uplus[\vec\be\mapsto\vec v^{\vec\al}]
    }{
      \pconf{A}{\conc}{\emptyK}{t}{v}{\vec\be'}[\vec\al]
      \trans{\Epropret{D}[\vec \be]}   
      \oconf{\emptyset}{\conc'}{\emptyK}{t'}{\cdot}{\vec \be}
    }
    \\[2em]
    \irule[OpAppBarb][opappbarb]{
       \be_i : T_1 \arrow T_2
      \\
      \conc(\be_i)=v^{\vec\al'}
      \\
        (D,\vec \al) \in \ulpatt(T_1)
       \\
      t'=t+\Eopapp{\be_i}{D}{\vec \al}
    }{
      \oconf{A}{\conc}{\emptyK}{t}{\emptyK}{\vec\be}
      \trans{\Eopapp{\be_i}{D}{\vec\al}}
      \pconf{A\uplus\vec\al^{{\varepsilon}}}{\conc}{\emptyK}{t'}{e}{\emptyK}[\vec\al']
    }
    \\[2em]
    \irule[PropRet][propret]{
      (D,\vec v) \in \ulpatt(v)
      \\
      K\not=\emptyK
      \\
      t' = t+\Epropret{D}[\vec\be]
      \\
      \conc'=\conc\uplus[\vec\be\mapsto\vec v^{\vec\al}]
    }{
      \pconf{A}{\conc}{K}{t}{v}{\vec\be',V}[\vec\al]
      \trans{\Epropret{D}[\vec\be]}
      \oconf{A}{\conc'}{K}{t'}{V}{\vec\be',\vec\be}
    }
    \\[2em]
    \irule[OpApp][opappf]{
       [\Eopapp{\be_i}{D}{\vec \al}]\subseteq \nextmove{O}{t}
      \\
       \be_i : T_1 \arrow T_2
      \\
      \conc(\be_i)=v^{\vec\al'}
      \\
        (D,\vec \al) \in \ulpatt(T_1)
       \\
      t'=t+\Eopapp{\be_i}{D}{\vec \al}
      \\
      K\neq\emptyK
    }{
      \oconf{A}{\conc}{K}{t}{V}{\vec\be}
      \trans{\Eopapp{\be_i}{D}{\vec\al}}
      \pconf{A\uplus\vec\al^{{\vec \be}}}{\conc}{K}{t'}{e}{\vec \be,V}[\vec\al',\vec\al]
    }
  \end{array}\]
  \hrule
  \caption{The Game Labelled Transition System, top-linear version (\textsc{PropTau,PropApp,OpRet} rules as in Figure~\ref{fig:lts3}).}\label{fig:lts3.1}
\end{figure*}

We next show that our (plain) LTS and game-LTS produce the same notion of term equivalence.
In this section, by game-LTS we refer to the LTS of Figure~\ref{fig:lts3.1}, so in particular all plays examined will be top-linear (cf.\ Lemma~\ref{lem:lts3.1}).
Moreover, by plain LTS we intend the LTS of Figure~\ref{fig:lts} albeit with the modification that in rules \textsc{PropRet, PropApp, OpRet, OpApp} the transition is labelled with the corresponding move (i.e.\ the one that ends up in the trace stored in the target configuration) and not with $\tau$. In particular, only transitions triggered by \textsc{PropTau} are labelled with $\tau$.
We can see that this modification does not essentially alter the LTS, and in particular it produces the same $M$-components.

\begin{lemma}\label{lem:lts3.1}
  Let $\CP'(\C)$ be the complete plays produced from $\C$ using the rules of the LTS in Figure~\ref{fig:lts3.1}. Then,
$\OVTLs(\C) = \{ \OVs(t) \mid t\in\CP'(\C)\}$.
\end{lemma}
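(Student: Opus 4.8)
The plan is to establish a tight correspondence between the runs of the two systems, restricted to top-linear behaviour, and to read off the equality of $\OVs$-sets from it. The two LTSs agree on all rules except the top-level ones: at a final configuration \cref{fig:lts3.1} fires the barbed rules \textsc{PropRetBarb} and \textsc{OpAppBarb}, which discard the accumulated opponent knowledge---they empty the stack component, reset $\A$, and record $\vec\al^{\varepsilon}$ in place of $\vec\al^{\vec\be}$---whereas \cref{fig:lts3} carries this knowledge forward through the ordinary \textsc{PropRet}/\textsc{OpApp} instances. The whole argument therefore reduces to showing that this discarded knowledge is never consulted along a top-linear play.

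The inclusion $\{\OVs(t)\mid t\in\CP'(\C)\}\subseteq\OVTLs(\C)$ is the easy one. Each transition of \cref{fig:lts3.1} is an instance of a transition of \cref{fig:lts3} in which strictly less knowledge is available, so any run of the top-linear LTS can be replayed step-for-step in the full LTS, producing the same trace $t$. Moreover $t$ is top-linear: after \textsc{PropRetBarb} the opponent knowledge at the ensuing final configuration is exactly the names $\vec\be$ just returned, the following \textsc{OpAppBarb} can only call one of those $\vec\be$, and no inner move can occur at a final configuration; hence each top-level $O$-call immediately follows, and is justified by, the return preceding it. Thus $t\in\CP(\C)$ is top-linear and $\OVs(t)\in\OVTLs(\C)$.

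For the reverse inclusion I would take a top-linear $t\in\CP(\C)$ and replay its generating run using \cref{fig:lts3.1}, firing the barbed rules at final configurations and the common rules elsewhere. The invariant maintained by induction on the run length is that the two configurations carry the same trace, the same expression $e$, and the same concretion $\conc$, while the full-LTS configuration may additionally retain a longer knowledge stack and a larger $\A$ (in both domain and values)---precisely the data the barbed rules drop. The crux is that along a top-linear $t$ none of this extra data is ever read: by \cref{lem:extLTS}(3,4) the opponent and proponent knowledge components coincide with the $P$- and $O$-names of the current $O$- and $P$-view, so any transition that genuinely depended on the discarded knowledge would call a name introduced by a strictly earlier top-level move, i.e.\ would be a top-level $O$-move not justified by its immediate predecessor, contradicting top-linearity of $t$. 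Hence every move of $t$ stays enabled in \cref{fig:lts3.1} and we obtain $\OVs(t)\in\{\OVs(t')\mid t'\in\CP'(\C)\}$.

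This last argument is the delicate step I expect to be the main obstacle: pinning down exactly when a full-LTS transition consults the discarded knowledge and proving that each such transition breaks top-linearity. The awkward case is when proponent returns an opponent-supplied function at top level (a copycat move), since a later top-level call then re-enters that function and appears to need its original $\A$-value even though \textsc{PropRetBarb} has reset $\A$; here I would use the view/knowledge identification of \cref{lem:extLTS} to show that the residual discrepancy is invisible to $O$-views, which is exactly why the statement is phrased with $\OVs$-sets rather than raw traces. The remaining bookkeeping---how the stack component is pushed and popped across the inner $P$-call/$O$-return bracketing---is routine once the invariant is set up correctly.
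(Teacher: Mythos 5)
The paper states \cref{lem:lts3.1} without proof (it moves straight on to the translation between the two LTSs), so there is no official argument to compare yours against; judged on its own merits, your bidirectional step-by-step simulation with a knowledge-restriction invariant is the natural skeleton, but the step you yourself single out as the crux is not merely delicate --- as you state it, it is false. Your claim is that along a top-linear play no transition ever consults the discarded data, because any such transition ``would be a top-level $O$-move not justified by its immediate predecessor.'' That accounts only for \emph{opponent} moves (opponent reading its knowledge $\vec\be$); it says nothing about \textsc{PropApp}, which reads $\A$, and proponent calls to opponent names introduced in \emph{earlier} top-level segments are perfectly compatible with top-linearity, since a $P$-call is never a top-level move (only $O$-calls justified by top-level $P$-moves and $P$-returns to top-level $O$-moves are). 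Concretely, let $e=\lambda f.\lambda g.\,f\,()$ of type $(\Unit\to\Unit)\to(\Unit\to\Unit)\to\Unit$. The complete play $t=p_0\,o_1\,p_1\,o_2\,p'\,o'\,p_2$ produced by \cref{fig:lts3} is top-linear: $p_0$ returns $\be_0$; $o_1$ calls $\be_0$ passing $\al_f$; $p_1$ returns $\be_1$ with $\conc(\be_1)=(\lambda g.\,\al_f\,())^{\al_f}$; $o_2$ calls $\be_1$ passing $\al_g$; $p'$ is the \emph{inner} proponent call of $\al_f$, justified by $o_1$; $o'$ and $p_2$ return $()$. The transition for $p'$ is \textsc{PropApp}, whose premise requires $\A(\al_f)$ to be defined --- exactly the entry that \textsc{PropRetBarb} (resetting $\A$ to $\emptyset$) and \textsc{OpAppBarb} (adding only $\al_g^{\varepsilon}$) have discarded --- so the run of \cref{fig:lts3.1} is stuck at $p'$. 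Your fallback, that the discrepancy is ``invisible to $O$-views,'' cannot repair this: a blocked call truncates the play, and the $O$-view $\oview{p_0\,o_1\,p_1\,o_2\,p'}=p_0\,o_1\,p'\in\OVs(t)$ is then realised by no trace in $\CP'(\C_e)$, so the two sets in the lemma would genuinely differ. The correct route is a different invariant: the barbed rules must be construed as shrinking the knowledge \emph{attached} to names --- to $\varepsilon$ for top-level ones, exactly as the plain LTS's \textsc{OpCallBarb} records $\vec\alpha^{0,\emptytr}$ --- while $\dom{\A}$ persists across segments; then \textsc{PropApp} always fires (with a shorter $\vec\be'$), and top-linearity is invoked only where your argument is sound, namely to show opponent never calls a $P$-name outside the shrunk knowledge. (The example also shows the lemma fails if \cref{fig:lts3.1} is read literally as deleting $\A$-entries, so any proof must pin down this reading first.)

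A second, smaller gap: for the inclusion $\{\OVs(t)\mid t\in\CP'(\C)\}\subseteq\OVTLs(\C)$ you assert that every transition of \cref{fig:lts3.1} ``is an instance of'' one of \cref{fig:lts3}, but \textsc{OpAppBarb} omits the innocence premise $\nextmove{O}{t}\subseteq_\star[\cdot]$ of \textsc{OpApp}. To conclude that \cref{fig:lts3.1} produces \emph{plays} of \cref{fig:lts3} at all, you must additionally show this premise is vacuous at top level, e.g.\ because the $O$-view at a top-level point consists exactly of the top-level moves played so far, and no proper prefix of the trace can have a permutation-equal $O$-view (their $O$-views differ in length, or in call/return shape at the first divergence). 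This is provable, but your ``each transition is an instance'' assertion silently assumes it.
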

We will show that there is a translation from one LTS to the other that preserves traces and the functions $M$.

Given components $\A,\conc,t$ from a reachable configuration (from the game-LTS), we shall define corresponding plain-LTS components and a function $\psi$ from $\ONs$ to abstract function names:\footnote{Note that, in this section, all traces are $P$-starting.}
\[
   ( A,M,\hat t,\hat s,\psi)\in (\A,\conc,t)^\circ.
 \]
 In particular, for each $\al\in\dom{\psi}$, $\psi(\al)=\alpha^i$, for some $\alpha,i$. We may write
 \[
\psi(\al) \funto \alpha\text{ if $\psi(\al)=\alpha^i$ for some $i$.}
   \]
We call a triple $(\A,\conc,t)$ \emph{compatible} if
\[
  \ONs(\conc)\subseteq\dom{\A}=\ONs(t)\ \land\
  \dom{\conc}=\PNs(t)
\]
where $\ONs(X)$ are the $O$-names featuring in $X$, and similarly for $\PNs(X)$.

First,
given any incomplete top-linear play $t$ with $|t|>1$, we can split $t$ as:
\[
t = t_{{cp}}\,o\,t_{{lo}}
\]
where $t_{cp}$ a complete trace, $o$ is a top-level $O$-call and $t_{lo}$ contains no top-level moves.
Then, we
let $\oviewpsi[\circ]{t}$ be the suffix $t'$ of $\oview{t}$ satisfying the condition:
\[
  \oview{t} \ = \ \oview{t_{cp}\,o}\cdots t'
\]
such that $t'$ contains exactly one P-call, which is at its start. If $t$ is complete or $t_{lo}$ is empty, then $\oviewpsi[\circ]{t}=\emptytrace$.

\begin{lemma}\label{lem:extLTS2}
Let $\C$ be an initial configuration and suppose that $\C\xrightarrow{t}\!\!\!\!\!\to \C'$ in the LTS of Figure~\ref{fig:lts3.1}. Then:
\begin{enumerate}
\item $t$ is a play and if $\C'$ has components $\A,\conc$ then the names in $t$ are precisely $\A\cup\dom{\conc}$;
  \item for any name permutation $\pi$, $\C\xrightarrow{\pi\cdot t}\!\!\!\!\!\to \pi\cdot \C'$;
\item if $\C'=\oconf{\_}{\_}{\_}{t}{\_}{\vec\be}$ then the $P$-names in $\oviewpsi[\circ]{t}$ are $\vec\be$ (in the same order).
\end{enumerate}
\end{lemma}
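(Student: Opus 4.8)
The plan is to prove the three claims simultaneously by induction on the length of the reduction underlying $\C\to^* \C'$, re-using almost verbatim the machinery of \cref{lem:extLTS}. The top-linear LTS of \cref{fig:lts3.1} differs from the game-LTS of \cref{fig:lts3} only in its top-level transitions, which are fired by the barbed rules \textsc{PropRetBarb} and \textsc{OpAppBarb}; its inner-level rules \textsc{PropApp}, \textsc{OpRet}, \textsc{PropRet}, \textsc{OpApp} are literally the ones of \cref{fig:lts3}, so the stack discipline on $K$ and $V$, the freshness of the argument names $\vec\al,\vec\be$, and the control exercised by $\nextmove{O}{\cdot}$ are all inherited. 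The barbed rules only additionally reset the knowledge components ($\A\mapsto\emptyset$ in \textsc{PropRetBarb}; $\vec\al^{\varepsilon}$ and a one-element knowledge in \textsc{OpAppBarb}), which is precisely what makes every produced play top-linear.

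For claim~1, that $t$ is a (top-linear) legal play follows exactly as for \cref{lem:extLTS}(1,6): legality is by the stack discipline plus argument-name freshness; visibility and $O$-innocence come from $\nextmove{O}{\cdot}$; $P$-innocence is the determinacy argument of \cref{lem:extLTS}(6), which goes through unchanged since the \textsc{PropTau} rule is common to both systems. Top-linearity is immediate from the shape of the barbed rules, whose top-level $O$-calls are justified by the immediately preceding $P$-return. The name bookkeeping is obtained by tracking, rule by rule, which names each transition introduces and retains; the only delicate point is the $\A$-reset performed by \textsc{PropRetBarb} at round boundaries, which I handle explicitly in the inductive step. Claim~2 is pure equivariance: every side condition of every rule (ultimate-pattern decomposition, $\nextmove{O}{\cdot}$, $\A$- and $\conc$-lookups, freshness) is stable under $\pi\in\Perm$, so the same derivation applies to $\pi\cdot t$ and yields $\pi\cdot\C'$, exactly as in \cref{lem:extLTS}(2).

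Claim~3 is the genuinely new content and where the top-linear restriction bites. The invariant I would maintain is that, at every $O$-configuration, the opponent knowledge $\vec\be$ enumerates\,---\,in order\,---\,exactly the proponent functions exposed to opponent along the current inner dialogue recorded by $\oviewpsi[\circ]{t}$. The induction is on the reduction, with a case split on the rule producing the $O$-configuration. A top-level \textsc{PropRetBarb} (or the opening \textsc{OpAppBarb}) resets the inner segment, so $\oviewpsi[\circ]{t}$ restarts and the invariant is re-established from the single move just played; an inner \textsc{PropApp} or \textsc{PropRet} extends the current segment, its newly introduced $\vec\be$ being appended to the knowledge, while the names $\vec\be'=\A(\al)$ retrieved by the $\A$-lookup are precisely those the earlier moves of the segment had already contributed; and the $V$-stack faithfully suspends and restores the $\vec\be$'s across nested \textsc{PropApp}/\textsc{OpRet} pairs. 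Here I would appeal to the view--name correspondences of \cref{lem:extLTS}(3,4): \cref{lem:extLTS}(4) certifies that the decoration $\A(\al)$ equals the $P$-names present in the $P$-view at the introduction of $\al$, which is what lets the $\A$-lookup line up with the segment $\oviewpsi[\circ]{t}$.

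The main obstacle is exactly this reconciliation between the two accounts of opponent visibility: the view-based one given by $\oviewpsi[\circ]{t}$, which elides sub-dialogues that have already returned, and the knowledge-map one by which the LTS retains proponent functions across such closed sub-dialogues through the decorations stored in $\A$ (and reinstated by \textsc{OpRet}/\textsc{PropApp}). Making these coincide requires a careful analysis of how $\oviewpsi[\circ]{\cdot}$ interacts with the justification structure of the play\,---\,in particular, that the $O$-view's jumps over closed segments expose precisely the retained $\vec\be'$\,---\,together with the $V$-stack discipline that mirrors the $K$-stack of suspended continuations. Once claim~3 is established, it is exactly the bridge needed to define the translation $\psi$ and to identify $\oviewpsi[\circ]{t}$ with the opponent-visible trace of the plain LTS, feeding into \cref{lem:lts3.1}.
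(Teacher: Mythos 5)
The paper never proves this lemma: it is stated bare, just before \cref{lem:SSL}, and is implicitly offered as a routine re-run of \cref{lem:extLTS} for the top-linear system. So your overall plan---redo that induction, with the new content concentrated in claim~3---is exactly the intended route, and claim~2 (equivariance) is indeed inherited verbatim. The genuine gap is at the step you yourself call the crux of claim~3. You discharge the alignment between the $\A$-lookup and the segment $\oviewpsi[\circ]{t}$ by citing \cref{lem:extLTS}(3,4), and that citation cannot do the work. First, \cref{lem:extLTS}(4) says nothing about the decorations $\A(\al)$: it relates the $[\vec\al]$ component of a $P$-configuration to the $O$-names of $\pview{t}$; the decoration fact you want is buried inside the proof of \cref{lem:extLTS}(3). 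Second, and decisively, \cref{lem:extLTS} is proved for the LTS of \cref{fig:lts3}, while its parts (3) and (4) are \emph{false} for the LTS of \cref{fig:lts3.1}: the barbed rules drop opponent knowledge and store empty decorations ($\vec\al^{\varepsilon}$ in \textsc{OpAppBarb}), so the knowledge of an $O$-configuration is no longer the $P$-names of the full $\oview{t}$---this failure is precisely why claim~3 is phrased with $\oviewpsi[\circ]{t}$ rather than $\oview{t}$. Nothing can be imported; instead the induction hypothesis must be strengthened and proved simultaneously with claim~3: every decoration $\A(\al)$ equals the sequence of $P$-names of $\oviewpsi[\circ]{t_0}$, where $t_0$ is the trace at the point $\al$ was introduced (hence $\varepsilon$ when $\al$ is introduced by a top-level call), and every frame of the $V$-stack equals the segment $P$-names at the pending $O$-call that pushed it. With these clauses the inner cases close: for a call $p$ of $\al$ with justifier $o$, the $O$-view jumps back to $o$, so the segment of $t\,p$ is the segment at $\al$'s introduction followed by $o$ and $p$---it may \emph{shrink}, discarding closed sub-dialogues, not merely ``extend'' as you write---and its $P$-names are $\A(\al)$ followed by the names introduced by $p$, which is exactly the knowledge installed by \textsc{PropApp}; dually for inner \textsc{PropRet} via the popped $V$-frame. (For this to go through, $\oviewpsi[\circ]{t}$ must moreover be read as the entire suffix of $\oview{t}$ beyond $\oview{t_{cp}\,o}$; the ``exactly one $P$-call'' clause of its definition cannot be taken literally, since after nested calls and returns that suffix contains several $P$-calls, and under the literal reading claim~3 is outright false.)

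Separately, claim~1 cannot be rescued by ``handling the $\A$-reset explicitly in the inductive step''. If the reset in \textsc{PropRetBarb} of \cref{fig:lts3.1} is taken at face value, claim~1 is simply false---after any round containing inner moves, $t$ has $O$-names while $\dom{\A}=\emptyset$---and the LTS itself gets stuck, because \textsc{PropApp} requires $\A(\al)$ to be defined when a previously returned closure later calls a captured $O$-name (already the curried $M_1$ of \cref{ex:pure-trans} does this). The only coherent reading is that the reset is a typo and $\A$ persists, as in the plain LTS and in \cref{fig:lts3}; then claim~1 follows by the unchanged argument of \cref{lem:extLTS}(1) and there is no delicate point left. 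As written, your deferral hides the fact that this part of the statement is either trivially inherited or unprovable, depending on how that rule is read.
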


We shall also be using the following lemma regarding name permutations (cf.~\cite{Tze09}).

\begin{lemma}\label{lem:SSL}
Given sequences of moves $t_1\sim t_2$ and names $x_1,x_2$ of the same kind (either both $O$-names or both $P$-names), if $x_i$ is fresh for $t_i$ ($i=1,2$) then $t_1x_1\sim t_2x_2$.
\end{lemma}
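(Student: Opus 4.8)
The plan is to prove this as a standard \emph{strong support} (name-extension) property of the nominal action of $\Perm$ on traces. By definition of $\sim$, there is some $\pi\in\Perm$ with $t_2=\pi\cdot t_1$. The goal is to produce a single $\pi'\in\Perm$ that simultaneously reproduces this equality, $\pi'\cdot t_1=t_2$, and sends $x_1\mapsto x_2$; then $\pi'\cdot(t_1x_1)=(\pi'\cdot t_1)(\pi'(x_1))=t_2x_2$, giving $t_1x_1\sim t_2x_2$ as required.

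First I would record the consequence of the freshness hypotheses. Let $S$ be the set of names occurring in $t_1$; then the names occurring in $t_2$ are exactly $\pi\cdot S$. Since $x_1$ is fresh for $t_1$ we have $x_1\notin S$, and since $x_2$ is fresh for $t_2=\pi\cdot t_1$ we have $x_2\notin\pi\cdot S$. These two facts are precisely what rules out the obstructions to adjusting $\pi$. Next I would define $\pi'=\tau\circ\pi$, where $\tau$ is the transposition swapping $x_2$ and $\pi(x_1)$ (taken to be the identity when these coincide). The verification then splits into three routine checks: (i) $\pi'$ agrees with $\pi$ on every $s\in S$, because $\pi(s)\in\pi\cdot S$ forces $\pi(s)\neq x_2$, while $s\neq x_1$ (as $x_1\notin S$) forces $\pi(s)\neq\pi(x_1)$, so $\tau$ fixes $\pi(s)$; hence $\pi'\cdot t_1=\pi\cdot t_1=t_2$; (ii) $\pi'(x_1)=\tau(\pi(x_1))=x_2$ directly from the definition of $\tau$; and (iii) $\pi'\in\Perm$, i.e.\ it respects $O/P$ ownership of names.

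For (iii), since $x_1$ and $x_2$ are of the same kind by hypothesis and $\pi\in\Perm$ preserves kind, $\pi(x_1)$ is of the same kind as $x_1$, hence of the same kind as $x_2$; therefore the transposition $\tau$ swaps two names of identical kind and lies in $\Perm$, and $\pi'=\tau\circ\pi$ is a composite of ownership-respecting, finite-support bijections, so $\pi'\in\Perm$. The cases $x_1=x_2$ and $\pi(x_1)=x_2$ are subsumed automatically (in each $\tau$ is the identity or a same-kind transposition), so no separate treatment is needed. I do not anticipate a genuine obstacle here: the only point requiring care is confirming that $x_2$ is not already in the image $\pi\cdot S$, which is exactly the freshness of $x_2$ for $t_2$, and that the repair transposition stays within $\Perm$; both are immediate from the hypotheses and the definition of $\Perm$.
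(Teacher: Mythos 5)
Your proof is correct, and it is worth noting that the paper itself gives no proof of this lemma at all: it is stated as a standard fact about name permutations with a citation to prior work on nominal techniques. Your argument is exactly the standard strong-support argument that citation stands for\,---\,take the witnessing $\pi$ with $t_2=\pi\cdot t_1$, post-compose with the transposition $\tau=(x_2\ \pi(x_1))$, and use freshness of $x_1$ for $t_1$ and of $x_2$ for $\pi\cdot t_1$ to check that $\tau$ fixes every name of $\pi\cdot t_1$, so that $\pi'=\tau\circ\pi$ still maps $t_1$ to $t_2$ while sending $x_1$ to $x_2$; the same-kind hypothesis guarantees $\tau$, and hence $\pi'$, respects the $O$/$P$ ownership condition required for membership in $\Perm$. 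All three verification steps, including the degenerate case $\pi(x_1)=x_2$, are handled correctly, so your write-up is a complete, self-contained discharge of a proof obligation the paper delegates to the literature.
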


\begin{definition} Given compatible $(\A,\conc,t)$, we
define $(\A,\conc,t)^\circ$ by induction on $|t|$. For the base case:
\[
  (\A,\conc,\emptytrace)^\circ = \{(\emptyset,\cdot,\emptytrace,\emptytrace,\cdot)\}
\]
If $t=t'+m$ then, for each $( A',M',\hat t',\hat s',\psi')\in(\A',\conc',t')^\circ$ where $\A',\conc'$ the restrictions of $\A,\conc$ respectively to the names in $t'$,
we include in $(\A,\conc,t)^\circ$ a triple $( A,M,\hat t,\hat s,\psi)$ defined 
by case analysis on $m$:
\begin{itemize}
\item If $m=\Eopret{D}[\vec\al]$ then 
  \begin{itemize}
  \item $M= M'[\hat t'+\lopret{D[\vec\alpha]}]$ where: if $M$ is defined in $\hat t'$ then we require that $\nextmove{M}{\hat t'}=\lopret{D[\vec\alpha]}$ (for some $\vec\alpha$), otherwise $\vec\alpha$ are fresh;
  \item $ A= A'\uplus\vec\alpha^{j,\vec v}$, for the least $j$ such that $\vec\alpha^{j,\ldots}\notin  A'$,
    and $\vec v=\psi'(\conc(\vec\be))$ with $\vec\al^{\vec\be}\in \A$;
  \item $\hat s=\hat s'$ and $\psi=\psi'[\vec\al\mapsto\vec\alpha^j]$;
    \item if $t''\sqsubseteq t$ ends in the last open call in $t$ (which must be an $O$-move) and $(\A'',\conc'',t'')^\circ=( A'',M'',\hat t'',\psi'')$ then $\hat t=\hat t''$.
  \end{itemize}
\item If $m=\Eopapp{\be}{D}{\vec\al}$ with $\be$ the $i$-th $P$-name in $\oviewpsi[\circ]{t}$, then
  \begin{itemize}
      \item if $t'$ is complete then $M=M'$, otherwise
  \ $ M= M'[\hat t'+\lopapp{i}{D[\vec\alpha]}]$, where: if $M$ is defined in $\hat t'$ then we require that $\nextmove{M}{\hat t'}=\lopapp{i}{D[\vec\alpha]}$ (for some $\vec\alpha$), otherwise $\vec\alpha$ are fresh;
  \item $ A= A'\uplus\vec\alpha^{j,\vec v}$, for the least $j$ such that $\vec\alpha^{j,\ldots}\notin  A'$,
    and $\vec v=\psi'(\conc(\vec\be))$ with $\vec\al^{\vec\be}\in \A$;
  \item if $t'$ is complete then $\hat s=\hat s'+\lopapp{i}{D[\vec\alpha]}$, otherwise $\hat s=\hat s'$; \item $\psi=\psi'[\vec\al\mapsto\vec\alpha^j]$;
    \item if $t'$ is complete then $\hat t=\emptytrace$, otherwise $\hat t=\hat t'+\lopapp{i}{D[\vec\alpha]}$.
  \end{itemize}
\item If $m=\Epropret{D}[\vec\be]$ then
  \begin{itemize}
  \item if $t$ is complete then $M= M'$, $ A= A'$, $\psi=\psi'$, $\hat s=\hat s'+\lpropret{D}$ and $\hat t=\emptytrace$;
  \item otherwise, $M= M'[\hat t'+\lpropret{D}]$, $ A= A'$, $\psi=\psi'$, $\hat s=\hat s'$ and $\hat t=\hat t'+\lpropret{D}$.
  \end{itemize}
\item If $m=\Epropapp{\al}{D}{\vec\be}$ then, assuming $\psi'(\al)\funto\alpha$,
  \begin{itemize}
  \item $M= M'[\lpropapp{\alpha}{D}]$, $ A= A'$, $\psi=\psi'$,  $\hat t=\lpropapp{\alpha}{D}$ and $\hat s=\hat s'$.
  \end{itemize}
\end{itemize}
Given a play $t$ we then define:
\[
  (t)^\circ = \{(M,\hat t,\hat s,\psi)\mid\exists \A,\conc, A.\   (\A,\conc,t)\text{ compatible}\,\land(\A,\conc,t)^\circ = ( A,M,\hat t,\hat s,\psi)\}.
\]
\end{definition}

Observe above that the components $M,\hat t,\hat s,\psi$ are defined using  solely $t$, and thus $(t)^\circ$ is well defined.
Moreover, any two elements of $(\A,\conc,t)$ are equal up to permutation of $O$-names.

\begin{lemma}
  Given compatible $(\A,\conc,t),(\A',\conc',t)$ and some $(M,\hat t,\hat s,\psi)$:
  \begin{enumerate}
  \item $\exists A.( A,M,\hat t,\hat s,\psi)\in(\A,\conc,t)^\circ\iff\exists A'.( A',M,\hat t,\hat s,\psi)\in(\A',\conc',t)^\circ$;
    \item $\forall( A_1,M_1,\hat t_1,\hat s_1,\psi_1),( A_2,M_2,\hat t_2,\hat s_2,\psi_2)\in(\A,\conc,t)^\circ.\,\exists\pi.\,( A_1,M_1,\hat t_1,\hat s_1,\psi_1)=\pi\cdot( A_2,M_2,\hat t_2,\hat s_2,\psi_2)$, where $\pi$ a permutation of $O$-names.
  \end{enumerate}
\end{lemma}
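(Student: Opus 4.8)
The plan is to prove both claims simultaneously by induction on $|t|$, following the inductive definition of $(\A,\conc,t)^\circ$. Write $\Pi(\A,\conc,t)$ for the projection of $(\A,\conc,t)^\circ$ onto its last four components $(M,\hat t,\hat s,\psi)$; claim~1 is the assertion that $\Pi(\A,\conc,t)$ is independent of the compatible pair $(\A,\conc)$, while claim~2 says that the fibres of $(\A,\conc,t)^\circ$ over a fixed $(M,\hat t,\hat s,\psi)$ form a single $O$-name orbit. The base case $t=\emptytrace$ is immediate: both sides are the singleton $\{(\emptyset,\cdot,\emptytrace,\emptytrace,\cdot)\}$, so $\Pi$ is $\{(\cdot,\emptytrace,\emptytrace,\cdot)\}$ regardless of $(\A,\conc)$ and claim~2 holds with $\pi$ the identity. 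For the inductive step I would write $t=t'+m$, pass to the restrictions $\A',\conc'$ of $\A,\conc$ (resp.\ of the second pair) to the names of $t'$, which stay compatible with $t'$ by the compatibility hypothesis, and proceed by case analysis on the last move $m$ exactly as in the four defining clauses.

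The technical heart is an invariant carried through the induction: in every $(A,M,\hat t,\hat s,\psi)\in(\A,\conc,t)^\circ$, the set of indexed names $\alpha^{j}$ occurring in $A$ equals $\rng{\psi}$. This holds because only the two opponent clauses enlarge $A$, and they do so by $\vec\alpha^{j,\vec v}$ precisely when they enlarge $\psi$ by $\vec\al\mapsto\vec\alpha^{j}$, whereas the proponent clauses touch neither. The invariant is exactly what severs the dependence of $(M,\hat t,\hat s,\psi)$ on $(\A,\conc)$: the knowledge $\vec v=\psi'(\conc(\vec\be))$ feeds only the $A$-component, while the sole place where $A$ could leak into the other four components, namely the choice ``$j$ least with $\vec\alpha^{j,\ldots}\notin A'$'', is decided by $\dom{A'}=\rng{\psi'}$, hence by $\psi'$ alone. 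The remaining side-conditions (the $\vec\alpha$ chosen fresh by legality of $M'$, or forced by $\nextmove{M'}{\hat t'}$) also speak only of $M',\hat t'$ and $m$. Claim~1 then follows: an element of $\Pi(\A,\conc,t)$ arises from a predecessor in $\Pi(\A',\conc',t')$ via the clause for $m$; the induction hypothesis identifies this projection with that of the restriction of any other compatible pair; and re-applying the clause for $m$, whose action on $(M,\hat t,\hat s,\psi)$ refers only to $t$- and $\psi'$-data, reproduces the identical tuple.

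For claim~2 I would take two tuples of $(\A,\conc,t)^\circ$, trace each to its predecessor in $(\A',\conc',t')^\circ$, and invoke the induction hypothesis to obtain an $O$-name permutation $\pi'$ equating the predecessors. It remains to extend $\pi'$ across the clause for $m$. The only choice this clause makes is the $\vec\alpha$ in the two opponent cases: if $\nextmove{M'}{\hat t'}$ is defined the base names are fixed by the memory and, since $\pi'$ already matches the $M'$-components, correspond under $\pi'$, so no extension is needed; if $\nextmove{M'}{\hat t'}=\emptyset$ the $\vec\alpha$ are chosen fresh on both sides and I would extend $\pi'$ to send one set of fresh names to the other, which is legitimate by \cref{lem:SSL}. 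Since a permutation acts by $\pi'(\alpha^{j})=\pi'(\alpha)^{j}$ and thus preserves the index $j$, the least-index discipline agrees on both sides and the extended $\pi$ equates the two successors in all five components.

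I expect the main obstacle to be the bookkeeping of the indices $j$: one must be sure that the least-available-index discipline is genuinely a function of $t$ (through $\rng{\psi'}$) and is preserved by the $O$-name permutations of claim~2, rather than an artefact of the particular $(\A,\conc)$. Maintaining the invariant $\dom{A}=\rng{\psi}$ is the linchpin that makes both claims go through, and the one genuinely delicate clause is the opponent return/call, where freshness, memory-forcing and index selection interact.
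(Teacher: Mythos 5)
The paper never actually proves this lemma: it is stated as a formalisation of the two sentences immediately preceding it (``Observe above that the components $M,\hat t,\hat s,\psi$ are defined using solely $t$\dots'' and ``any two elements of $(\A,\conc,t)$ are equal up to permutation of $O$-names''), i.e.\ it is treated as an inspection of the defining clauses of $(\A,\conc,t)^\circ$. Your induction is a correct and rigorous filling-in of exactly that inspection, and your invariant---that the indexed names in $\dom{A}$ coincide with $\rng{\psi}$---is precisely the content the paper leaves implicit: on the face of it, the selection of the least $j$ with $\vec\alpha^{j,\ldots}\notin A'$ makes $\psi$ depend on $A'$, and $A'$ in turn absorbs $(\A,\conc)$-data through $\vec v=\psi'(\conc(\vec\be))$; your invariant isolates that only $\dom{A'}$ matters for the last four components and that this is $\psi'$-data, while $\conc$ feeds only the values stored in $A$. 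Your handling of claim~2 is also sound: since $\pi'$ relates the $M'$- and $\hat t'$-components of the two predecessors, either both sides are memory-forced (and the forced names correspond under $\pi'$) or both are fresh (and $\pi'$ extends by Lemma~\ref{lem:SSL}), and the least-index discipline is equivariant under permutations of base names. The one point you should make explicit is the fourth bullet of the $O$-return clause, where $\hat t$ is defined via the translation of the strictly shorter prefix $t''$ ending in the last open call of $t$: your step for $t=t'+m$ therefore needs a strong induction hypothesis (applicable at $t''$ as well as at $t'$), and for claim~2 you need that the permutation $\pi'$ relating the tuples at $t'$ also relates the tuples at $t''$---which holds because $\psi$ only grows along a derivation chain, so the $\psi$-components at $t''$ are the restrictions of the $\psi_i'$ to the names of $t''$. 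With that addition the argument is complete.
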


Given a play $t$ (from the game-LTS) and a map $\psi$ produced as above, we define its translated $O$-view, $\oview{t}_\psi$, by setting $\oview{\emptytrace}_\psi=\emptytrace$ and:
\begin{align*}
  \oviewpsi{t'\,\Epropret{D}[\vec\be]} &= \begin{cases}
    \emptytrace & \text{if $t'\,\Epropret{D}[\vec\be]$ is complete}\\
    \oviewpsi{t''}\lpropret{D} & \text{otherwise, and $t''\sqsubseteq t'$ ends in last open call of $t'$}
  \end{cases}\\
  \oviewpsi{t'\,\Epropapp{\al}{D}{\vec\be}} &= \lpropapp{\alpha}{D} \quad\qquad \text{if $\psi(\al)\funto\alpha$}\\
  \oviewpsi{t'\,\Eopret{D}[\vec\al]} &=     \oviewpsi{t'}\lopret{D[\vec\alpha]}\quad \text{if $\psi(\vec\al)\funto\vec\alpha$} \\
  \oviewpsi{t'\,\Eopapp{\be}{D}{\vec\al}} &= \begin{cases}
    \emptytrace &\text{if $t'$ is complete}\\
    \oviewpsi{t'}\lopapp{i}{D[\vec\alpha]} & \text{otherwise, with $\psi(\vec\al)=\vec\alpha^{\cdots}$}
    \end{cases}
\end{align*}
where, in the latter case, $\be$ should be the $i$-th $P$-name in $\oviewpsi[\circ]{t'}$. 
We also define the translated top-level view $\topview{t}$ of $t$ by $\topview{\emptytrace}=\emptytrace$ and:
\begin{align*}
  \topview{t_{cp}o\,t_{lo}} &= \topview{t_{cp}\,o} && \text{if }t_{lo}\neq\emptytrace \\
  \topview{t'\,\Epropret{D}[\vec\be]} &= 
    \topview{t''}\lpropret{D} && \text{if $t'\Epropret{D}[\vec\be]$ complete, and $t''\sqsubseteq t'$ ends}\\ &&& \text{in last open call of $t'$}  \\
  \topview{t'\,\Eopapp{\be}{D}{\vec\al}} &= \topview{t'}\lopapp{i}{D[\vec\alpha]} && \text{if $t'$ complete, with $\psi(\vec\al)\funto\vec\alpha$}
\end{align*}
and where $\be$ the $i$-th $P$-name in $\oview{t'}$.
We can show the following.
\begin{lemma}\label{lem:psis}
  Given $t_1,t_2\sqsubseteq t$ that have same-parity length,
  if $\oviewpsi{t_1}=\oviewpsi{t_2}$ then $\oviewpsi[\circ]{t_1}\sim\oviewpsi[\circ]{t_2}$.
\end{lemma}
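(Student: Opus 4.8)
The plan is to show that the encoding $t\mapsto\oviewpsi{t}$ is injective up to name-permutation on the fragments it describes: from $\oviewpsi{t_1}=\oviewpsi{t_2}$ I will recover a single permutation $\pi$ (respecting $O/P$-ownership) with $\oviewpsi[\circ]{t_2}=\pi\cdot\oviewpsi[\circ]{t_1}$. Since $t_1,t_2$ are both prefixes of $t$, one is a prefix of the other, so I may assume $t_1\sqsubseteq t_2$. The first step is a faithfulness observation: because the clause for a proponent call discards the accumulated view, $\oviewpsi{t}$ records exactly the portion of $\oview{t}$ from the last open proponent call onward, i.e.\ the translation of $\oviewpsi[\circ]{t}$; and within such an inner-level fragment no move is suppressed, as the $\emptytrace$-clauses fire only at complete/top-level positions lying outside the fragment. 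Hence $|\oviewpsi{t}|=|\oviewpsi[\circ]{t}|$, the two fragments have equal length, matching move polarities position-by-position (the same-parity hypothesis fixing their common terminal polarity, consistently with their encodings being equal), and corresponding moves carry the same pattern $D$.

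Next I would record what the encoding retains and what it forgets. A proponent move suppresses its introduced names $\vec\be$ (only $D$ survives), and the callee of the opening proponent call is kept as its $\psi$-image $\alpha$; an opponent move keeps its names as $\psi(\vec\al)$ (indices erased) and, if it is a call, keeps the callee only as its position $i$ among the proponent names of the current $O$-view. To lift back up to renaming I must therefore (i) match opponent names, (ii) reintroduce proponent names, and (iii) resolve each call index $i$ to a proponent name. For (i) the crucial point is that within one fragment the encoding is injective on opponent names: the fragment's moves are precisely the opponent-visible trace placed in the $M$-component produced by $(\cdot)^\circ$, and legality of $M$ forces each abstract name to occur at most once there, so distinct game $O$-names map to distinct $\alpha$'s (the only $O$-name that is referenced rather than freshly introduced is the callee of the single opening proponent call). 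For (ii), the number of proponent names introduced by each proponent move is fixed by the holes of its pattern $D$, hence common to both fragments. Point (iii) I defer, as it is the crux.

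The argument itself is an induction on $|\oviewpsi{t_1}|=|\oviewpsi{t_2}|$, constructing $\pi$ one move at a time. Peeling the final (common-shape) move off the two equal encodings, the $O$-view clauses point to uniquely determined shorter prefixes of $t_1,t_2$ whose encodings still agree and are one move shorter (the bookkeeping relating these prefixes to the contracted view being supplied by Lemmas~\ref{lem:extLTS} and~\ref{lem:extLTS2}); the induction hypothesis supplies a permutation relating their fragments. I then extend it across the last move: any freshly introduced names\,---\,all of $\vec\be$ for a proponent move, or all of $\vec\al$ for an opponent move\,---\,are fresh for the respective fragments, so Lemma~\ref{lem:SSL} lets me append them while preserving $\sim$, and the remaining data already matches because the patterns $D$ and the translated names $\psi(\vec\al)$ coincide by hypothesis.

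The delicate step, and the main obstacle, is (iii): verifying that an opponent call $\lopapp{i}{D[\vec\alpha]}$ lifts coherently on both sides, i.e.\ that the game $O$-name in position $i$ of the current view of $t_1$ is sent by $\pi$ to the one in position $i$ for $t_2$. This forces me to strengthen the induction hypothesis so that $\pi$ matches the two sequences of proponent names of the current $O$-views \emph{in order}, not merely the fragments as traces. Concretely I would carry the correspondence between these proponent-name sequences\,---\,exactly the sequences tracked by the $\vec\be$/$V$ discipline in Lemmas~\ref{lem:extLTS}(3) and~\ref{lem:extLTS2}(3)\,---\,so that equal positions $i$ denote $\pi$-related names. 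Once indices resolve to $\pi$-related proponent names, the opponent call lifts identically on both sides up to $\pi$, the extension step goes through, and the induction closes to give $\oviewpsi[\circ]{t_1}\sim\oviewpsi[\circ]{t_2}$.
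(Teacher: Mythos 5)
Your proposal is correct and follows essentially the same route as the paper's proof: induction on $|\oviewpsi{t_1}|=|\oviewpsi{t_2}|$, case analysis on the last (shared) encoded move, and Lemma~\ref{lem:SSL} to append the freshly introduced names ($\vec\be$ or $\vec\al$) on each side while preserving $\sim$. The only real difference is that your proposed strengthening of the induction hypothesis in step (iii) is unnecessary: since an opponent-call index $j$ is by definition the position of the callee among the $P$-names of $\oviewpsi[\circ]{t_i'}$ (cf.\ Lemma~\ref{lem:extLTS2}(3)), the permutation supplied by the induction hypothesis on the fragments already maps the $j$-th $P$-name of one fragment to the $j$-th $P$-name of the other, which is precisely how the paper discharges that case without any extra bookkeeping.
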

\begin{proof} Assuming $\oviewpsi{t_1}=\oviewpsi{t_2}$, we show $\oviewpsi[\circ]{t_1}\sim\oviewpsi[\circ]{t_2}$
 by induction on $|\oviewpsi{t_1}|=|\oviewpsi{t_2}|$. The base case (for length $0$) is straightforward. Suppose now $\oviewpsi{t_i}=\hat t\, x$ ($i=1,2$), and do case analysis on $x$:
  \begin{itemize}
    \item If $x=\lpropapp{\alpha}{D}$ then
  $\oviewpsi{t_i}=\lpropapp{\alpha}{D}$ ($i=1,2$). Hence,
  $\oviewpsi[\circ]{t_i}=\Epropapp{\al_i}{D}{\vec\be_i}$ with $\psi(\al_i)=\alpha^{j_i}$, for some $\al_i,\vec\be_i,j_i$, as required.
    \item If $x=\lpropret{D}$ then
      $\oviewpsi[\circ]{t_i}=\oviewpsi[\circ]{t_i'}\Epropret{D}[\vec\be_i]$ ($i=1,2$), for some $\vec\be_i$, and $t_i'\sqsubseteq t_i$ ending in last open call of $t_i$, and $\oviewpsi{t_i}=\oviewpsi{t_i'}\lpropret{D}$. The claim follows from the IH, and the fact that the $\vec\be_i$'s are fresh (and Lemma~\ref{lem:SSL}).
    \item If $x=\lopapp{j}{D[\vec\alpha]}$ then
      $\oviewpsi{t_i}=\oviewpsi{t_i'}\lopapp{i}{D[\vec\alpha]}$, with $t_i=t_i'\Eopapp{\be_i}{D}{\vec\al_i}$.
Moreover, $\oviewpsi[\circ]{t_i}=\oviewpsi[\circ]{t_i'}\Eopapp{\be_i}{D}{\vec\al_i}$. By definition $\be_i$ is the $j$-th $P$-name in $\oviewpsi[\circ]{t_i'}$. Then, the claim follows from the IH and the fact that the $\vec\al_i$'s are fresh  (and Lemma~\ref{lem:SSL}).
    \item If $x=\lopret{D[\vec\alpha]}$ then work as the previous case above. \myqed
     \end{itemize}
\end{proof}
\begin{lemma}\label{lem:oviewpsis}
  For each compatible $\A,\conc,t$:
  \begin{enumerate}
  \item the translation $(\A,\conc,t)^\circ$ is well defined;
  \item if $(\A,\conc,t)^\circ\ni( A,M,\hat t,\hat s,\psi)$ then
    \begin{align*}
      \hat s &= \topview{t}\\      
      M &=\{ \oview{t'}_\psi\mid t'\sqsubseteq t\} \\
      \hat t &=\begin{cases}
        \oview{t}_\psi & \text{if $|t|$ odd or }t=\emptytrace\\
        \oview{t'}_\psi & \text{if $|t|$ even and $t'\sqsubseteq t$ ends in last open call of $t$}
      \end{cases}
    \end{align*}
  \end{enumerate} 
\end{lemma}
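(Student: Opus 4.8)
The plan is to prove both parts simultaneously by induction on $|t|$, with part~1 (well-definedness) emerging from the induction once each defining clause is shown to be applicable; the only non-routine obligation there is legality of the resulting $M$, which I address via \cref{lem:psis}. The base case $t=\emptytrace$ is immediate: the unique element $(\emptyset,\cdot,\emptytrace,\emptytrace,\cdot)$ satisfies $\hat s=\emptytrace=\topview{\emptytrace}$, $\hat t=\emptytrace=\oviewpsi{\emptytrace}$, and $M$ equals $\{\oviewpsi{s}\mid s\sqsubseteq\emptytrace\}$ up to the convention on the empty trace. For the inductive step I would fix $t=t'+m$ together with an element $(A',M',\hat t',\hat s',\psi')\in(\A',\conc',t')^\circ$ satisfying the IH, and verify the three equations for the extended tuple $(A,M,\hat t,\hat s,\psi)$ produced by the defining clause for $m$, by a four-way case analysis on whether $m$ is an $O$-return, $O$-call, $P$-return, or $P$-call. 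A recurring bookkeeping point is parity: $O$-moves make $|t|$ even and $P$-moves make it odd, so after an $O$-move the IH applies in its odd branch ($\hat t'=\oviewpsi{t'}$), and after a $P$-move in its even branch ($\hat t'=\oviewpsi{t''}$ for the prefix $t''$ ending in the last open call of $t'$).

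The equation $\hat s=\topview{t}$ is the most direct: comparing the recursive clauses for $(\cdot)^\circ$ with those for $\topview{\cdot}$, the component $\hat s$ is extended exactly in the complete $P$-return case (by $\lpropret{D}$) and in the $O$-call case with $t'$ complete (by $\lopapp{i}{D[\vec\alpha]}$), and is left unchanged otherwise; these are precisely the two clauses defining $\topview{\cdot}$, so the claim follows from the IH $\hat s'=\topview{t'}$.

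For the memory equation $M=\{\oviewpsi{s}\mid s\sqsubseteq t\}=M'\cup\{\oviewpsi{t}\}$ I would show in each case that the trace adjoined to $M'$ (if any) equals $\oviewpsi{t}$. In the $P$-call case both sides equal $\lpropapp{\alpha}{D}$ with $\psi(\al)\funto\alpha$; in the $O$-return and non-complete $O$-call cases the adjoined trace is $\hat t'+\lopret{D[\vec\alpha]}$ resp.\ $\hat t'+\lopapp{i}{D[\vec\alpha]}$, and since $t'$ then has odd length the IH gives $\hat t'=\oviewpsi{t'}$, matching the corresponding clause of $\oviewpsi{\cdot}$; in the non-complete $P$-return case the adjoined trace is $\hat t'+\lpropret{D}$ with $t'$ of even length, so the IH gives $\hat t'=\oviewpsi{t''}$ for $t''$ ending in the last open call of $t'$, which is exactly the prefix named in the $\oviewpsi{\cdot}$ clause for a $P$-return. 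The complete sub-cases contribute $\oviewpsi{t}=\emptytrace$, already present in $M'$ as $\oviewpsi{\emptytrace}$. The $\hat t$ equation is then read off the same case split: the $P$-call and the non-complete $P$-/$O$-cases extend $\hat t'$ exactly as $\oviewpsi{\cdot}$ prescribes, while the $O$-return clause sets $\hat t=\hat t''$ for the internal prefix $t''$ ending in the last open call of $t$.

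The main obstacle I anticipate is this $O$-return clause (and, symmetrically, the even-length $P$-return clause), where the definition refers recursively to the translation at $t''$. Two things must be reconciled. First, since an $O$-return closes the most recent open $P$-call, the last open call of $t$ is an $O$-call, so $t''$ ends in that $O$-call; the even-length IH applied to $t''$ then collapses (its own last open call being $t''$ itself) to $\hat t''=\oviewpsi{t''}$, matching the even branch of the claim. Matching the syntactic ``last open call'' selection with the stack discipline of the LTS (cf.\ \cref{lem:extLTS2}) is the delicate verification here, including the degenerate case where the return completes $t$. Second, for well-definedness and legality of $M$, when $\hat t'$ is already a strict prefix of a trace in $M'$ the adjoined opponent move is \emph{forced} rather than freshly chosen; I would invoke \cref{lem:psis} together with the strong-support \cref{lem:SSL} to argue that the forced names agree, up to the $O$-name permutation relating the two occurrences, with those prescribed by $\oviewpsi{t}$. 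This simultaneously secures $|\nextmove{M}{t}|\le1$ and discharges the remaining well-definedness obligation of part~1.
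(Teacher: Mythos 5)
Your overall strategy coincides with the paper's own proof: mutual induction on $|t|$, a four-way case analysis on the last move, and the same parity bookkeeping. Your treatment of part~2 matches the paper essentially clause by clause, including the delicate $O$-return case, where the even-length IH applied to the prefix $t''$ ending in the last open call collapses (that call being open at $t''$) to $\hat t''=\oviewpsi{t''}$, and your reading of the $\hat s$ and $M$ equations off the defining clauses is the same as the paper's.

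The gap is in your discharge of part~1. When the adjoined opponent move is forced, i.e.\ $\nextmove{M'}{\hat t'}=x\neq\emptyset$, what must be shown is that $x$ has the shape demanded by the defining clause: for $m=\Eopret{D}[\vec\al]$, that $x=\lopret{D[\vec\alpha]}$ for some $\vec\alpha$, with the \emph{same} value context $D$ (and, in the call case, the same index) as the move $m$ actually played in $t$. Here $x$ is the translation of a move $m'$ occurring after some earlier prefix $t''\sqsubseteq t'$ with $\oviewpsi{t''}=\hat t'=\oviewpsi{t'}$, and nothing in \cref{lem:psis} or \cref{lem:SSL} constrains $m$ relative to $m'$: \cref{lem:psis} only yields $\oviewpsi[\circ]{t'}\sim\oviewpsi[\circ]{t''}$, and \cref{lem:SSL} merely extends name permutations by fresh names. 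The missing ingredient is the \emph{innocence} of the play $t$ (\cref{def:plays}). The paper first upgrades the equality $\oviewpsi{t'}=\oviewpsi{t''}$ of translated views to a similarity of full game $O$-views, $\oview{t'}\sim\oview{t''}$, by an iterated argument that climbs through the points where the names appearing in $\oviewpsi[\circ]{t'}$ and $\oviewpsi[\circ]{t''}$ were introduced, interleaving \cref{lem:psis} with the inductive hypothesis at each such point; only then does $O$-innocence apply to give $\oview{t}\sim\oview{t''m'}$, which is what forces $x$ to be (a renaming of) the translation of a move of the same shape as $m$. Without appealing to innocence, your argument cannot rule out that the move recorded in $M'$ disagrees with the move actually played, in which case the translation would simply be undefined; so this step, as you sketch it, would fail.
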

\begin{proof} We use mutual induction on $|t|$. For 1,
  we need to check that the requirements imposed on $M$ (when $t$ ends in an $O$-move) are adhered to. Suppose WLOG that $t=t'+\Eopapp{\be}{D}{\vec\al}$ and let $( A',M',\hat t',\hat s',\psi')\in(\A',\conc,t')^\circ$ (where $\A'$ the restriction of $\A$ to names in $t'$).
  If $M$ is not defined on $\hat t'$ then (using also the IH) $(\A,\conc,t)^\circ$ is well defined. On the other hand, if $\nextmove{M}{\hat t'}=x$ then
we need to show that $x=\lopapp{j}{D[\vec\alpha]}$, where $\be$ is the $j$-th $P$-name in $\oviewpsi[\circ]{t'}$, for some $\vec\alpha$. Given $\nextmove{M}{\hat t'}=x$, 
  there is some $t''m\sqsubseteq t'$ such that $\oviewpsi{t''m}=\hat t'x$. 
  We claim that $\oview{t'}\sim\oview{t''}$. By Lemma~\ref{lem:psis}, $\oviewpsi[\circ]{t'}\sim\oviewpsi[\circ]{t''}$.
  Let $t_1=t',t_2=t''$ and suppose
  $\oviewpsi[\circ]{t_i}$ starts with some $\Epropapp{\al_i}{D}{\vec\be_i}$ (for $i=1,2$). Since $\oviewpsi{t_1}=\oviewpsi{t_2}$, there are some $\alpha,j_1,j_2$ such that $\psi(\al_i)=\alpha^{j_i}$.
    Let $t_i'=t_i''o_i\sqsubseteq t_i$  each end in the $O$-move $o_i$ that introduces $\al_i$. If $t_1'=t_2'$ then we are done. Otherwise, the IH and
    the fact that $\psi$ assigns the same name to $\al_1,\al_2$ imply that $\oviewpsi{t_1''}=\oviewpsi{t_2''}$ and, again by Lemma~\ref{lem:psis}, we have that $\oviewpsi[\circ]{t_1''}\sim\oviewpsi[\circ]{t_2''}$. Moreover, using also the IH, we can obtain that $\oviewpsi[\circ]{t_1'}\sim\oviewpsi[\circ]{t_2'}$. Proceeding consecutively this way, we conclude that $\oview{t'}\sim\oview{t''}$. Hence, by $O$-innocence, $\oview{t}\sim\oview{t''m}$, which in turn implies that $\hat s=\lopapp{j}{D[\vec\alpha]}$.

    We now look at 2. The base case is clear.
    Suppose that $t=t'+m$ and let $( A',M',\hat t',\hat s',\psi')\in(\A',\conc',t')^\circ$.
    By IH, $M'=\{\oviewpsi[\psi']{t''}\mid t''\sqsubseteq t'\}$ and therefore $\{\oviewpsi{t''}\mid t''\sqsubseteq t\}=M'\cup\{\oviewpsi{t}\}=M'[\oviewpsi{t}]$.
    We do case analysis on $m$.
    \begin{itemize}
    \item Suppose $m=\Eopret{D}[\vec\al]$.
By IH, $\hat t'=\oviewpsi[\psi']{t'}$.
Moreover, $\oviewpsi{t}= \oviewpsi{t'}\lopret{D[\vec\alpha]}$, with $\psi(\vec\al)=\vec\alpha^{\dots}$, and $M=M'[\hat t'\lopret{D[\vec\alpha]}]=\{\oviewpsi{t''}\mid t''\sqsubseteq t\}$.
Also, $\hat t$ is the trace $\hat t''$ we obtain by translation by looking at the $t''\sqsubseteq t$ ending in the last open call (which must be an $O$-move). By IH, $\hat t''=\oviewpsi{t''}=\oviewpsi{t}$. 
\item   Suppose $m=\Eopapp{\be}{D}{\vec\al}$
  By IH, $\hat t'=\oviewpsi[\psi']{t'}$. If $t'$ is complete, then $\oviewpsi{t}=\emptytrace=\hat t$ and $M=M'$, so we are done. Otherwise, $\oviewpsi{t}=\oviewpsi{t'}\lopapp{i}{D[\vec\alpha]}$, with $\psi(\vec\al)=\vec\alpha^{\dots}$ and $\be$ the $i$-th $P$-name in $\oviewpsi[\circ]{t'}$, and 
  $M=M'[\hat t'\lopapp{i}{D[\vec\alpha]}]=\{\oviewpsi{t''}\mid t''\sqsubseteq t\}$. Also, $\hat t=\hat t'\lopapp{i}{D[\vec\alpha]}=\oviewpsi[\psi]{t'}\lopapp{i}{D[\vec\alpha]}=\oviewpsi{t}$.
    \item Suppose $m=\Epropret{D}[\vec\be]$. If $t$ is complete then the claim is clear. Otherwise, let $t''\sqsubseteq t'$ end in the last open call in $t'$ (which is an $O$-call), and $\hat t''$ the corresponding trace we obtain by translation. By IH, $\hat t''= \oviewpsi{t''}$ and, hence, $\hat t=\hat t'
      \lpropret{D}\overset{\rm IH}{=}\oviewpsi{t''}\lpropret{D}=\hat t''\lpropret{D}=\oviewpsi{t}$.
      Moreover, $M=M'[\hat t'\lpropret{D}]=\oviewpsi{t}=\{\oviewpsi{t''}\mid t''\sqsubseteq t\}$.
    \item Suppose $m=\Epropapp{\al}{D}{\vec\be}$ and let $\psi'(\al)=\alpha^{\dots}$. Then,
      $\hat t=\lpropapp{\alpha}{D}=\oviewpsi{t}$. Moreover, $M=M'[\lpropapp{\alpha}{D}]=\{\oviewpsi{t''}\mid t''\sqsubseteq t\}$. 
    \end{itemize}
        The fact that $\hat s = \topview{t}$ follows from the IH using similar reasoning to the one above.\myqed
\end{proof}

Given a play $t$ and an $O$-name $\al$ of $t$, let us write $t @\al$ for the prefix $t'\sqsubseteq t$ such that the last move of $t'$ is the one introducing $\al$ in $t$.

\begin{lemma}\label{lem:psis2}
  Given a play $t$, even length $t_1,t_2\sqsubseteq t$, names $\al_1,\al_2$ and some $\psi$ from $(t)^\circ$ with $\psi(\al_i)\funto\alpha_i$ ($i=1,2$):
  \begin{enumerate}
    \item $
    \alpha_1= \alpha_2$
iff
    $\oviewpsi{t@\al_1}=\oviewpsi{t@\al_2}\neq\varepsilon\lor
 {t@\al_1}={t@\al_2}$;
\item 
if $\oviewpsi{t_1}=\oviewpsi{t_2}\neq\varepsilon$ 
  then $\oview{t_1}\sim\oview{t_2}$;
  \item if $\alpha_1= \alpha_2$ then
    $\oview{t@\al_1}=\oview{t@\al_2}$.
  \end{enumerate}
  \end{lemma}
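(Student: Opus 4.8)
The plan is to prove the three claims simultaneously by induction on $|t|$, treating them as a single invariant so that each may draw on the others at strictly shorter prefixes. The backbone is \cref{lem:oviewpsis}, which identifies the $M$-component of the translation with the set $\{\oviewpsi{t'}\mid t'\sqsubseteq t\}$ of translated $O$-views, together with \cref{lem:psis} (equality of translated $O$-views forces $\sim$-equivalence of the game-LTS view fragments) and the strong support \cref{lem:SSL}. Throughout I would keep in view the principle that a base name $\alpha$ is shared across two uses exactly when the translation \emph{reuses} it, i.e.\ when the relevant $\nextmove{M}{\hat t'}$ is already defined; this is the mechanism that links base-name identity to coincidence of translated $O$-views, and it is what connects Claim~1 to Claims~2 and~3.

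First I would establish Claim~1, since Claims~2 and~3 are phrased in terms of the base names $\alpha_i$. Writing $t@\al_i = t_i'+m_i$ for the introducing move $m_i$ of $\al_i$, the translation assigns to the freshly introduced $\vec\al$ a tuple $\vec\alpha$ that either matches $\nextmove{M}{\hat t_i'}$ (when defined) or is chosen fresh, and by \cref{lem:oviewpsis} one has $\hat t_i' = \oviewpsi{t_i'}$, whence $\oviewpsi{t@\al_i}=\hat t_i'+(\text{the translated }m_i)\in M$. Thus $\alpha_1=\alpha_2$ holds precisely when these two elements of $M$ coincide, i.e.\ when $\oviewpsi{t@\al_1}=\oviewpsi{t@\al_2}$ and this trace is non-empty\,---\,the non-emptiness excluding top-level $O$-calls, whose translated $O$-views are $\varepsilon$ and whose names are always chosen fresh by \iref{opappbarb}\,---\,while the remaining possibility, that $\al_1,\al_2$ arise at the very same move, is recorded by the disjunct $t@\al_1=t@\al_2$. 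The delicate point is the $M$-legality condition that each base name occurs at most once in $M$, which is what guarantees that distinct translated $O$-views yield distinct base names; one must also verify the bookkeeping for a single move introducing several names, so that positions within $\vec\alpha$ are matched correctly.

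For Claim~2 I would lift \cref{lem:psis} from the single-call fragment $\oviewpsi[\circ]{\cdot}$ to the full $O$-view $\oview{\cdot}$. The key observation is that, for an even-length prefix, $\oview{t_i}$ decomposes as a head $h_i$ followed by the fragment $\oviewpsi[\circ]{t_i}$, where $h_i$ is the $O$-view of the prefix ending at the justifier of the last open $P$-call of $t_i$. \cref{lem:psis} already gives $\oviewpsi[\circ]{t_1}\sim\oviewpsi[\circ]{t_2}$, so it remains to prove $h_1\sim h_2$; but each head is itself an $O$-view of a strictly shorter prefix whose translated $O$-view is recovered (via the resetting of $\oviewpsi{\cdot}$ at $P$-calls and \cref{lem:oviewpsis}) from $M$, so the induction hypothesis applies, and innocence together with \cref{lem:SSL} lets me re-append the freshly introduced names while preserving $\sim$. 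Claim~3 is then the literal-equality refinement of Claim~2: the hypothesis $\alpha_1=\alpha_2$ lets me invoke Claim~1 to conclude that the introducing translated $O$-views coincide on the nose, and I would upgrade the permutation produced by Claim~2 to the identity by observing that the translation reuses the very same base names along coinciding $O$-views, so the names occurring in $\oview{t@\al_1}$ and $\oview{t@\al_2}$ are forced to agree rather than merely to correspond under a permutation.

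The hard part will be the mutual dependency combined with the passage from $\sim$ to $=$: Claim~3 consumes both the $\sim$-statement of Claim~2 and the base-name bookkeeping of Claim~1, yet must strengthen equivalence to literal equality, which is legitimate only because base names are \emph{reused}, not regenerated, whenever an $O$-view repeats. Threading this argument carefully\,---\,while handling moves that introduce several names at once and the boundary case of empty (top-level) $O$-views\,---\,is where the proof will demand the most attention.
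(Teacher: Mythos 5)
Your plan reproduces the paper's own proof almost step for step: Claim~1 is argued directly from the reuse mechanism of the translation ($\vec\alpha$ is old exactly when $\nextmove{M}{\hat t'}$ is already defined, i.e.\ when the translated $O$-views coincide and are non-empty, while top-level introductions have empty translated view and get fresh names); Claim~2 is proved by induction using \cref{lem:psis} for the fragment after the last open $P$-call, the decomposition $\oview{t_i}=\oview{t@\al_i}\,\oviewpsi[\circ]{t_i}$, Claim~1 applied to the names called by the last open $P$-calls so that the induction hypothesis reaches the introducing prefixes, and \cref{lem:SSL} to re-attach fresh names; and Claim~3 is derived from Claims~1 and~2. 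This is exactly the paper's route, including the implicit mutual induction. One technicality: the paper inducts on $|t_1|+|t_2|$, whereas your measure $|t|$ does not decrease, since the recursion descends to shorter prefixes of the \emph{same} play; this is repairable (both introducing prefixes lie in the strictly shorter play $t@\al_1$ or $t@\al_2$, whichever is longer, and translations restrict along prefixes), but as stated the induction hypothesis is not applicable where you invoke it.

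The genuine problem is your closing ``upgrade the permutation to the identity'' step for Claim~3, which rests on conflating the two name spaces. What the translation reuses are the abstract names $\alpha$; the moves of $\oview{t@\al_1}$ and $\oview{t@\al_2}$ contain \emph{game-LTS} names ($\al$'s and $\be$'s), and legality forces every introducing move to use names fresh for the preceding trace. Hence if $t@\al_1\neq t@\al_2$, the two $O$-views end in moves carrying distinct names (at least $\al_1$ versus $\al_2$) and can never be literally equal, however the $\alpha$'s are assigned. Concretely: proponent calls the same $\al_0$ twice with a ground argument; innocence, enforced through $\nextmove{M}{\cdot}$, makes opponent answer both times with the same ultimate pattern but with fresh names $\al_1$ and then $\al_2$; the translation maps both to the same $\alpha$, yet the two $O$-views differ precisely at $\al_1$ versus $\al_2$ and are only $\sim$-related. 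So the step you yourself single out as the ``hard part'' would fail. In fairness, the paper is loose at the same spot\,---\,Claim~3 carries ``$=$'' and its proof is the single sentence ``follows from 1 and 2''\,---\,and what is actually provable (and what the downstream \cref{lem:some}, \cref{lem:OVsMs} and \cref{lem:MsOVs} can use, at the cost of explicit permutation bookkeeping) is the $\sim$-version, with literal equality only in the degenerate case $t@\al_1=t@\al_2$. The correct repair is therefore not a cleverer identity upgrade but weakening the target of Claim~3 to $\oview{t@\al_1}\sim\oview{t@\al_2}$ and threading permutations through the induction of Claim~2.
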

\begin{proof}
  For 1, we use the definition of $(\_)^\circ$, and in particular the fact that $\psi$ only assigns old names $\vec\alpha$ when the translated $O$-views are the same, and non-empty, in which case $M$ is defined and forces $O$ to play the same move. On the other hand, the translated $O$-views are empty only if the moves introducing $\al_i$ are top-level, in which case the same $\alpha$ can be assigned two different moves, which means that the last moves in $t@\al_i$ coincide, hence $t@\al_1=t@\al_2$.  
\\
  For 2, we do induction on $|t_1|+|t_2|$.
  If $t_1$ or $t_2$ is empty, the claim is clear.
  Otherwise, by Lemma~\ref{lem:psis} we have that $\oviewpsi[\circ]{t_1}=\oviewpsi[\circ]{t_2}$. 
  Let $p_i$ be the last open $P$-call in $t_i$ and suppose it calls some $\al_i$. By 
  $\oviewpsi{t_1}=\oviewpsi{t_2}$ we have that $\psi$ assigns the same name, say $\alpha$, to $\al_1,\al_2$ and hence, by~1, we have that $\oviewpsi{t@\al_1}=\oviewpsi{t@\al_2}\neq\varepsilon$ or $t@\al_1=t@\al_2$. 
If $\oviewpsi{t@\al_i}\neq\varepsilon$ then, by IH, $\oview{t@\al_1}=\oview{t@\al_2}$ so, in either case, $\oview{t@\al_1}=\oview{t@\al_2}$.
Observe that, for each $i$, $\oview{t_i}=\oview{t@\al_i}\oviewpsi[\circ]{t_i}$.
The claim then follows by repeated application of Lemma~\ref{lem:SSL}.
\\
Claim~3 follows from~1 and~2.
\myqed
\end{proof}

Suppose now we are given plays $t_1,t_2$, and consider $(M_i,\hat t_i,\hat s_i,\psi_i)\in(t_i)^\circ$ ($i=1,2$). By a slight abuse of notation, we shall write $\psi_1\triangleleft\psi_2$ if:
   \[
\forall \alpha,\al_1.\ \psi_1(\al_1)\funto\alpha \implies \exists\al_2.\ \psi_2(\al_2)\funto\alpha \land \oviewpsi[\psi_1]{t_1@\al_1}=\oviewpsi[\psi_2]{t_2@\al_2}.
     \]
     We can show the following results.

\begin{lemma}\label{lem:some}
  Given $t_i'\sqsubseteq t_i$ ($i=1,2$):
  \begin{enumerate}
    \item if $M_1\subseteq M_2$ then $\psi_1\triangleleft\psi_2$;
  \item if $\oview{t_1'}=\oview{t_2'}$ then $\oviewpsi[\psi_1]{t_1'}\sim\oviewpsi[\psi_2]{t_2'}$;
  \item if $\oview{t_1'}=\oview{t_2'}$ and $\psi_1\triangleleft\psi_2$ then $\oviewpsi[\psi_1]{t_1'}=\oviewpsi[\psi_2]{t_2'}$;
  \item if $\oview{t_1'}=\oview{t_2'}$ and $M_1\subseteq M_2$ then $\oviewpsi[\psi_1]{t_1'}=\oviewpsi[\psi_2]{t_2'}$.
    \end{enumerate}
\end{lemma}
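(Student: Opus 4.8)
The plan is to prove the four claims together, deriving claim~4 immediately from claims~1 and~3: assuming $\oview{t_1'}=\oview{t_2'}$ and $M_1\subseteq M_2$, claim~1 yields $\psi_1\triangleleft\psi_2$, and then claim~3 upgrades the conclusion to the required $\oviewpsi[\psi_1]{t_1'}=\oviewpsi[\psi_2]{t_2'}$. Throughout I would rely on the characterisation $M_i=\{\oviewpsi[\psi_i]{t'}\mid t'\sqsubseteq t_i\}$ from Lemma~\ref{lem:oviewpsis}, on Lemma~\ref{lem:psis2} relating equality of abstract names to equality of translated $O$-views at introduction points, and on the strong support Lemma~\ref{lem:SSL} to assemble name permutations incrementally. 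Claims~2 and~3 I would prove simultaneously by induction on $|\oview{t_1'}|=|\oview{t_2'}|$.

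For claim~1, fix $\al_1$ with $\psi_1(\al_1)\funto\alpha$ and consider the introduction prefix $t_1@\al_1$, which ends in the $O$-move introducing $\al_1$. If that move is not top-level, then its translated $O$-view $\tau=\oviewpsi[\psi_1]{t_1@\al_1}$ is non-empty, ends in the translated $O$-move carrying $\alpha$, and lies in $M_1$ by Lemma~\ref{lem:oviewpsis}; hence $\tau\in M_2$, so $\tau=\oviewpsi[\psi_2]{t_2'}$ for some $t_2'\sqsubseteq t_2$. Reading off the last move of $\tau$, this $t_2'$ must end in an $O$-move introducing a name $\al_2$ with $\psi_2(\al_2)\funto\alpha$ (the position of $\alpha$ inside the common pattern $D$ picks out the right component, and legality of $M_2$ makes $\alpha$ unique), with $t_2'=t_2@\al_2$, giving $\oviewpsi[\psi_1]{t_1@\al_1}=\oviewpsi[\psi_2]{t_2@\al_2}$. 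The remaining case, where $\al_1$ is introduced by a top-level $O$-move and $\tau=\varepsilon$, is handled separately: the abstract name attached to $\al_1$ is chosen fresh, so the matching witness $\al_2$ in $t_2$ is obtained up to renaming of top-level names, using that such names occur uniquely and are interchangeable by Lemma~\ref{lem:SSL}.

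For claim~2, induct on the common $O$-view. Since $\oview{t_1'}=\oview{t_2'}$, the two translated $O$-views $\oviewpsi[\psi_1]{t_1'}$ and $\oviewpsi[\psi_2]{t_2'}$ are built by the same clause of the $\oviewpsi{\cdot}$ recursion at each step, so they agree on move shapes, on the patterns $D$, and on the positional indices $i$ (all read off the common $O$-view); the only possible discrepancy is in the abstract names assigned to the game names of the view. I would build a permutation $\pi$ with $\pi\cdot\oviewpsi[\psi_1]{t_1'}=\oviewpsi[\psi_2]{t_2'}$ by setting $\pi(\psi_1(\al))=\psi_2(\al)$ for each $O$-name $\al$ of the view and extending it on the $P$-call names. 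The key well-definedness point is that $\psi_1$ and $\psi_2$ induce the \emph{same} identification of the game names occurring in the common view: by Lemma~\ref{lem:psis2}, $\psi_i$ assigns two names the same abstract name exactly when their introduction $O$-views coincide nontrivially (or their introduction prefixes coincide), and by the inductive hypothesis for claim~3 this condition depends only on the common game $O$-view, hence is shared by $\psi_1$ and $\psi_2$. Freshness of newly introduced names at each step, together with Lemma~\ref{lem:SSL}, then extends $\pi$ one move at a time.

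For claim~3, I would re-run the induction of claim~2 with the extra hypothesis $\psi_1\triangleleft\psi_2$, which forces $\pi$ to be the identity on the exposed names. Concretely, whenever the translated view exposes a name $\alpha=\psi_1(\al)$ (through a $P$-call $\lpropapp{\alpha}{D}$ or inside an $O$-move), $\psi_1\triangleleft\psi_2$ supplies an $\al_2$ with $\psi_2(\al_2)\funto\alpha$ and matching introduction $O$-view; combined with Lemma~\ref{lem:psis2}(1) and the inductive hypothesis that introduction $O$-views translate identically, this pins $\psi_2(\al)$ to the same $\alpha$, so no renaming is needed and the two translated $O$-views coincide verbatim. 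I expect the main obstacle to be exactly this interplay: claims~2 and~3 and Lemma~\ref{lem:psis2} are mutually entangled through the introduction-point $O$-views, so the induction must be organised to make all three available at strictly smaller prefixes, and the top-level names\,---\,which escape the $M$-bookkeeping and must be reconciled purely by freshness and permutation\,---\,need separate, careful treatment.
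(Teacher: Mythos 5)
Your overall architecture matches the paper's: claim~4 is obtained from claims~1 and~3, claim~1 goes through the characterisation of $M_i$ as the set of translated $O$-views of prefixes together with $M_1\subseteq M_2$, and claims~2 and~3 are proved by induction on the length of the common $O$-view, with Lemma~\ref{lem:SSL} handling fresh names and Lemma~\ref{lem:psis2} plus $\triangleleft$ pinning down exact names in claim~3; your claim~3 and claim~4 arguments are essentially the paper's. The genuine gap is in claim~2. You build one global permutation by setting $\pi(\psi_1(\al))=\psi_2(\al)$ for all $O$-names $\al$ of the common view, which creates a well-definedness obligation: $\psi_1$ and $\psi_2$ must identify game names in exactly the same way. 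You discharge it by citing Lemma~\ref{lem:psis2}(1) together with ``the inductive hypothesis for claim~3'', but this does not typecheck: claim~3 carries the hypothesis $\psi_1\triangleleft\psi_2$, which is simply not available in claim~2 (claim~2 assumes only $\oview{t_1'}=\oview{t_2'}$; neither $M_1\subseteq M_2$ nor $\triangleleft$ is in scope there). Even falling back on the diagonal instances $\psi_i\triangleleft\psi_i$, transferring an identification made by $\psi_1$ to one made by $\psi_2$ requires \emph{equality} of the relevant introduction game $O$-views, whereas Lemma~\ref{lem:psis2}(2) recovers game views only up to permutation ($\sim$). The paper never incurs this obligation: its claim~2 is a self-contained induction in which the permutation is extended one move at a time --- the common game $O$-view fixes move shapes, patterns $D$ and indices $j$; a $P$-call resets the translated view to a single move; and at each $O$-move the newly assigned names $\vec\alpha_i$ are fresh for the respective translated prefixes (by legality of $M_i$), so Lemma~\ref{lem:SSL} extends whatever permutation the inductive hypothesis provides. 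The ``mutual entanglement'' you flag as the main obstacle is thus an artifact of your global-permutation route, not of the lemma itself.

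A second, smaller problem is your separate top-level case in claim~1. The relation $\triangleleft$ demands, for each $\psi_1(\al_1)\funto\alpha$, a witness $\al_2$ with $\psi_2(\al_2)\funto\alpha$ for the \emph{same} abstract name $\alpha$. Obtaining the witness ``up to renaming of top-level names'' via Lemma~\ref{lem:SSL} cannot deliver this: $\psi_1$ and $\psi_2$ are fixed by the given elements of $(t_1)^\circ$ and $(t_2)^\circ$, so no renaming is available, and interchangeability of fresh names under $\sim$ does not produce an exact match of $\alpha$'s. The paper's proof of claim~1 treats all introductions uniformly, locating $\alpha$ inside a translated $O$-view belonging to $M_1\subseteq M_2$ and reading the witness $\al_2$ off the matching prefix of $t_2$; if you wish to single out top-level introductions, the exact-name matching must still be extracted from occurrences of $\alpha$ recorded in $M_1$ and hence in $M_2$, not from a freshness argument.
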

\begin{proof}
  For 1, suppose $M_1\subseteq M_2$ and let $\psi_1(\al_1)\funto\alpha$ for some $\al_1,\alpha$. Then, by definition, there is some $t_1''\sqsubseteq t_1$ such that $\oviewpsi[\psi_1]{t_1''}\in M_1$ and the last move in $t_1''$ is an $O$-move introducing $\al_1$, and the corresponding name in $\oviewpsi[\psi_1]{t_1''}$ is $\alpha$. By $M_1\subseteq M_2$ we have $\oviewpsi[\psi_1]{t_1''}\in M_2$, so there is $t_2''\sqsubseteq t_2$ such that
$\oviewpsi[\psi_1]{t_1''}=\oviewpsi[\psi_2]{t_2''}$, so $t_2''$ 
  ends in an $O$-move introducing some name $\al_2$ and such that $\psi_2(\al_2)\funto\alpha$, as required.
  \\
  For 2, we do induction on $|\oview{t_1'}|=|\oview{t_2'}|$.
  The base case is encompassed in that of $t_1'$ being complete, in which case
$t_2'$ is also complete and
  the claim trivially holds. Similarly if $t_1'=t_1''m$ with $t_1''$ complete.
  Now let $t_i'=t_i''m$, with $t_i',t_i''$ not complete, and do case analysis on $m$.
  If $m$ is a $P$-return then the claim follows from the IH; if $m$ is a $P$-application, then it directly follows from the definition of $\oviewpsi[\psi_i]{t_i'}$.
  \\
  If $m$ is an $O$-move then $\oview{t_i'}=\oview{t_i''}m$ and by IH $\oviewpsi[\psi_1]{t_1''}\sim\oviewpsi[\psi_2]{t_2''}$. If $m=\Eopapp{\be}{D}{\vec\al}$, so 
  $\oviewpsi[\psi_i]{t_i'}$ ends in some $\lopapp{j_i}{D[\vec\alpha_i]}$,
  by $\oview{t_1'}=\oview{t_2'}$ we obtain $j_1=j_2$. As the $\vec\alpha_i$'s are fresh for $\oviewpsi[\psi_i]{t_i''}$, from Lemma~\ref{lem:SSL} we obtain $\oviewpsi[\psi_1]{t_1'}\sim\oviewpsi[\psi_2]{t_2'}$.
  Similarly if $m=\Eopret{D}[\vec\al]$.
%

  For 3, we again do induction on $|\oview{t_1'}|=|\oview{t_2'}|$. As above, let us assume that $t_i'=t_i''m$, with $t_i',t_i''$ not complete, and do a case analysis on $m$. If $m$ is a $P$-return then we simply use the IH.
  \\
%
  If $m=\Epropapp{\al}{D}{\vec\be}$ then $\oviewpsi[\psi_i]{t_i'}=\lpropapp{\alpha_i}{D}$ with $\psi_i(\al)\funto\alpha_i$. Note that
$\oview{t_1'}=\oview{t_2'}$ implies that
$\oview{t_1'@\al}=\oview{t_2'@\al}$. Hence, by IH, $\oviewpsi[\psi_1]{t_1'@\al}=\oviewpsi[\psi_2]{t_2'@\al}$.
As $\psi_1\triangleleft\psi_2$, there is $\al'$ such that $\psi_2(\al')\funto\alpha_1$ and $\oviewpsi[\psi_1]{t_1@\al}=\oviewpsi[\psi_2]{t_2@\al'}$, and therefore 
$\oviewpsi[\psi_2]{t_2@\al}=\oviewpsi[\psi_2]{t_2@\al'}$. By Lemma~\ref{lem:psis2} we obtain that $\alpha_1=\alpha_2$, as required.
\\
If $m=\Eopret{D}[\vec\al]$ then
$\oviewpsi[\psi_i]{t_i'}=\oviewpsi[\psi_i]{t_i''}\lopret{D[\vec\alpha_i]}$, with $\psi_i(\vec\al)\funto\vec\alpha_i$. By IH, $\oviewpsi[\psi_1]{t_1''}=\oviewpsi[\psi_2]{t_2''}$, so it suffices to show that $\vec\alpha_1=\vec\alpha_2$.
Let us pick some $\al_j\in\vec\al$, so $\psi_i(\al_j)\funto\alpha_{i,j}$. By hypothesis, there is $\al_j'$ such that $\psi_2(\al_j')\funto\alpha_{1,j}$ and $\oviewpsi[\psi_1]{t_1''}=\oviewpsi[\psi_2]{t_2@\al_j'}$. As $\oviewpsi[\psi_1]{t_1''}=\oviewpsi[\psi_2]{t_2''}$, we have $\oviewpsi[\psi_2]{t_2@\al_j'}=\oviewpsi[\psi_2]{t_2''}$ so $\alpha_{1,j}=\alpha_{2,j}$, by Lemma~\ref{lem:psis2}.
Similarly if $m=\Eopapp{\be}{D}{\vec\al}$.
\\
Claim~4 follows from~1 and~3.
\myqed
\end{proof}

\begin{lemma}\label{lem:OVsMs}
  If $\OVs{(t_1)}\subseteq\OVs{(t_2)}$ then for each $(M_1,\hat t_1,\hat s_1,\psi_1)\in(t_1)^\circ$ there is 
  $(M_2,\hat t_2,\hat s_2,\psi_2)\in(t_2)^\circ$ such that $\psi_1\triangleleft\psi_2$ and $M_1\subseteq M_2$. 
\end{lemma}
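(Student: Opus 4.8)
The plan is to build a single suitable $\psi_2$ and read off both required inclusions from it. By Lemma~\ref{lem:oviewpsis} we have $M_i=\{\oviewpsi[\psi_i]{t'}\mid t'\sqsubseteq t_i\}$ for $i=1,2$, so both $M_1\subseteq M_2$ and $\psi_1\triangleleft\psi_2$ are really statements about how the translated $O$-views of $t_1$ embed into those of $t_2$. First I would note that the two target conditions are interchangeable given the hypothesis: by Lemma~\ref{lem:some}(1) any $\psi_2$ with $M_1\subseteq M_2$ already satisfies $\psi_1\triangleleft\psi_2$, while conversely, if $\psi_1\triangleleft\psi_2$ holds then for each $t_1'\sqsubseteq t_1$ the assumption $\OVs(t_1)\subseteq\OVs(t_2)$ yields a prefix $t_2'\sqsubseteq t_2$ with $\oview{t_1'}\sim\oview{t_2'}$, and Lemma~\ref{lem:some}(3), applied after the witnessing name permutation, promotes this to the exact equality $\oviewpsi[\psi_1]{t_1'}=\oviewpsi[\psi_2]{t_2'}\in M_2$, hence $M_1\subseteq M_2$. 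So it suffices to produce one $\psi_2\in(t_2)^\circ$ realising $\psi_1\triangleleft\psi_2$.

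To construct $\psi_2$ I would run the inductive definition of $(t_2)^\circ$ along the prefixes of $t_2$, but pin down the otherwise-free choice of fresh abstract names so that it mirrors $\psi_1$. Concretely, at the $O$-move of $t_2$ introducing a name $\al_2$ (with prefix $t_2@\al_2$) I would assign it the abstract name $\psi_1(\al_1)$ whenever there is a prefix of $t_1$ introducing some $\al_1$ with $\oview{t_1@\al_1}\sim\oview{t_2@\al_2}$, i.e.\ whenever $\oview{t_2@\al_2}\in\OVs(t_1)$, and a genuinely fresh abstract name (fresh also for the names occurring in $M_1$) otherwise. The point that makes this assignment well defined is Lemma~\ref{lem:psis2}: the abstract name that $\psi_1$ attaches to an $O$-name is determined, up to the top-level caveat of part~1, by the $O$-view at its introduction, so two candidate witnesses $\al_1,\al_1'$ with $\sim$-equal introducing $O$-views force $\psi_1$ to the same abstract name, and the induced choice for $\al_2$ is therefore unambiguous.

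Finally I would verify that the $\psi_2$ so defined is a legitimate member of $(t_2)^\circ$ and that $\psi_1\triangleleft\psi_2$ holds. Membership reduces to checking the forced-reuse side conditions in the $O$-cases of the translation: by Lemma~\ref{lem:oviewpsis} these fire exactly when the current translated $O$-view already occurs in the memory being built, and my name choice respects this because, again by Lemma~\ref{lem:psis2}, abstract-name reuse and $O$-view coincidence agree. The relation $\psi_1\triangleleft\psi_2$ is then immediate from the construction, since every $\alpha$ with $\psi_1(\al_1)\funto\alpha$ is by design reproduced by $\psi_2$ at a matching introduction $\al_2$, with $\oviewpsi[\psi_1]{t_1@\al_1}=\oviewpsi[\psi_2]{t_2@\al_2}$ following from Lemma~\ref{lem:some}(2)--(3). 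I expect the main obstacle to be precisely this global coherence: arranging that a \emph{single} $\psi_2$ simultaneously aligns all of $\psi_1$'s names requires that the reuse pattern of abstract names be governed purely by $O$-views, which is the content of Lemma~\ref{lem:psis2}, together with a careful separate treatment of top-level moves, where by part~1 of that lemma name identity is fixed not by the $O$-view but by the move itself.
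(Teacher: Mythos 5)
Your reduction of the statement to finding a single $\psi_2\in(t_2)^\circ$ with $\psi_1\triangleleft\psi_2$ is exactly the paper's opening remark, and your idea of building $\psi_2$ in one pass over $t_2$ (rather than the paper's induction on $|t_1|$) is a legitimate alternative strategy. However, as written the verification contains a circularity that is a genuine gap. To conclude $\psi_1\triangleleft\psi_2$ you must exhibit, for each $\al_1$ with $\psi_1(\al_1)\funto\alpha$, a matching $\al_2$ with the \emph{exact} equality $\oviewpsi[\psi_1]{t_1@\al_1}=\oviewpsi[\psi_2]{t_2@\al_2}$; you obtain this from Lemma~\ref{lem:some}(3), but that lemma has $\psi_1\triangleleft\psi_2$ as a hypothesis --- it is precisely the statement you are trying to establish. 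Lemma~\ref{lem:some}(2) applies without that hypothesis, but it yields only $\oviewpsi[\psi_1]{t_1@\al_1}\sim\oviewpsi[\psi_2]{t_2@\al_2}$, and upgrading $\sim$ to $=$ requires knowing that your construction has aligned, name by name, all abstract names occurring in those two translated views --- a coherence property of all shorter prefixes. Breaking this circle forces an induction in which $\triangleleft$ is established incrementally, which is exactly what the paper's proof does: induction on $|t_1|$, using the inductive hypothesis $\psi_1'\triangleleft\psi_2'$ to invoke Lemma~\ref{lem:some} on the prefix, then patching the newly introduced names with the transposition $\pi'=(\vec\alpha\ \vec\alpha')$ and a freshness argument via Lemma~\ref{lem:psis2}.

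A second, related gap is the well-definedness step. Lemma~\ref{lem:psis2}(1) ties reuse of abstract names to equality of \emph{translated} $O$-views (or identity of the introducing move), not to $\sim$-equivalence of \emph{game} $O$-views, which is what your matching criterion uses; part~(3) gives the converse direction to the one you need. The implication you actually require --- that $\oview{t_1@\al_1}\sim\oview{t_1@\al_1'}$ forces $\psi_1$ to assign the same underlying abstract name --- is true, but it does not follow from Lemma~\ref{lem:psis2} alone: one must first bridge from game $O$-views to translated ones, e.g.\ by applying Lemma~\ref{lem:some}(3) to $t_1$ and $\pi\cdot t_1$ with the permutation-twisted reflexive instance $\psi_1\triangleleft\psi_1\circ\pi^{-1}$ (which, unlike the instance above, is not circular), or again by a prefix induction. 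There is also a smaller imprecision: when the introducing moves carry several names, your rule must match components positionally under the witnessing permutation, otherwise ``assign $\psi_1(\al_1)$'' is ambiguous. Your construction can very likely be completed, but filling these gaps essentially reinstates the inductive bookkeeping of the paper's argument.
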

\begin{proof}
  Note that $M_1\subseteq M_2$ follows from $\psi_1\triangleleft\psi_2$ and Lemmas~\ref{lem:oviewpsis} and~\ref{lem:some}, and the fact that $\OVs(t_1)\subseteq\OVs(t_2)$. We use induction on $|t_1|$ to show $\psi_1\triangleleft\psi_2$. If $t_1$ is empty, then the claim is trivial.
Otherwise, let $t_1=t_1'm$ and let $(M_1',\hat t_1',\hat s_1',\psi_1')\in(t_1')^\circ$.
\\
Suppose $m=\Eopret{D}[\vec\al]$; the case where $m=\Eopapp{\be}{D}{\vec\al}$ is treated similarly. If $M_1'$ is defined on $\hat t_1'$ then $\psi_1=\psi_1'$ and the claim holds. Otherwise, $\psi_1=\psi_1'[\vec\al\mapsto\vec\alpha^i]$ for fresh $\vec\alpha$ and some $i$. By hypothesis, there is a permutation $\pi$ and some $t_2'm\sqsubseteq \pi\cdot t_2$ such that $\oview{t_1'}=\oview{t_2'}$. By IH, there is $(M_2',\hat t_2',\hat s_2',\psi_2')\in(\pi\cdot t_2)^\circ$
with $\psi_1'\triangleleft\psi_2'$. Hence, by Lemma~\ref{lem:some}, $\oviewpsi[\psi_1']{t_1'}=\oviewpsi[\psi_2']{t_2'}$.
Suppose $\psi_2'(\vec\al)\funto\vec\alpha'$,
let $\pi'=(\vec\alpha\ \vec\alpha')$ and set $\psi_2=\pi'\cdot\pi^{-1}\cdot\psi_2'$.
Note that $\vec\alpha$ are fresh for $\psi_1'$. Suppose there is some $\alpha_j'\in\vec\alpha'$ that is not fresh for $\psi_1'$, e.g.\ $\psi_1'(\al_1')=\alpha_j'$ for some $\al_1'$. We then must have some $\al_2'$ such that $\psi_2'(\al_2')\funto\alpha_j'$ and
$\oviewpsi[\psi_1']{t_1'@\al_1'}=\oviewpsi[\psi_2']{(\pi\cdot t_2)@\al_2'}$. 
By Lemma~\ref{lem:psis2} we then obtain $\oviewpsi[\psi_2']{(\pi\cdot t_2)@\al_2'}=\oviewpsi[\psi_2']{(\pi\cdot t_2)@\al_j}=\oviewpsi[\psi_2']{t_2'}$. Recall that $\oviewpsi[\psi_1']{t_1'}=\oviewpsi[\psi_2']{t_2'}$, hence $\oviewpsi[\psi_1']{t_1'}=\oviewpsi[\psi_1']{t_1'@\al_1'}$.
But then, by Lemma~\ref{lem:psis2} again, we have that $\alpha_j'=\alpha_j$, a contradiction to $\alpha_j$ being fresh. Hence, $\vec\alpha'$ are fresh for $\psi_1'$, and thus $\psi_1'=\pi'\cdot\psi_1'$, which implies $\psi_1'\triangleleft\pi'\cdot\psi_2$ and therefore
$\psi_1\triangleleft\pi'\cdot\psi_2$. Since $\pi$ only permutes names from the extended LTS, we can also deduce that $\psi_1\triangleleft\pi^{-1}\cdot\pi'\cdot\psi_2=\psi_2$. 
Finally, from $(M_2',\hat t_2',\hat s_2',\psi_2')\in(\pi\cdot t_2)^\circ$
we obtain that $(M_2',\hat t_2',\hat s_2',\pi^{-1}\cdot\psi_2')\in(t_2)^\circ$ and thus
$(\pi'\cdot M_2',\pi'\cdot \hat t_2',\pi'\cdot \hat s_2',\psi_2)\in(t_2)^\circ$, as required.
\\
If $m$ is a $P$-move, then the claim follows directly from the IH (on $t_1'$). \myqed
\end{proof}

\begin{lemma}\label{lem:MsOVs}
Given $(M_i,\hat t_i,\hat s_i,\psi_i)\in(t_i)^\circ$ (for $i=1,2$), if $M_1\subseteq M_2$ and $\hat s_1=\hat s_2$ then $\OVs{(t_1)}\subseteq\OVs{(t_2)}$.
\end{lemma}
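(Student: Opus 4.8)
The plan is to unfold the definition $\OVs(t_1)=\{\pi\cdot\oview{t_1'}\mid t_1'\sqsubseteq t_1,\ \pi\in\Perm\}$ and, since $\OVs(t_2)$ is closed under $\Perm$, reduce the goal to: for every prefix $t_1'\sqsubseteq t_1$ there are a prefix $t_2'\sqsubseteq t_2$ and a permutation $\rho$ with $\oview{t_1'}=\rho\cdot\oview{t_2'}$. First I would invoke Lemma~\ref{lem:oviewpsis} to rewrite the two hypotheses in usable form: $M_i=\{\oviewpsi[\psi_i]{t'}\mid t'\sqsubseteq t_i\}$ and $\hat s_i=\topview{t_i}$, so that $M_1\subseteq M_2$ records containment of the translated \emph{inner} $O$-views, while $\hat s_1=\hat s_2$ records equality of the translated \emph{top-level} views. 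The conceptual backbone is the decomposition of a full game $O$-view into these two recorded pieces: for a top-linear prefix $t_1'$, $\oview{t_1'}$ is the top-level chain up to the last open top-level $O$-call, followed by the inner segment $\oviewpsi[\circ]{t_1'}$ that begins at the last open $P$-call; the former is governed by $\topview{\cdot}$ and the latter is exactly the segment translated into the $M$-trace.

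The matching then proceeds by induction on $|t_1'|$, maintaining a matching prefix $t_2'\sqsubseteq t_2$ with $\oview{t_1'}\sim\oview{t_2'}$. For the inner part I would use that $\oviewpsi[\psi_1]{t_1'}\in M_1\subseteq M_2$, so some $t_2'\sqsubseteq t_2$ satisfies $\oviewpsi[\psi_2]{t_2'}=\oviewpsi[\psi_1]{t_1'}$; running the argument of Lemma~\ref{lem:psis} across the two plays (rather than within one) then recovers $\oviewpsi[\circ]{t_1'}\sim\oviewpsi[\circ]{t_2'}$, with freshness of the names introduced along the way absorbed by the strong support Lemma~\ref{lem:SSL}. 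For the top-level part I would use $\hat s_1=\hat s_2$: whenever $t_1'$ is extended by a top-level move (an initial $P$-return, a top-level $O$-call, or a $P$-return to a top-level $O$-move), equality of $\topview{t_1}$ and $\topview{t_2}$ pins down the corresponding top-level move of $t_2$, again up to renaming of the freshly introduced names. Gluing the top-level chain and the inner segment at their common boundary (the last open top-level $O$-call) and appealing once more to Lemma~\ref{lem:SSL} yields $\oview{t_1'}\sim\oview{t_2'}$, completing the induction and hence the inclusion.

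The main obstacle is ensuring that the two matching procedures are \emph{consistent}, i.e.\ that the top-level prefix located via $\hat s_1=\hat s_2$ and the inner segment located via $M_1\subseteq M_2$ assemble into one genuine prefix $t_2'$ of the single fixed play $t_2$, rather than two unrelated witnesses. This requires threading determinacy of the game-LTS together with $P$- and $O$-innocence of plays (Definition~\ref{def:plays}) and the well-definedness of the translation $(t_2)^\circ$ (Lemma~\ref{lem:oviewpsis}(1)), so that once the top-level open call is fixed, the inner continuation realising the required $M$-trace is uniquely located relative to it. A secondary technical point is that $\psi_1$ and $\psi_2$ are a priori different maps, so the cross-play steps must be carried out up to permutation throughout, with Lemma~\ref{lem:psis2} available to convert equalities of translated views back into $\sim$-equalities of the underlying game $O$-views.
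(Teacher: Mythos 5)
Your overall skeleton does match the paper's proof: both arguments show, by induction on prefixes $t_1'\sqsubseteq t_1$, that $\oview{t_1'}\in\OVs(t_2)$, using $M_1\subseteq M_2$ for inner moves, $\hat s_1=\hat s_2$ for top-level ones, and Lemma~\ref{lem:SSL} to absorb freshness of names. But the step you yourself flag as the ``main obstacle''\,---\,making the witness located through $M_1\subseteq M_2$ consistent with the prefix of $t_2$ already matched\,---\,is exactly the crux, and your plan does not resolve it. Your matching runs in the wrong direction: you start from $\oviewpsi[\psi_1]{t_1'}\in M_1\subseteq M_2$, obtain \emph{some} prefix of $t_2$ with that translated view, and then hope to re-anchor it against the top-level matching. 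The paper never needs to re-anchor: it keeps the prefix $t_2''$ produced by the induction hypothesis (so $\oview{t_1''}=\oview{t_2''}$ up to permutation) and applies Lemma~\ref{lem:some}(4)\,---\,which you never invoke\,---\,to convert this equality of \emph{game} $O$-views into an on-the-nose equality of \emph{translated} $O$-views, $\oviewpsi[\psi_1]{t_1''}=\oviewpsi[\pi\cdot\psi_2]{t_2''}$; only then do legality of $M_2$ and well-definedness of $(t_2)^\circ$ force the next $O$-move of $t_2$ \emph{after $t_2''$ itself} to be the required one. The direction you do cite (Lemma~\ref{lem:psis}, and its cross-play form Lemma~\ref{lem:some}(2): equal translated views give $\sim$-related game views) is not what carries the consistency; the converse implication is, and it is genuinely asymmetric: it needs $\psi_1\triangleleft\psi_2$, i.e.\ Lemma~\ref{lem:some}(1), which is precisely where $M_1\subseteq M_2$ enters, and it requires its own induction. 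Gesturing at ``determinacy, innocence and well-definedness'' does not substitute for this lemma.

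A second, related gap: your two-piece decomposition (top-level chain glued to the inner segment $\oviewpsi[\circ]{t_1'}$) understates the structure of $\oview{t_1'}$, which in general stacks one segment per \emph{open} $P$-call in the view, not just the last one; and gluing at ``the last open top-level $O$-call'' does not handle $P$-calls at all. When $t_1'$ ends in a call $\Epropapp{\al}{D}{\vec\be}$, the paper must jump to the justifier prefix $t_1'@\al$, apply the induction hypothesis there, use $\lpropapp{\alpha}{D}\in M_1\subseteq M_2$ to find a $P$-call somewhere in $\pi\cdot t_2$ carrying the same abstract name $\alpha$, and then invoke Lemma~\ref{lem:psis2}(3) to conclude that the $O$-views at the two justifiers coincide\,---\,noting that the witness prefix of $t_2$ produced in this case is in general \emph{not} an extension of the previously matched one. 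This justifier-jumping case is absent from your plan, and it cannot be recovered from your two matching procedures without the name bookkeeping ($\psi_1\triangleleft\psi_2$) described above.
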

\begin{proof}
  We show that, for any $t_1'\sqsubseteq t_1$, we have $\OVs(t_1')\subseteq\OVs(t_2)$, by induction on $|t_1'|$.
  Base case is clear.
\\
  Now let $t_1'=t_1''m_1$ with $m_1$ an $O$-move so, by IH, we have $\OVs(t_1'')\subseteq\OVs(t_2)$. We do case analysis on $m_1$.
Suppose $m_1=\Eopapp{\be_1}{D}{\vec\al_1}$ and $\psi_1(\vec\al_1)\funto\vec\alpha$, and assume $t_1''$ is not complete; the case of 
$m_1=\Eopret{D}[\vec\al_1]$ is dealt with similarly. Then,
by definition, 
$\nextmove{M_1}{\oviewpsi[\psi_1]{t_1''}}=\lopapp{j}{D[\vec\alpha]}$ and, since $M_1\subseteq M_2$,
$\nextmove{M_2}{\oviewpsi[\psi_1]{t_1''}}=\lopapp{j}{D[\vec\alpha]}$. By IH we have $\oview{t_1''}\in\OVs(t_2)$, so let $t_2''\sqsubseteq\pi\cdot t_2$ (for some $\pi$) such that $\oview{t_1''}=\oview{t_2''}$.
By Lemma~\ref{lem:some}, $\oviewpsi[\psi_1]{t_1''}=\oviewpsi[\pi\cdot\psi_2]{t_2''}$. By hypothesis, 
$\nextmove{M_2}{\oviewpsi[\pi\cdot\psi_2]{t_2''}}=\lopapp{j}{D[\vec\alpha]}$, so there is 
$m_2$ such that $t_2''m_2\sqsubseteq\pi\cdot t_2$ and $\oviewpsi[\pi\cdot\psi_2]{t_2''m_2}=\oviewpsi[\psi_1]{t_1''}\lopapp{j}{D[\vec\alpha]}$, hence $m_2=\Eopapp{\be_2}{D}{\vec\al_2}$.
By definition, $\be_i$ is the $j$-th $P$-name in $\oview{t_i''}$, thus $\be_1=\be_2$. Using also Lemma~\ref{lem:SSL}, we have $\oview{t_1''}m_1\sim\oview{t_2''}m_2$, as required.
Finally, suppose $t_1''$ 
is complete and pick $t_2''\sqsubseteq\pi\cdot t_2$ (for some $\pi$) such that $\oview{t_1''}=\oview{t_2''}$.
By $\hat s_1=\hat s_2$ and Lemma~\ref{lem:oviewpsis}, and since  
$(M_2,\hat t_2,\hat s_2,\pi\cdot\psi_2)\in(\pi\cdot t_2)^\circ$, there is $m_2$ such that $t_2''m_2\sqsubseteq \pi\cdot t_2$ and $m_2=\Eopapp{\be_2}{D}{\vec\al_2}$, and we can conclude as above.
\\
Suppose $t_1$ ends in $m_1=\Epropapp{\al}{D}{\vec\be}$ and let $t_1''=t_1'@\al\sqsubseteq t_1'$. By IH, there is $t_2''\sqsubseteq\pi\cdot t_2$ (for some $\pi$) such that $\oview{t_1''}=\oview{t_2''}$, so in particular $t_2''$ ends in $o$. By Lemma~\ref{lem:some}, $\oviewpsi[\psi_1]{t_1''}=\oviewpsi[\pi\cdot\psi_2]{t_2''}$, hence there is some $\alpha$ such that $\psi_1(\al_1)\funto\alpha$ and
$(\pi\cdot\psi_2)(\al_2)\funto\alpha$.
Since $M_1\subseteq M_2$, we have $\lpropapp{\alpha}{D}\in M_2$ and hence there must be $s_2'm_2\sqsubseteq\pi\cdot t_2$ with $m_2=\Epropapp{\al'}{D}{\vec\be'}$ such that $(\pi\cdot\psi_2)(\al')\funto\alpha$.
Hence, by Lemma~\ref{lem:psis2}, $\oview{t_1''}=\oview{t_2''}=\oview{s_2'@\al'}$.
Using Lemma~\ref{lem:SSL}, we obtain $\oview{t_1''}m_1\sim\oview{s_2'@\al'}m_2$, as required.
If $t_1$ ends in some $\Epropret{D}{\vec\be}$ then we work similarly, using also the fact that $\hat s_1=\hat s_2$ in case $m_1$ is top-level. \myqed
\end{proof}

\begin{definition}
   We define a translation of configurations from the game-LTS to the plain LTS as follows:
  \begin{align*}
    (\oconf{\A}{\conc}{K}{t}{V}{\vec\be})^\circ &=
                                                 \{\oconf{ A}{M}{\hat K}{\hat t}{\hat V}{\vec v} \mid \exists \psi.\ ( A,\hat M,\hat t,\hat s,\psi)\in(\A,\conc,t)^\circ\}\\
    (\pconf{\A}{\conc}{K}{t}{e}{V}[\vec\al])^\circ &=
\{\pconf{ A}{M}{\hat K}{\hat t}{\hat e}{\hat V}
\mid \exists \psi.\ ( A,\hat M,\hat t,\hat s,\psi)\in(\A,\conc,t)^\circ\} 
    \end{align*}
where       $(\hat e,\hat K,\hat V,\vec v)=\psi(e, K,\conc(V),\conc(\vec\be))$.
\end{definition}

We can now show the following result. Below we write $x$ for $\tau$ or a move $m$, and $\chi$ for $\tau$ or a move $\eta$.
  \begin{lemma}\label{lem:twoLTS}
    Let $\C$ be a reachable configuration of the game-LTS and let $C\in(\C)^\circ$. Then:
    \begin{itemize}
      \item if $\C\xrightarrow{x}\C'$ then $\exists C'\in(\C')^\circ,\chi.\ C\xrightarrow{\chi}C'$;
      \item if $C\xrightarrow{\chi}C'$ then $\exists \C',x.\ \C\xrightarrow{x}\C'\land C'\in(\C')^\circ$;
      \end{itemize}
      and either $\chi=x=\tau$, or $\chi,x$ are of the same O/P and call/return shape, with their components related as in $C,\C$ and $C',\C'$. 
  \end{lemma}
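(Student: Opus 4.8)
The plan is to prove both implications by case analysis on the transition rule fired, exploiting the fact that the inductive definition of $(\A,\conc,t)^\circ$ was engineered to mirror the plain-LTS rules move-by-move. The governing correspondence between the two kinds of labels is that every game-LTS move which is \emph{not} top-level\,---\,namely a proponent call $\Epropapp{\al}{D}{\vec\be}$ (\textsc{PropApp}), an opponent return $\Eopret{D}[\vec\al]$ (\textsc{OpRet}), an inner proponent return (\textsc{PropRet} with $K\neq\emptyK$), or an inner opponent call (\textsc{OpApp} with $K\neq\emptyK$)\,---\,is simulated by a plain $\tau$-transition (its translated move being accumulated in the $M$-component), whereas a top-level proponent return $\Epropret{D}[\vec\be]$ (\textsc{PropRetBarb}) corresponds to an observable $\lpropret{D}$ label and a top-level opponent call $\Eopapp{\be_i}{D}{\vec\al}$ (\textsc{OpAppBarb}) to an observable $\lopapp{i}{D[\vec\alpha]}$ label. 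Throughout, the map $\psi$ records which plain abstract name $\alpha^{j}$ each game $O$-name $\al$ is translated to, and the translated expression, stack and opponent knowledge are obtained by applying $\psi\circ\conc$.

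For the forward direction I would proceed through the rules. The \textsc{PropTau} case is immediate, since reduction commutes with the renaming induced by $\psi$, so the plain \textsc{PropTau} rule fires with $x=\chi=\tau$. For \textsc{PropApp}, \textsc{OpRet} and inner \textsc{OpApp} I read off from the relevant inductive clause of $(\A,\conc,t)^\circ$ the new $A$-, $M$-, $\hat t$- and $\psi$-components and check that they coincide with the target components prescribed by the matching plain rule (\textsc{PropCall}, \textsc{OpRet}, \textsc{OpCall}); the side-conditions on $\ulpatt$ and the transfer of knowledge $\vec v=\psi(\conc(\vec\be))$ into $A\uplus\vec\alpha^{j,\vec v}$ go through directly. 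For the two barbed cases I likewise match the completeness/stack conditions ($t'$ complete, resp.\ $K=\emptyK$) that separate top-level from inner moves, obtaining $\chi=\lpropret{D}$ and $\chi=\lopapp{i}{D[\vec\alpha]}$ respectively. The backward direction is the symmetric case analysis on the plain rules, reading the same clauses in reverse and choosing, via the existential over $(\C')^\circ$, the name permutation that aligns the freshly generated names on the two sides (appealing to Lemma~\ref{lem:extLTS2}(2) and Lemma~\ref{lem:SSL}).

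The crux of the argument is the treatment of opponent moves, where the game-LTS innocence constraint $\nextmove{O}{t}$ must be shown to be in exact lockstep with the plain-LTS memory constraint $\nextmove{M}{\hat t}$. Here Lemma~\ref{lem:oviewpsis} does the heavy lifting: it identifies $M$ with $\{\oviewpsi{t'}\mid t'\sqsubseteq t\}$ and $\hat t$ with the translated $O$-view $\oviewpsi{t}$, so that the plain-LTS condition ``$\nextmove{M}{\hat t}\subseteq\{\lopapp{i}{D[\vec\alpha]}\}$'' holds precisely when the game-LTS $O$-view $\oview{t}$ has an earlier $\sim$-variant whose continuation is $\Eopapp{\be_i}{D}{\vec\al}$. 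Matching the \emph{indices}\,---\,the relative position $i$ of $\be_i$ in the opponent knowledge against the $i$-th $P$-name in $\oviewpsi[\circ]{t}$\,---\,and ensuring the forced $\vec\alpha$ coincide uses Lemma~\ref{lem:psis} and Lemma~\ref{lem:psis2}. I expect this verification, that the ``must/cannot/may'' behaviour imposed by $M$ on the plain side agrees with the innocence behaviour of $\nextmove{O}{}$ on the game side up to the chosen permutation of fresh names, to be the main obstacle; once the view correspondence is established, the remaining rule-matching is largely bookkeeping.
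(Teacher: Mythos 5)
Your proof skeleton\,---\,rule-by-rule case analysis in both directions, with Lemma~\ref{lem:oviewpsis} supplying the lockstep between the innocence constraint $\nextmove{O}{t}$ and the memory constraint $\nextmove{M}{\hat t}$, and Lemmas~\ref{lem:psis}, \ref{lem:psis2}, \ref{lem:SSL} and \ref{lem:extLTS2}(2) aligning indices and fresh names\,---\,is the right machinery, and it is what the paper's development points to (the paper states this lemma without an explicit proof). However, your ``governing correspondence'' of labels is wrong, and with it the lemma as stated cannot be proved. You translate every non-top-level game move (\textsc{PropApp}, \textsc{OpRet}, inner \textsc{PropRet}, inner \textsc{OpApp}) into a plain-LTS $\tau$-transition. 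That is the labelling of the LTS of \cref{fig:lts} as used in the main text, but it is not the ``plain LTS'' of this lemma: in the paragraph just before Lemma~\ref{lem:lts3.1} the paper redefines it so that the rules \textsc{PropRet, PropApp, OpRet, OpApp} are labelled with the move that ends up stored in the trace, and \emph{only} \textsc{PropTau} transitions carry $\tau$. Under your correspondence, an inner game move such as $x=\Epropapp{\al}{D}{\vec\be}$ would be matched by $\chi=\tau$, which satisfies neither disjunct of the lemma's conclusion (``either $\chi=x=\tau$, or $\chi,x$ are of the same O/P and call/return shape''), so the statement you would establish is strictly weaker than the lemma.

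This final clause is not decorative: the proof of the subsequent correspondence Proposition invokes ``the last claim of Lemma~\ref{lem:twoLTS}'' together with Lemma~\ref{lem:OVsMs} precisely to align the two runs move-for-move and conclude that the translated traces and $M$-components coincide ($t_1=t_2$); if inner moves are collapsed to $\tau$ on the plain side, that alignment is unavailable. The repair is local and leaves the substance of your argument intact: state the correspondence as $\tau\leftrightarrow\tau$ (\textsc{PropTau} on both sides) and game move $\leftrightarrow$ plain move of the same O/P and call/return shape, namely $\Epropapp{\al}{D}{\vec\be}\leftrightarrow\lpropapp{\alpha}{D}$ with $\psi(\al)\funto\alpha$, \ $\Epropret{D}[\vec\be]\leftrightarrow\lpropret{D}$, \ $\Eopapp{\be_i}{D}{\vec\al}\leftrightarrow\lopapp{i}{D[\vec\alpha]}$, \ $\Eopret{D}[\vec\al]\leftrightarrow\lopret{D[\vec\alpha]}$; the top-level/inner distinction then shows up only in which rule (barbed or not) fires and in whether the translation extends $\hat s$ or $M$. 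With that correction, your case analysis and your use of Lemma~\ref{lem:oviewpsis} for the $\nextmove{O}{}$/$\nextmove{M}{}$ lockstep go through as you describe.
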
 

Recall that we  write $\C_e$ for the initial configuration for term $\vdash e:T$ in the game-LTS, and 
$C_e$ for the one in the plain LTS.

  \begin{proposition}
    Given $\vdash e_1,e_2:T$, $\OVTLs(\C_{e_1})=\OVTLs(\C_{e_2})$ iff $\sem{e_1}=\sem{e_2}$.
  \end{proposition}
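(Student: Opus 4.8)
The plan is to transport the equality through the syntactic translation $(\cdot)^\circ$ that \cref{lem:twoLTS} sets up between the game-LTS and the plain LTS, matching the game-side observable $\OVs(t)$ against the plain-side pair $(\hat s,M)$ recorded by $\sem{\cdot}$. First I would reduce $\OVTLs$ to the top-linear game-LTS of \cref{fig:lts3.1} via \cref{lem:lts3.1}, so that $\OVTLs(\C_e)=\{\OVs(t)\mid t\in\CP'(\C_e)\text{ complete and top-linear}\}$. Then I would lift the single-step correspondence of \cref{lem:twoLTS} to whole runs by induction on the number of transitions: starting from $C_e\in(\C_e)^\circ$, every plain-LTS run $C_e\xtoo{\hat s}\oconf{A}{M}{\emptyK}{\emptytr}{V}{\vec v}$ reaching a final configuration corresponds, up to renaming of fresh names, to a complete top-linear play $t\in\CP'(\C_e)$ together with a tuple $(A,M,\emptytr,\hat s,\psi)\in(t)^\circ$, and conversely. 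Combining this with \cref{lem:oviewpsis} — which gives $\hat s=\topview{t}$ and $M=\{\oviewpsi{t'}\mid t'\sqsubseteq t\}$ — yields the characterisation
\[
\sem{e}=\{(\hat s,M)\mid \exists\,t\in\CP'(\C_e)\text{ complete, top-linear},\ \exists A,\psi.\ (A,M,\emptytr,\hat s,\psi)\in(t)^\circ\}.
\]

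Second, I would set up the dictionary between the two observables for complete top-linear plays $t_1,t_2$. For the easy direction, if two such plays satisfy $\hat s_1=\hat s_2$ and $M_1=M_2$ then \cref{lem:MsOVs} (applied both ways) gives $\OVs(t_1)=\OVs(t_2)$. For the converse, from $\OVs(t_1)=\OVs(t_2)$ I would use \cref{lem:OVsMs} in both directions to obtain $\psi_1\triangleleft\psi_2$, $\psi_2\triangleleft\psi_1$ and $M_1\subseteq M_2\subseteq M_1$, hence $M_1=M_2$; and since for a complete top-linear play $\oview{t}$ is exactly the sequence of top-level moves and $\hat s=\topview{t}$ is their $\psi$-translation, the alignment of the $\psi$'s (\cref{lem:some} together with \cref{lem:psis2}, which ties $\psi$-names to $O$-view equality) forces $\hat s_1=\hat s_2$. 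Thus, for complete top-linear plays, $\OVs(t_1)=\OVs(t_2)$ holds iff the associated pairs $(\hat s_1,M_1)$ and $(\hat s_2,M_2)$ coincide.

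Finally I would combine the two. For $\sem{e_1}=\sem{e_2}\Rightarrow\OVTLs(\C_{e_1})=\OVTLs(\C_{e_2})$: given $\OVs(t_1)\in\OVTLs(\C_{e_1})$, pick $(A_1,M_1,\emptytr,\hat s_1,\psi_1)\in(t_1)^\circ$; by the characterisation $(\hat s_1,M_1)\in\sem{e_1}=\sem{e_2}$, so there is a complete top-linear $t_2\in\CP'(\C_{e_2})$ realising the same $(\hat s_2,M_2)=(\hat s_1,M_1)$, whence \cref{lem:MsOVs} gives $\OVs(t_2)=\OVs(t_1)\in\OVTLs(\C_{e_2})$; the reverse inclusion is symmetric. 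For the other implication I would run the dual argument: given $(\hat s_1,M_1)\in\sem{e_1}$, produce $t_1$ with $(A_1,M_1,\emptytr,\hat s_1,\psi_1)\in(t_1)^\circ$, use $\OVs(t_1)\in\OVTLs(\C_{e_1})=\OVTLs(\C_{e_2})$ to get $t_2\in\CP'(\C_{e_2})$ with $\OVs(t_2)=\OVs(t_1)$, and then invoke the second-step dictionary to conclude $(\hat s_2,M_2)=(\hat s_1,M_1)\in\sem{e_2}$.

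I expect the main obstacle to be the name/permutation bookkeeping in the second step — in particular, deriving $\hat s_1=\hat s_2$ from $\OVs(t_1)=\OVs(t_2)$, since $\hat s$ is a separate component of $\sem{\cdot}$ (not recoverable from $M$ alone) and its equality rests on aligning the translation maps $\psi_1,\psi_2$ through $\triangleleft$ and the name-tracking results \cref{lem:psis2,lem:some}. Care is also needed to make the run-level lifting of \cref{lem:twoLTS} genuinely bijective up to renaming of fresh abstract names, so that the extracted pairs $(\hat s,M)$ are exactly those in $\sem{e}$ and not merely representatives modulo an uncontrolled renaming.
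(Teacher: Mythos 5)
Your proposal is correct and takes essentially the same route as the paper's proof: both directions rest on lifting \cref{lem:twoLTS} from single steps to whole runs, and then transferring between $\OVs$-equality and the $(\hat s,M)$ observables via \cref{lem:OVsMs} (with the $\triangleleft$-alignment of the $\psi$'s) and \cref{lem:MsOVs}, modulo the same WLOG permutation alignment of $O$-views that the paper invokes. Your reorganisation into an explicit characterisation of $\sem{e}$ plus a play-level dictionary is only a presentational difference, not a different argument.
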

  \begin{proof}
    Suppose that $\OVTLs(\C_{e_1})=\OVTLs(\C_{e_2})$, and let $(t_1,M_1)\in\sem{e_1}$. We show that $(t_1,M_1)\in\sem{e_2}$. Let $C_{e_1}\xtoo{t_1}C'_{e_1}$, with $C'_{e_1}=\oconf{A}{M_1}{\emptyseq}{\hat t_1}{V}{\vec v}$. By Lemma~\ref{lem:twoLTS}, $\C_{e_1}\xtoo{s_1}\C_{e_1}'$ with $(\C_{e_1}')^\circ\ni C'_{e_1}$. By hypothesis, $\C_{e_2}\xtoo{s_2}\C_{e_2}'$ with $\OVs(s_1)=\OVs(s_2)$, and $s_1,s_2$ both top-linear.
WLOG, assume that $\oview{s_1}=\oview{s_2}$.
By Lemma~\ref{lem:twoLTS} again, $C_{e_2}\xtoo{t_2}C'_{e_2}$ with $C'_{e_2}=(\C'_{e_2})^\circ$.
By Lemma~\ref{lem:OVsMs},
the last claim of Lemma~\ref{lem:twoLTS} and the fact that $\oview{s_1}=\oview{s_2}$, we can pick $C'_{e_2}$ in such a way that its $M$-component is $M_1$ and $t_1=t_2$.
\\
Conversely, suppose $\sem{e_1}=\sem{e_2}$ and let $t_1$ be a complete (top-linear) play of $\C_{e_1}$. We need to show $\OVs{(t_1)}\in\OVTLs(\C_{e_2})$.
Considering the transition sequence $\C_{e_1}\xtoo{t_1}\C_1$, by Lemma~\ref{lem:twoLTS} there is
$C_{e_1}\xtoo{\hat t}C_1$ with $C_1\in(\C_1)^\circ$ and final $M$-component $M$. By hypothesis then, $C_{e_2}\xtoo{\hat t}C_2$ with final $M$-component $M$. Using Lemma~\ref{lem:twoLTS} again we obtain $\C_{e_2}\xtoo{t_2} \C_2$ with $C_2\in(\C_2)^\circ$ and such that $\hat t$ is the corresponding translation of both $t_1$ and $t_2$ (so $t_2$ also top-linear). Then, by Lemma~\ref{lem:MsOVs}, we have $\OVs(t_1)=\OVs(t_2)$. \myqed
\end{proof}



  \end{document}